\documentclass[copyright,creativecommons]{eptcs}
\pdfoutput=1

\usepackage{amsmath} 
\usepackage{amsthm} 
\usepackage{amssymb}	
\usepackage{bold-extra} 

\newenvironment{menum}
{\begin{enumerate}
  \setlength{\itemsep}{1pt}
  \setlength{\parskip}{0pt}
  \setlength{\parsep}{0pt}}
{\end{enumerate}}

\newenvironment{mitem}
{\begin{itemize}
  \setlength{\itemsep}{1pt}
  \setlength{\parskip}{0pt}
  \setlength{\parsep}{0pt}}
{\end{itemize}}

\newtheorem{thm}{Theorem}
\newtheorem*{thm*}{Theorem}
\newtheorem{prop}[thm]{Proposition}
\newtheorem*{prop*}{Proposition}
\newtheorem{lem}[thm]{Lemma}

\theoremstyle{definition}
\newtheorem{dfn}{Definition}
\newtheorem*{ex}{Example}

\newcommand{\ZX}{\textsc{zx}}

\newcommand{\abs}[1]{\left| #1 \right|}

\newcommand{\ket}[1]{\left| #1 \right>} 
\newcommand{\bra}[1]{\left< #1 \right|} 

\renewcommand{\t}[1]{\ensuremath{^{\otimes #1}}}

\usepackage[svgnames]{xcolor} 

\usepackage{tikz}
\usetikzlibrary{shapes.geometric} 
\usetikzlibrary{shapes.misc} 
\usetikzlibrary{shapes.symbols} 


\pgfdeclarelayer{edgelayer}
\pgfdeclarelayer{nodelayer}
\pgfsetlayers{edgelayer,nodelayer,main}

\tikzstyle{none}=[inner sep=0pt]
\tikzstyle{rn}=[circle,fill=Red,draw=Black,line width=0.8 pt,minimum size=5pt,inner sep=0pt]
\tikzstyle{gn}=[circle,fill=Lime,draw=Black,line width=0.8 pt,minimum size=5pt,inner sep=0pt]
\tikzstyle{Hadamard}=[rectangle,fill=Yellow,draw=Black,minimum size=8pt,inner sep=0pt,label={center:$\scriptstyle\mathrm{H}$}]
\tikzstyle{gphase}=[rounded rectangle,rounded rectangle arc length=90,fill=Lime!20,inner sep=2pt]
\tikzstyle{rphase}=[rounded rectangle,rounded rectangle arc length=90,fill=Red!20,inner sep=2pt]
\tikzstyle{normalrect}=[rectangle,fill=white,draw=black,minimum height=10pt,minimum width=12pt,inner sep=0pt]

\tikzstyle{bn}=[circle,fill=black,draw=black,inner sep=1pt]
\tikzstyle{bigcloud}=[cloud,fill=white,draw=black]


\newcommand{\effect}[1]{\begin{tikzpicture}[baseline=.1cm]
	\begin{pgfonlayer}{nodelayer}
		\node [style=none] (0) at (0, 0) {};
		\node [style={#1}] (1) at (0, 0.25) {};
	\end{pgfonlayer}
	\begin{pgfonlayer}{edgelayer}
		\draw [] (0.center) to (1);
	\end{pgfonlayer}
\end{tikzpicture}}

\newcommand{\state}[1]{\begin{tikzpicture}[baseline=-.1cm]
	\begin{pgfonlayer}{nodelayer}
		\node [style={#1}] (0) at (0, 0) {};
		\node [style=none] (1) at (0, 0.25) {};
	\end{pgfonlayer}
	\begin{pgfonlayer}{edgelayer}
		\draw (0) to (1.center);
	\end{pgfonlayer}
\end{tikzpicture}}

\newcommand{\phase}[1]{\begin{tikzpicture}[baseline=-.1cm]
	\begin{pgfonlayer}{nodelayer}
		\node [style=none] (0) at (0, 0.2) {};
		\node [style={#1}] (1) at (0, -0) {};
		\node [style=none] (2) at (0, -0.2) {};
	\end{pgfonlayer}
	\begin{pgfonlayer}{edgelayer}
		\draw [] (0.center) to (1);
		\draw [] (1) to (2.center);
	\end{pgfonlayer}
\end{tikzpicture}}

\newcommand{\splitnode}[0]{\begin{tikzpicture}[baseline=-0.1cm]
	\begin{pgfonlayer}{nodelayer}
		\node [style=gn] (0) at (0, -0) {};
		\node [style=none] (1) at (0, -0.25) {};
		\node [style=none] (2) at (-0.25, 0.25) {};
		\node [style=none] (3) at (0.25, 0.25) {};
	\end{pgfonlayer}
	\begin{pgfonlayer}{edgelayer}
		\draw [bend right=15] (2.center) to (0);
		\draw [bend left=15] (3.center) to (0);
		\draw (0) to (1.center);
	\end{pgfonlayer}
\end{tikzpicture}}

\newcommand{\joinnode}[0]{\begin{tikzpicture}[baseline=-0.1cm]
	\begin{pgfonlayer}{nodelayer}
		\node [style=gn] (0) at (0, -0) {};
		\node [style=none] (1) at (0, 0.25) {};
		\node [style=none] (2) at (0.25, -0.25) {};
		\node [style=none] (3) at (-0.25, -0.25) {};
	\end{pgfonlayer}
	\begin{pgfonlayer}{edgelayer}
		\draw [bend right=15] (2.center) to (0);
		\draw [bend left=15] (3.center) to (0);
		\draw (0) to (1.center);
	\end{pgfonlayer}
\end{tikzpicture}}

\newcommand{\Hadamard}[0]{\begin{tikzpicture}[baseline=-0.1cm]
	\begin{pgfonlayer}{nodelayer}
		\node [style=Hadamard] (0) at (0, 0) {};
		\node [style=none] (1) at (0, 0.25) {};
		\node [style=none] (2) at (0, -0.25) {};
	\end{pgfonlayer}
	\begin{pgfonlayer}{edgelayer}
		\draw (0.center) to (1);
		\draw (2.center) to (1);
	\end{pgfonlayer}
\end{tikzpicture}}


\title{The \ZX-calculus is complete for stabilizer quantum mechanics}
\author{Miriam Backens
\institute{Department of Computer Science, University of Oxford}
\email{miriam.backens@cs.ox.ac.uk}
}

\begin{document}

\maketitle

\begin{abstract}
The \ZX-calculus is a graphical calculus for reasoning about quantum systems and processes. It is known to be \emph{universal} for pure state qubit quantum mechanics, meaning any pure state, unitary operation and post-selected pure projective measurement can be expressed in the \ZX-calculus. The calculus is also \emph{sound}, i.e.\ any equality that can be derived graphically can also be derived using matrix mechanics. Here, we show that the \ZX-calculus is \emph{complete} for pure qubit stabilizer quantum mechanics, meaning any equality that can be derived using matrices can also be derived pictorially. The proof relies on bringing diagrams into a normal form based on graph states and local Clifford operations.
\end{abstract}

\section{Introduction}

The success of the quantum circuit notation shows the value of graphical languages for quantum processes. Using both dimensions of a sheet of paper allows the parallel composition of operations (say, several operations happening to different systems at the same time) to be separated from serial composition (say, different operations, possibly happening to the same system, but at different times). This makes graphical notation much easier for humans to read than the standard Dirac or matrix notations, where parallel and serial composition of operations are both represented in one dimension, namely a line of text. Yet the quantum circuit notation has one big disadvantage: There are no transformation rules for quantum circuit diagrams. The only way to simplify or compare quantum circuit diagrams is by translating them back into matrices, thereby losing the advantages of the graphical notation.

Unlike quantum circuit notation, the \ZX-calculus developed in \cite{coecke_interacting_2008,coecke_interacting_2011} is not just a graphical notation: It has built-in rewrite rules, which transform one diagram into a different diagram representing the same overall process. These rewrite rules make the \ZX-calculus into a formal system with non-trivial equalities between diagrams. In the following, we will thus distinguish between diagrams which are \emph{identical} -- i.e.\ they consist of the same elements combined in the same way -- and diagrams which are \emph{equal}, meaning one can be rewritten into the other. Two identical diagrams are necessarily equal to each other, but two equal diagrams may not be identical.

As a formal system modelling pure state qubit quantum mechanics (QM), there are several properties the \ZX-calculus must have to be useful. One of these is \emph{universality}: the question whether any pure state, unitary operator, or post-selected measurement can be represented by a \ZX-calculus diagram. The \ZX-calculus is indeed universal \cite{coecke_interacting_2011}. A second important property is \emph{soundness}: can any equality which can be derived in the \ZX-calculus also be derived using other formalisms, such as matrix mechanics? By considering the rewrite rules one-by-one, it is not too difficult to show that the \ZX-calculus is sound \cite{coecke_interacting_2011}. As a result of this, the \ZX-calculus can be used to analyse a variety of questions, e.g.\ quantum non-locality \cite{coecke_strong_2012} and the verification of measurement-based quantum computations \cite{coecke_interacting_2011,duncan_rewriting_2010,horsman_quantum_2011}.

The converse of the soundness property is \emph{completeness}: The \ZX-calculus is complete if any equality that can be derived using matrices can also be derived graphically. It has been conjectured that the \ZX-calculus is not complete for general pure state qubit QM, but in this paper we show that it is complete for qubit stabilizer quantum mechanics. Stabilizer QM is an extensively studied part of quantum theory, which can be operationally described as the fragment of pure state QM where the only allowed operations are preparations or measurements in the computational basis and unitary transformations belonging to the Clifford group. While stabilizer quantum computation is significantly less powerful than general quantum computation -- it can be efficiently simulated on classical computers and is provably less powerful than even general classical computation \cite{aaronson_improved_2004} -- stabilizer QM is nevertheless of central importance in areas such as error-correcting codes \cite{nielsen_quantum_2010} or measurement-based quantum computation \cite{raussendorf_one-way_2001}, and it is non-local.

A pure stabilizer state on $n$ qubits is a state that can be created by applying some Clifford unitary to the state $\ket{0}\t{n}$. Graph states are a special class of stabilizer states, whose entanglement structure can be described by a simple undirected graph. In the \ZX-calculus, graph states have a particularly elegant representation \cite{duncan_graph_2009}. Furthermore, any stabilizer state is equivalent to some graph state under local Clifford operations, which are tensor products of single qubit Clifford operators \cite{van_den_nest_graphical_2004}. The first part of our completeness proof is a proof that this equivalence also holds in the \ZX-calculus, i.e.\ there is a non-unique normal form for stabilizer state diagrams consisting of a graph state diagram and local Clifford operators. Based on work by Elliott et al. \cite{elliott_graphical_2008}, we then show that even though this normal form is not unique, there is a straightforward algorithm for testing equality of diagrams given in this form. In particular, this algorithm shows that two diagrams are equal if and only if they correspond to the same quantum mechanical state. By the Choi-Jamio{\l}kowski isomorphism, this result extends to diagrams which represent not states but operators. Thus, for any pair of \ZX-calculus diagrams representing the same state or operator in stabilizer QM, the equality testing algorithm can be used to construct a sequence of rewrites obeying the rules of the calculus, which shows that the diagrams are equal. But this is just the definition of completeness, proving that the \ZX-calculus for stabilizer QM is complete.

The basic definitions and properties of stabilizer quantum mechnics are given in section \ref{s:stabilizer}. In section \ref{s:ZX-calculus}, the elements and rules of the \ZX-calculus are laid out. Section \ref{s:ZX_graph_states} contains the definition of graph state diagrams and the normal form, as well as the proof that any stabilizer state diagram can be brought into normal form. The completeness proof can be found in section \ref{s:completeness}, followed by an example in section \ref{s:example} and conclusions in section \ref{s:conclusions}.
\section{Stabilizer quantum mechanics}\label{s:stabilizer}

\subsection{The Pauli group and the Clifford group}\label{s:Pauli_Clifford}

The Pauli operators
 \[
  X = \begin{pmatrix}0&1\\1&0\end{pmatrix},\quad Y = \begin{pmatrix}0&-i\\i&0\end{pmatrix},\quad\text{and}\quad Z = \begin{pmatrix}1&0\\0&-1\end{pmatrix}
 \]
have a central role in quantum mechanics because, together with the identity, they form a basis for all single qubit unitaries under linear combinations. Under multiplication, this set of operators gives rise to the following group.

\begin{dfn}
 The \emph{Pauli group} $P_1$ is the closure of the set $\{I,X,Y,Z\}$ under multiplication. It consists of the identity and Pauli matrices with multiplicative factors $\{\pm 1,\pm i\}$. This definition generalises to multiple qubits as follows: The Pauli group on $n$ qubits, $P_n$, consists of all tensor products of Pauli and identity matrices with phase factors $\{\pm 1, \pm i\}$, i.e.
 \[
  P_n = \Big\{\alpha g_1\otimes g_2\otimes\ldots\otimes g_n \Big| \alpha\in\{\pm 1,\pm i\}\text{ and } g_k\in\{I,X,Y,Z\}\text{ for } k=1,2,\ldots,n\Big\}.
 \]
 Elements of $P_n$ are often called Pauli products.
\end{dfn}

\begin{dfn}
 A unitary operator $U$ is said to \emph{stabilize} a quantum state $\ket{\psi}$ if $U\ket{\psi}=\ket{\psi}$.
\end{dfn}

The unitaries stabilizing a given quantum state can easily be seen to form a group. This group uniquely defines the state.

\begin{dfn}
 An $n$-qubit quantum state is called a \emph{stabilizer state} if it is stabilized by a subgroup of $P_n$.
\end{dfn}

Most unitary operators do not preserve stabilizer states, i.e.\ they map some stabilizer states to non-stabilizer states or conversely. Yet there are some unitary operators which map stabilizer states to stabilizer states. These operators form the Clifford group.

\begin{dfn}
 The \emph{Clifford group} on $n$ qubits, denoted $C_n$, is the group of operators which normalize the Pauli group, i.e. $C_n = \{ U | \forall g\in P_n: UgU^\dagger\in P_n\}$.
\end{dfn}

Any $n$-qubit stabilizer state can be expressed as $U\ket{0}\t{n}$ for some (non-unique) $U\in C_n$. It can furthermore be shown that the Clifford group is generated by two single qubit operators and one two-qubit operator \cite{nielsen_quantum_2010}, namely the phase operator $S = \left(\begin{smallmatrix}1&0\\0&i\end{smallmatrix}\right)$, the Hadamard operator $H = \frac{1}{\sqrt{2}}\left(\begin{smallmatrix}1&1\\1&-1\end{smallmatrix}\right)$ and the controlled-NOT operator
\[
 \Lambda X = \begin{pmatrix}1&0&0&0\\0&1&0&0\\0&0&0&1\\0&0&1&0\end{pmatrix}.
\]
Ignoring global phases, the group $C_1$ of single qubit Clifford unitaries has 24 elements. It is generated by the phase and Hadamard operators, or, alternatively, by $R_Z$ and $R_X$, where $R_Z = S$ and $R_X = HSH$.

\begin{dfn}\label{dfn:local_Clifford_group}
 The \emph{local Clifford group} on $n$ qubits, $C_1\t{n}$, consists of all $n$-fold tensor products of single qubit Clifford operators.
\end{dfn}

The Clifford group contains all unitary operators that map stabilizer states to stabilizer states. To generate all linear operators which do the same, we must also allow measurements whose results are stabilizer states. Like any $n$-qubit stabilizer state can be expressed as a Clifford unitary applied to the state $\ket{0}\t{n}$, any measurement in stabilizer quantum mechanics can be realised by applying a Clifford unitary, followed by a measurement of some number of qubits in the computational (or Z-) basis $\{\ket{0},\ket{1}\}$. Thus, stabilizer quantum mechanics encompasses the following three types of operations: preparation of qubits in the state $\ket{0}$, Clifford unitaries, and measurements in the computational basis.

\subsection{Graph states}\label{s:graph_states}

An important subset of the stabilizer states are the graph states, which consist of a number of qubits entangled together according to the structure of a mathematical graph.

\begin{dfn}
 A finite \emph{graph} is a pair $G=(V,E)$ where $V$ is a finite set of vertices and $E$ is a collection of edges, which are denoted by pairs of vertices. A graph is \emph{undirected} if its edges are unordered pairs of vertices. It is \emph{simple} if it has no self-loops and there is at most one edge connecting any two vertices.
\end{dfn}

In the following, unless stated otherwise, all graphs will be assumed to be undirected and simple. For such graphs, the collection of edges is in fact a set (as opposed to, say, a multi-set) and each edge is an unordered set of size two (rather than a tuple). For an $n$-vertex graph, we will often take $V=\{1,2,\ldots,n\}$.

\begin{dfn}
 A simple undirected graph $G$ with $n=\abs{V}$ vertices can be described by a symmetric $n$ by $n$ matrix $\theta$ with binary entries such that $\theta_{ij}=1$ if and only if there is an edge connecting vertices $i$ and $j$. This matrix is known as the \emph{adjacency matrix}.
\end{dfn}

\begin{dfn}\label{dfn:graph_state}
 Given a graph $G=(V,E)$ with $n=\abs{V}$ vertices and adjacency matrix $\theta$, the corresponding \emph{graph state} $\ket{G}$ is the $n$-qubit state whose stabilizer subgroup is generated by the operators
 \[
  X_v\otimes\bigotimes_{u\in V} Z_u^{\theta_{uv}} \quad\text{for all } v\in V.
 \]
 Here, subscripts indicate to which qubit the operator is applied.
\end{dfn}

All graph states are pure stabilizer states by definition. On the other hand, it is obvious that not all stabilizer states are graph states. Yet there exists an interesting relationship between arbitrary stabilizer states and graph states. Consider the equivalence relation on stabilizer states given by the local Clifford group.

\begin{dfn}
 Two $n$-qubit stabilizer states $\ket{\psi}$ and $\ket{\phi}$ are \emph{equivalent under local Clifford operations} if there exists $U\in C_1\t{n}$ such that $\ket{\psi}=U\ket{\phi}$.
\end{dfn}

\begin{thm}[\cite{van_den_nest_graphical_2004}]\label{thm:stabilizer_graph_state}
 Any pure stabilizer state is equivalent to some graph state under local Clifford operations, i.e.\ any $n$-qubit stabilizer state $\ket{\psi}$ can be written, not generally uniquely, as $U\ket{G}$, where $U\in C_1\t{n}$ and $\ket{G}$ is an $n$-qubit graph state.
\end{thm}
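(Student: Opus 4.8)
The plan is to argue entirely at the level of stabilizer groups and their binary ``check matrices''. Recall that $\ket\psi$ is the unique state fixed by an abelian subgroup $S\le P_n$ of order $2^n$ with $-I\notin S$; choose generators $g_1,\dots,g_n$, and, discarding phases, encode $g_v$ by the vector $(a_v\mid b_v)\in\mathbb{F}_2^n\oplus\mathbb{F}_2^n$ in which $a_v$ marks the qubits where $g_v$ acts as $X$ or $Y$ and $b_v$ those where it acts as $Z$ or $Y$. Stacking these vectors as rows gives an $n\times 2n$ matrix $[A\mid B]$; independence of the $g_v$ makes it, and its row span $W$, $n$-dimensional, and commutativity of $S$ is precisely the statement that $W$ is isotropic for the symplectic form $\omega\bigl((a\mid b),(c\mid d)\bigr)=a\cdot d+c\cdot b$. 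Two kinds of manipulation are available: replacing the generating set by products of generators, which amounts to $\mathbb{F}_2$-row operations on $[A\mid B]$ and changes neither $S$ nor $\ket\psi$; and applying a single-qubit Clifford to some qubit $j$, which transforms the $j$-th coordinate pair of every row by an element of $GL(2,\mathbb{F}_2)$ (and all six such elements arise, since $C_1$ has $24$ elements modulo global phase). The objective is the check matrix of a graph state: $A=I$ and $B=\theta$ symmetric with zero diagonal. Phases are dealt with at the end.

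The crux is to make $A$ invertible. Let $D=\{b:(0\mid b)\in W\}$ be the ``$Z$-type'' part of $W$; since $\dim W=n$ one has $\dim D=n-\operatorname{rank} A$, so it suffices to shrink $D$ to $0$. The key observation is that isotropy forces $D$ to be orthogonal to the row space of $A$ — which is exactly the set of $X$-parts occurring in $W$ — because $0=\omega\bigl((0\mid b),(a\mid b')\bigr)=a\cdot b$ for every $b\in D$ and $(a\mid b')\in W$. Consequently, if $D\neq 0$, any nonzero $b\in D$ has a coordinate $j$ with $b_j=1$ and yet $e_j\notin\operatorname{row} A$, for otherwise $b_j=b\cdot e_j=0$ on the whole support of $b$, forcing $b=0$. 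Apply a Hadamard to that qubit $j$, which swaps column $j$ of $A$ with column $j$ of $B$. Since $e_j\notin\operatorname{row} A$, no $Z$-type element $(0\mid b')$ of the new span can have $b'_j=1$ (it would pull back under the Hadamard to an element of $W$ with $X$-part $e_j$); hence the new $Z$-type space is $D\cap\{b:b_j=0\}$, which has dimension $\dim D-1$ because $b_j=1$ for our chosen $b$. Iterating brings $\operatorname{rank} A$ up to $n$ in at most $n$ steps, after which left-multiplying the generators by $A^{-1}$ — again row operations — gives $A=I$.

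Now the clean-up. With $A=I$, the row span is $W=\{(v\mid vB):v\in\mathbb{F}_2^n\}$, and isotropy reads $v(B+B^{\mathsf T})w^{\mathsf T}=0$ for all $v,w$, i.e.\ $B=B^{\mathsf T}$. Whenever $B_{jj}=1$, the $j$-th generator carries a $Y$ on qubit $j$; applying the phase gate $S$ there adds column $j$ of $A=I$ to column $j$ of $B$, which flips $B_{jj}$ and changes nothing else — in particular it preserves $A=I$ and the symmetry of $B$. After doing this for every such $j$ we have $B=\theta$ with $\theta=\theta^{\mathsf T}$ and zero diagonal, so the transformed stabilizer is generated by operators $\epsilon_v\bigl(X_v\otimes\bigotimes_{u}Z_u^{\theta_{uv}}\bigr)$ with $\epsilon_v\in\{\pm1\}$ — the graph-state generators of Definition~\ref{dfn:graph_state} up to sign. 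Finally, for each $v$ with $\epsilon_v=-1$, apply $Z_v$: it anticommutes with the $v$-th generator and commutes with all the others (each of which acts on qubit $v$ as $I$ or $Z$), flipping exactly that one sign. After these corrections the transformed state has the stabilizer group of $\ket G$ and therefore equals $\ket G$; since every gate used lies in $C_1$ and the row operations never acted on the state, we conclude $\ket\psi=U\ket G$ for the product $U\in C_1\t{n}$ of the inverses of those gates.

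I expect the invertibility step to be the only real obstacle. The one non-obvious ingredient is the observation that isotropy of $W$ makes $D$ orthogonal to the row space of $A$, which is exactly what guarantees a suitable qubit on which to apply a Hadamard; the rest of that step is arranging the bookkeeping so that each Hadamard strictly lowers $\dim D$. The clean-up and the passage between Pauli products, check matrices and the symplectic form are routine, though one has to keep track of sign and ordering conventions.
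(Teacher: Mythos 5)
Your argument is correct: the paper does not prove this theorem but only cites van den Nest et al., and your check-matrix/symplectic argument is a sound, self-contained version of exactly the standard proof from that reference (Hadamards to make $A$ invertible, row operations to reach $A=I$, phase gates to kill the diagonal of $B$, Paulis to fix signs). The one step you flag as the real content — that isotropy makes $D$ orthogonal to $\operatorname{row}A$, so a suitable qubit $j$ exists and each Hadamard drops $\dim D$ by exactly one — checks out, as does the implicit use of $-I\notin S$ to rule out phases $\pm i$ on the final generators.
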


A single stabilizer state may well be equivalent to more than one graph state under local Clifford operations. To organize these equivalence classes we require the following definition and theorem.

\begin{dfn}
 Let $G=(V,E)$ be a graph. The \emph{local complementation about the vertex $v$} is the operation that inverts the subgraph generated by the neighbourhood of $v$ (but not including $v$ itself). Formally, a local complementation about $v\in V$ sends $G$ to the graph
 \[
  G\star v = \left(V,E \triangle \big\{\{b,c\}\big|\{b,v\},\{c,v\}\in E\wedge b\neq c\big\}\right),
 \]
 where $\triangle$ denotes the symmetric set difference, i.e.\ $A\triangle B$ contains all elements that are contained either in $A$ or in $B$ but not in both.
\end{dfn}

\begin{ex}
 Consider the line graph on four vertices. Applying local complementations about vertex 3 and then vertex 2 yields the following sequence of graphs:
 \begin{center}
  \begin{tikzpicture}
	\begin{pgfonlayer}{nodelayer}
		\node [style=bn,label={left:$1$}] (0) at (-3, 0.25) {};
		\node [style=bn,label={right:$2$}] (1) at (-2.5, 0.25) {};
		\node [style=bn,label={right:$3$}] (2) at (-2.5, -0.25) {};
		\node [style=bn,label={left:$4$}] (3) at (-3, -0.25) {};
		\node [style=none] (4) at (-1.5, -0) {$\mapsto$};
		\node [style=bn,label={left:$1$}] (5) at (-0.5, 0.25) {};
		\node [style=bn,label={right:$2$}] (6) at (0, 0.25) {};
		\node [style=bn,label={left:$4$}] (7) at (-0.5, -0.25) {};
		\node [style=bn,label={right:$3$}] (8) at (0, -0.25) {};
		\node [style=bn,label={left:$4$}] (9) at (2, -0.25) {};
		\node [style=none] (10) at (1, -0) {$\mapsto$};
		\node [style=bn,label={right:$3$}] (11) at (2.5, -0.25) {};
		\node [style=bn,label={right:$2$}] (12) at (2.5, 0.25) {};
		\node [style=bn,label={left:$1$}] (13) at (2, 0.25) {};
	\end{pgfonlayer}
	\begin{pgfonlayer}{edgelayer}
		\draw (0) to (1);
		\draw (1) to (2);
		\draw (2) to (3);
		\draw (5) to (6);
		\draw (6) to (8);
		\draw (7) to (8);
		\draw (7) to (6);
		\draw (13) to (12);
		\draw (12) to (11);
		\draw (9) to (12);
		\draw (13) to (11);
		\draw (13) to (9);
	\end{pgfonlayer}
  \end{tikzpicture}
 \end{center}
\end{ex}

\begin{thm}[\cite{van_den_nest_graphical_2004}]\label{thm:graph_states_LC}
 Two graph states on the same number of qubits are equivalent under local Clifford operations if and only if there is a sequence of local complementations that transforms one graph into the other.
\end{thm}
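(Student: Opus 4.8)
\section*{Proof proposal for Theorem~\ref{thm:graph_states_LC}}

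The plan is to prove the two implications separately. The ``if'' direction -- graphs related by a sequence of local complementations yield graph states equivalent under local Clifford operations -- is the short one, and I would handle it by exhibiting an explicit local Clifford unitary for a single local complementation. For a vertex $v$ with neighbourhood $N(v)$, a natural candidate is
\[
 U_v \;=\; \sqrt{-iX_v}\;\bigotimes_{u\in N(v)}\sqrt{iZ_u},
\]
which lies in $C_1\t{n}$ because, up to a global phase, $\sqrt{iZ}=S=R_Z$ and $\sqrt{-iX}$ is the adjoint of $R_X=HSH$. One then verifies $U_v\ket{G}=\ket{G\star v}$ up to global phase by conjugating each stabilizer generator $X_w\otimes\bigotimes_u Z_u^{\theta_{uw}}$ of $\ket{G}$ by $U_v$ and checking, using the single-qubit conjugation identities $\sqrt{-iX}\,Z\,\sqrt{-iX}^{\dagger}=\pm Y$, $\sqrt{iZ}\,X\,\sqrt{iZ}^{\dagger}=\pm Y$, etc., that the resulting Pauli product has $Z$-support matching the adjacency structure of $G\star v$, i.e.\ that it lies in the stabilizer group of $\ket{G\star v}$. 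Chaining finitely many $U_v$'s then covers an arbitrary sequence of local complementations, and this direction is the only one needed for the completeness argument of the present paper.

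For the converse I would pass to the binary symplectic picture. The stabilizer group of a graph state $\ket{G}$ has, in $(X\mid Z)$ form, check matrix $(I\mid\theta)$; ignoring phases (to be restored at the end), a single-qubit Clifford on qubit $v$ acts as one of the six invertible symplectic matrices over $\mathbb{F}_2$, and re-choosing generators amounts to left-multiplication by an invertible matrix $R$ over $\mathbb{F}_2$. Hence $\ket{G}$ and $\ket{G'}$ are locally Clifford-equivalent exactly when there is a block-diagonal local symplectic $Q=\bigoplus_v Q_v$ and an invertible $R$ with $R\,(I\mid\theta)\,Q^{\mathsf T}=(I\mid\theta')$. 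Writing $Q^{\mathsf T}$ in $2\times2$ blocks with diagonal $n\times n$ pieces $A,B,C,D$, this splits into $R(A+\theta C)=I$ and $R(B+\theta D)=\theta'$, so $A+\theta C$ must be invertible and $\theta'=(A+\theta C)^{-1}(B+\theta D)$, subject to $\theta'$ being symmetric with zero diagonal. What remains is the purely combinatorial claim that the set of such $\theta'$, ranging over all admissible $Q$, is precisely the orbit of $\theta$ under local complementations.

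I expect the main obstacle to be exactly this combinatorial characterization. The six single-qubit symplectics fall into ``$Z$-type'' generators, which correspond to elementary transvections on the column blocks and turn out to reproduce local complementations (or act trivially), and ``Hadamard-type'' generators, which swap the $v$-th $X$- and $Z$-columns and in general destroy the identity block on the left; consequently one cannot simply read off the orbit generator by generator. The real work is to show that whenever a product of such generators does return the check matrix to the normal form $(I\mid\theta')$ -- equivalently, whenever the invertibility condition on $A+\theta C$ is met -- the net effect on $\theta$ factors as a composition of local complementations; following van den Nest, Dehaene and De Moor, I would argue this by induction on the number of Hadamard-type factors, using that a Hadamard on $v$ can only be completed to a graph-state-preserving operation in ways already accounted for by $\star v$ together with $Z$-type moves. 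Finally I would check that the phases discarded at the outset are consistent, which is automatic for stabilizer states: any symplectic transformation of the stabilizer group lifts to some Clifford unitary, unique up to global phase.
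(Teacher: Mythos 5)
First, a point of comparison: the paper does not prove this theorem at all — it is imported verbatim from van den Nest, Dehaene and De Moor with a citation — so there is no internal argument to measure you against. Your ``if'' direction is correct and standard: the unitary $\sqrt{-iX_v}\otimes\bigotimes_{u\in N(v)}\sqrt{iZ_u}$ is exactly the operator underlying the paper's Theorem~\ref{thm:Van_den_Nest}, and checking it on the stabilizer generators is routine. One side remark is wrong, however: the ``only if'' direction is \emph{not} dispensable for this paper — it is invoked in the proof of Lemma~\ref{lem:unpaired_red_node}, where equality of two graph states is ruled out on the grounds that no sequence of local complementations can turn a vertex with neighbours into an isolated one.

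The genuine gap is in the converse. You set up the binary symplectic formulation correctly — $\theta'=(A+\theta C)^{-1}(B+\theta D)$ with $A,B,C,D$ diagonal over $\mathbb{F}_2$ and $a_vd_v+b_vc_v=1$ — but the entire content of the hard direction is the claim you defer, namely that every admissible $(A,B,C,D)$ produces a $\theta'$ in the local-complementation orbit of $\theta$. Your proposed induction ``on the number of Hadamard-type factors'' in a generator decomposition of $Q$ does not work as stated: after applying a single Hadamard-type generator the check matrix generally leaves the form $(I\mid\ast)$, so at the intermediate stages of your induction there is no graph and hence no ``net effect on $\theta$'' to speak of; and the supporting assertion that ``a Hadamard on $v$ can only be completed to a graph-state-preserving operation in ways already accounted for by $\star v$ together with $Z$-type moves'' is essentially the theorem itself, stated rather than proven. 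The known proof instead performs a descent on the whole admissible tuple: the case $C=0$ forces $A=D=I$ and then $B=0$ (since $\theta$ and $\theta'$ both have zero diagonal), giving $\theta'=\theta$; and when some $c_v\neq 0$ one composes $Q$ with the local symplectic of a complementation about $v$ (or, in the degenerate cases, about a suitable neighbour of $v$) so as to strictly reduce the support of $C$ while preserving admissibility. Until that descent, with its case analysis, is actually carried out, the converse — and with it the full theorem — remains unproven in your write-up.
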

\section{The \ZX-calculus for stabilizer theory}\label{s:ZX-calculus}

\subsection{Categorical quantum mechanics}

In 2004, Abramsky and Coecke introduced a formalism for describing quantum mechanics using category theory \cite{abramsky_categorical_2004}. This formalism gives rise to a graphical calculus, i.e.\ a graphical representation of quantum states and operations which can be manipulated according to some set of rules. Selinger shows in \cite{selinger_dagger_2007} that this graphical calculus is indeed equivalent to the equational reasoning in dagger compact closed categories, which are the category theoretical framework for quantum mechanics.

The graphical calculus is interesting not just because it can make computations more straightforward for humans to follow, but also because it allows mechanised reasoning, e.g. using a software system like \texttt{Quantomatic} \cite{quantomatic}.

There are different graphical calculi for categorical quantum mechanics; the one we are using here is that of Coecke and Duncan in \cite{coecke_interacting_2008,coecke_interacting_2011}. It is based on maps in the computational (or Z-) basis $\{\ket{0},\ket{1}\}$ and the complementary X-basis $\{\ket{+},\ket{-}\}$, and therefore known as the \ZX-calculus. In this work, we do not consider the \ZX-calculus for all of quantum mechanics, but only the subcategory that represents pure state stabilizer quantum mechanics. The elements of stabilizer \ZX-calculus diagrams are introduced in section \ref{s:ZX_elements} and the rules for manipulating these diagrams in section \ref{s:rules}. In section \ref{s:properties}, we discuss properties of the \ZX-calculus as a formal system.

\subsection{The \ZX-calculus elements}
\label{s:ZX_elements}

The diagrams of the \ZX-calculus consist of nodes connected by edges and are read from bottom to top. Some edges may only be connected to a node at one end, these are considered to be inputs (if the open end is pointing down) or outputs (if the open end is pointing up) for the whole diagram. There are three types of nodes:
\begin{mitem}
 \item green nodes with $n$ inputs and $m$ outputs and a phase $\alpha\in\{0,\pi/2,\pi,-\pi/2\}$, representing the maps
   \begin{center}
    \begin{tikzpicture}
	\begin{pgfonlayer}{nodelayer}
		\node [style=none] (0) at (-1.8, 0.5) {};
		\node [style=none] (1) at (-1.5, 0.45) {$\ldots$};
		\node [style=none] (2) at (-1.2, 0.5) {};
		\node [style=gn,label={[gphase]right:$\alpha$}] (3) at (-1.5, -0) {};
		\node [style=none] (4) at (1.25, -0) {$= \; \begin{cases} \ket{0}\t{n}\mapsto\ket{0}\t{m} \\ \ket{1}\t{n}\mapsto e^{i\alpha}\ket{1}\t{m}, \end{cases}$};
		\node [style=none] (5) at (-1.8, -0.5) {};
		\node [style=none] (6) at (-1.5, -0.45) {$\ldots$};
		\node [style=none] (7) at (-1.2, -0.5) {};
	\end{pgfonlayer}
	\begin{pgfonlayer}{edgelayer}
		\draw [bend left=15] (3) to (7.center);
		\draw [bend right=15] (3) to (5.center);
		\draw [bend right=15] (0.center) to (3);
		\draw [bend right=15] (3) to (2.center);
	\end{pgfonlayer}
    \end{tikzpicture}
   \end{center}
 \item red nodes with $n$ inputs and $m$ outputs and a phase $\alpha\in\{0,\pi/2,\pi,-\pi/2\}$, representing the maps
   \begin{center}
    \begin{tikzpicture}
	\begin{pgfonlayer}{nodelayer}
		\node [style=none] (0) at (-1.8, 0.5) {};
		\node [style=none] (1) at (-1.5, 0.45) {$\ldots$};
		\node [style=none] (2) at (-1.2, 0.5) {};
		\node [style=rn,label={[rphase]right:$\alpha$}] (3) at (-1.5, -0) {};
		\node [style=none] (4) at (1.35, -0) {$= \; \begin{cases} \ket{+}\t{n}\mapsto\ket{+}\t{m} \\ \ket{-}\t{n}\mapsto e^{i\alpha}\ket{-}\t{m}, \end{cases}$};
		\node [style=none] (5) at (-1.8, -0.5) {};
		\node [style=none] (6) at (-1.5, -0.45) {$\ldots$};
		\node [style=none] (7) at (-1.2, -0.5) {};
	\end{pgfonlayer}
	\begin{pgfonlayer}{edgelayer}
		\draw [bend left=15] (3) to (7.center);
		\draw [bend right=15] (3) to (5.center);
		\draw [bend right=15] (0.center) to (3);
		\draw [bend right=15] (3) to (2.center);
	\end{pgfonlayer}
    \end{tikzpicture}
   \end{center}
 where $\ket{\pm}=\frac{1}{\sqrt{2}}\left(\ket{0}\pm\ket{1}\right)$, and
 \item yellow nodes with one input and one output, labelled $H$, representing the Hadamard operator
   \begin{center}
    \begin{tikzpicture}
	\begin{pgfonlayer}{nodelayer}
		\node [style=none] (0) at (-1, 0.5) {};
		\node [style=Hadamard] (1) at (-1, -0) {};
		\node [style=none] (2) at (0.75, -0) {$= \; \begin{cases} \ket{0}\mapsto\ket{+}\\ \ket{1}\mapsto\ket{-}. \end{cases}$};
		\node [style=none] (3) at (-1, -0.5) {};
	\end{pgfonlayer}
	\begin{pgfonlayer}{edgelayer}
		\draw (0.center) to (1);
		\draw (1) to (3.center);
	\end{pgfonlayer}
    \end{tikzpicture}
   \end{center}
\end{mitem}
If a node has phase zero, the phase label is usually left out. Note that \phase{gn,label={[gphase]right:$\pi/2$}} corresponds to $R_Z$ and \phase{rn,label={[rphase]right:$\pi/2$}} corresponds to $R_X$. Red or green nodes with one input and one output are occasionally called \emph{phase operators}. Diagrams with no inputs denote states, in particular \state{gn} is $\ket{+}$ and \state{rn} denotes $\ket{0}$.

The Hermitian conjugate of a diagram, denoted by a superscript $\dagger$, is the diagram that results by interchanging the roles of inputs and outputs in the original diagram (i.e. mirroring the diagram in a horizontal line), and flipping the signs of all phases.

Measurements are represented in the \ZX-calculus in post-selected form. A diagram $D$ with $m$ inputs and no outputs is called an \emph{effect} on $m$ qubits. The interpretation of the effect $D$ is that of having found state $D^\dagger$ upon performing some apropriate measurement.

The \ZX-calculus for all of quantum mechanics has the same elements as the calculus for stabilizer quantum mechanics, the only difference being that arbitrary phases $\alpha$ in the interval $-\pi<\alpha\leq\pi$ are allowed.

\subsection{Rules of the \ZX-calculus}
\label{s:rules}

The diagrams of the \ZX-calculus satisfy a number of rewrite rules, i.e.\ purely graphical rules for manipulating diagrams. All rules are also true with the colours red and green reversed, or with inputs and outputs reversed. Subdiagrams with no inputs or outputs are global phase and normalization factors. Since global phases have no physical effect and the normalization of pure states is fixed, we will ignore them. The nodes
\begin{tikzpicture}[baseline=-.1cm]
	\begin{pgfonlayer}{nodelayer}
		\node [style=gn,label={[gphase]right:$\pi$}] (0) at (0, -0) {};
	\end{pgfonlayer}
\end{tikzpicture}
and
\begin{tikzpicture}[baseline=-.1cm]
	\begin{pgfonlayer}{nodelayer}
		\node [style=rn,label={[rphase]right:$\pi$}] (0) at (0, -0) {};
	\end{pgfonlayer}
\end{tikzpicture}
denote a global factor of 0, these can of course not be ignored.

\textbf{Only the topology matters}: As long as the topology of the diagram remains the same, nodes and lines can be moved around freely.

\textbf{Spider rule and identity rules}: Two adjacent nodes of the same colour merge, their phases add. A node with phase zero and exactly two incident edges can be removed.
\begin{center}
   \begin{tikzpicture}
	\begin{pgfonlayer}{nodelayer}
		\node [style=none] (0) at (-5.8, 0.75) {};
		\node [style=none] (1) at (-5.5, 0.7) {$\ldots$};
		\node [style=none] (2) at (-5.2, 0.75) {};
		\node [style=none] (3) at (-4.35, 0.75) {};
		\node [style=none] (4) at (-4, 0.7) {$\ldots$};
		\node [style=none] (5) at (-3.65, 0.75) {};
		\node [style=none] (6) at (-2.6, 0.75) {};
		\node [style=none] (7) at (-2.25, 0.7) {$\ldots$};
		\node [style=none] (8) at (-1.9, 0.75) {};
		\node [style=none] (9) at (-1.6, 0.75) {};
		\node [style=none] (10) at (-1.25, 0.7) {$\ldots$};
		\node [style=none] (11) at (-0.9, 0.75) {};
		\node [style=none] (12) at (1.25, 0.5) {};
		\node [style=none] (13) at (2.5, 0.5) {};
		\node [style=none] (14) at (4.25, 0.75) {};
		\node [style=none] (15) at (5.25, 0.75) {};
		\node [style=none] (16) at (6.25, 0.75) {};
		\node [style=none] (17) at (7.25, 0.75) {};
		\node [style=gn,label={[gphase]left:$\alpha$}] (18) at (-5.5, 0.3) {};
		\node [style=none] (19) at (5.75, 0.5) {$=$};
		\node [style=gn] (20) at (4.75, 0.25) {};
		\node [style=none] (21) at (6.75, 0.25) {};
		\node [style=none] (22) at (-5.05, -0) {};
		\node [style=none] (23) at (-4.75, -0) {$\ldots$};
		\node [style=none] (24) at (-4.45, -0) {};
		\node [style=none] (25) at (-3, -0) {$=$};
		\node [style=gn,label={[gphase]right:$\alpha+\beta$}] (26) at (-1.75, -0) {};
		\node [style=gn] (27) at (1.25, -0) {};
		\node [style=none] (28) at (2, -0) {$=$};
		\node [style=gn] (29) at (4.75, -0.25) {};
		\node [style=none] (30) at (6.75, -0.25) {};
		\node [style=gn,label={[gphase]right:$\beta$}] (31) at (-4, -0.3) {};
		\node [style=none] (32) at (5.75, -0.5) {$=$};
		\node [style=none] (33) at (-2.6, -0.75) {};
		\node [style=none] (34) at (-2.25, -0.7) {$\ldots$};
		\node [style=none] (35) at (-1.9, -0.75) {};
		\node [style=none] (36) at (-1.6, -0.75) {};
		\node [style=none] (37) at (-1.25, -0.7) {$\ldots$};
		\node [style=none] (38) at (-0.9, -0.75) {};
		\node [style=none] (39) at (1.25, -0.5) {};
		\node [style=none] (40) at (2.5, -0.5) {};
		\node [style=none] (41) at (4.25, -0.75) {};
		\node [style=none] (42) at (5.25, -0.75) {};
		\node [style=none] (43) at (6.25, -0.75) {};
		\node [style=none] (44) at (7.25, -0.75) {};
		\node [style=none] (45) at (-5.85, -0.75) {};
		\node [style=none] (46) at (-5.5, -0.7) {$\ldots$};
		\node [style=none] (47) at (-5.15, -0.75) {};
		\node [style=none] (48) at (-4.3, -0.75) {};
		\node [style=none] (49) at (-4, -0.7) {$\ldots$};
		\node [style=none] (50) at (-3.7, -0.75) {};
	\end{pgfonlayer}
	\begin{pgfonlayer}{edgelayer}
		\draw (18) to (22.center);
		\draw [bend right=45] (14.center) to (20);
		\draw [bend left=45] (29) to (42.center);
		\draw [bend right=15] (26) to (33.center);
		\draw (13.center) to (40.center);
		\draw (9.center) to (26);
		\draw (26) to (36.center);
		\draw [bend left=45] (30.center) to (44.center);
		\draw [bend right=45] (20) to (15.center);
		\draw [bend right=45] (21.center) to (17.center);
		\draw [in=75, out=270] (5.center) to (31);
		\draw [in=90, out=255] (18) to (45.center);
		\draw [bend right=15] (22.center) to (31);
		\draw [bend left=45] (41.center) to (29);
		\draw [bend right=15] (6.center) to (26);
		\draw [in=90, out=-75] (18) to (47.center);
		\draw (24.center) to (31);
		\draw [bend left=15] (18) to (24.center);
		\draw [bend right=15] (0.center) to (18);
		\draw [bend left=15] (26) to (38.center);
		\draw (27) to (39.center);
		\draw [bend left=15] (31) to (50.center);
		\draw [bend right=15] (18) to (2.center);
		\draw (12.center) to (27);
		\draw [bend left=15] (11.center) to (26);
		\draw [bend right=15] (31) to (48.center);
		\draw (26) to (35.center);
		\draw [bend left=45] (43.center) to (30.center);
		\draw [in=105, out=270] (3.center) to (31);
		\draw (8.center) to (26);
		\draw [bend right=45] (16.center) to (21.center);
	\end{pgfonlayer}
   \end{tikzpicture}
\end{center}

\textbf{Bialgebra law, Hopf law and copying}: The bialgebra law allows a certain pattern of two red and two green nodes to be replaced by just one red and green node. If two nodes of different colours are connected by exactly two edges, then by the Hopf law those edges can be removed. Finally, a node of one colour with one input and two outputs copies the zero phase state of the other colour.
\begin{center}
 \begin{tikzpicture}
	\begin{pgfonlayer}{nodelayer}
		\node [style=none] (0) at (-6, 0.75) {};
		\node [style=none] (1) at (-5, 0.75) {};
		\node [style=none] (2) at (-3.25, 0.75) {};
		\node [style=none] (3) at (-2.75, 0.75) {};
		\node [style=none] (4) at (-0.5, 0.5) {};
		\node [style=none] (5) at (1, 0.5) {};
		\node [style=gn] (6) at (-6, 0.5) {};
		\node [style=gn] (7) at (-5, 0.5) {};
		\node [style=gn] (8) at (3.5, -0.5) {};
		\node [style=rn] (9) at (-3, 0.25) {};
		\node [style=gn] (10) at (-0.5, 0.25) {};
		\node [style=gn] (11) at (1, 0.25) {};
		\node [style=gn] (12) at (5.25, -0.25) {};
		\node [style=gn] (13) at (5.75, -0.25) {};
		\node [style=none] (14) at (-4, -0) {$=$};
		\node [style=none] (15) at (0.25, -0) {$=$};
		\node [style=rn] (16) at (3.5, -0) {};
		\node [style=none] (17) at (4.5, -0) {$=$};
		\node [style=gn] (18) at (-3, -0.25) {};
		\node [style=rn] (19) at (-0.5, -0.25) {};
		\node [style=rn] (20) at (1, -0.25) {};
		\node [style=none] (21) at (5.25, 0.25) {};
		\node [style=none] (22) at (5.75, 0.25) {};
		\node [style=rn] (23) at (-6, -0.5) {};
		\node [style=rn] (24) at (-5, -0.5) {};
		\node [style=none] (25) at (3.25, 0.5) {};
		\node [style=none] (26) at (3.75, 0.5) {};
		\node [style=none] (27) at (-6, -0.75) {};
		\node [style=none] (28) at (-5, -0.75) {};
		\node [style=none] (29) at (-3.25, -0.75) {};
		\node [style=none] (30) at (-2.75, -0.75) {};
		\node [style=none] (31) at (-0.5, -0.5) {};
		\node [style=none] (32) at (1, -0.5) {};
	\end{pgfonlayer}
	\begin{pgfonlayer}{edgelayer}
		\draw (23) to (27.center);
		\draw [bend left=15] (16) to (25.center);
		\draw [bend left=15] (3.center) to (9);
		\draw [bend left] (7) to (24);
		\draw [bend left=60] (10) to (19);
		\draw (24) to (28.center);
		\draw (5.center) to (11);
		\draw (9) to (18);
		\draw (13) to (22.center);
		\draw (7) to (23);
		\draw [bend right=15] (16) to (26.center);
		\draw [bend right=60] (10) to (19);
		\draw (1.center) to (7);
		\draw (19) to (31.center);
		\draw (12) to (21.center);
		\draw (4.center) to (10);
		\draw [bend right] (6) to (23);
		\draw (20) to (32.center);
		\draw (6) to (24);
		\draw (0.center) to (6);
		\draw (8) to (16);
		\draw [bend right=15] (18) to (29.center);
		\draw [bend right=15] (2.center) to (9);
		\draw [bend left=15] (18) to (30.center);
	\end{pgfonlayer}
 \end{tikzpicture}
\end{center}

\textbf{$\pi$-copy rule, $\pi$-commutation and colour change}: A $\pi$ phase operator is copied by a node of the other colour. It can also be moved past any phase operator of the other colour, flipping the sign of that phase in the process. The Hadamard gate changes the colour of nodes when it is applied to each input and output. From this and the identity rule we can deduce that the Hadamard gate is self-inverse.
\begin{center}
 \begin{tikzpicture}
	\begin{pgfonlayer}{nodelayer}
		\node [style=gn] (0) at (-6, 0) {};
		\node [style=rn,label={[rphase]right:$\pi$}] (1) at (-6, -0.5) {};
		\node [style=none] (2) at (-6.3, 0.5) {};
		\node [style=none] (3) at (-5.7, 0.5) {};
		\node [style=none] (4) at (-6, -0.75) {};
		\node [style=none] (5) at (-5.25, 0) {$=$};
		\node [style=gn] (6) at (-3.75, -0.25) {};
		\node [style=rn,label={[rphase]left:$\pi$}] (7) at (-4.25, 0.25) {};
		\node [style=rn,label={[rphase]right:$\pi$}] (8) at (-3.25, 0.25) {};
		\node [style=none] (9) at (-3.75, -0.75) {};
		\node [style=none] (10) at (-4.25, 0.5) {};
		\node [style=none] (11) at (-3.25, 0.5) {};
		\node [style=none] (12) at (-6, 0.45) {$\ldots $};
		\node [style=none] (13) at (-3.75, 0.25) {$\ldots $};
		\node [style=none] (14) at (-1, 0.5) {};
		\node [style=rn,label={[rphase]right:$\pi$}] (15) at (-1, 0.25) {};
		\node [style=gn,label={[gphase]right:$\alpha$}] (16) at (-1, -0.25) {};
		\node [style=none] (17) at (-1, -0.5) {};
		\node [style=none] (18) at (0, -0) {$=$};
		\node [style=none] (19) at (0.5, 0.5) {};
		\node [style=none] (20) at (0.5, -0.5) {};
		\node [style=gn,label={[gphase]right:$-\alpha$}] (21) at (0.5, 0.25) {};
		\node [style=rn,label={[rphase]right:$\pi$}] (22) at (0.5, -0.25) {};
		\node [style=rn,label={[rphase]right:$\alpha$}] (23) at (3.5, -0) {};
		\node [style=none] (24) at (3.2, 0.5) {};
		\node [style=none] (25) at (3.8, 0.5) {};
		\node [style=none] (26) at (3.2, -0.5) {};
		\node [style=none] (27) at (3.8, -0.5) {};
		\node [style=none] (28) at (4.5, -0) {$=$};
		\node [style=gn,label={[gphase]right:$\alpha$}] (29) at (5.5, -0) {};
		\node [style=Hadamard] (30) at (5, 0.5) {};
		\node [style=Hadamard] (31) at (6, 0.5) {};
		\node [style=Hadamard] (32) at (5, -0.5) {};
		\node [style=Hadamard] (33) at (6, -0.5) {};
		\node [style=none] (34) at (5, 0.75) {};
		\node [style=none] (35) at (6, 0.75) {};
		\node [style=none] (36) at (5, -0.75) {};
		\node [style=none] (37) at (6, -0.75) {};
		\node [style=none] (38) at (5.5, 0.5) {$\ldots $};
		\node [style=none] (39) at (5.5, -0.5) {$\ldots $};
		\node [style=none] (40) at (3.5, 0.45) {$\ldots $};
		\node [style=none] (41) at (3.5, -0.45) {$\ldots $};
	\end{pgfonlayer}
	\begin{pgfonlayer}{edgelayer}
		\draw (4.center) to (1);
		\draw (1) to (0);
		\draw [bend left=15] (0) to (2.center);
		\draw [bend right=15] (0) to (3.center);
		\draw (9.center) to (6);
		\draw [bend left] (6) to (7);
		\draw [bend right] (6) to (8);
		\draw (8) to (11.center);
		\draw (7) to (10.center);
		\draw (14.center) to (15);
		\draw (15) to (16);
		\draw (16) to (17.center);
		\draw (19.center) to (21);
		\draw (21) to (22);
		\draw (22) to (20.center);
		\draw [bend right=15] (24.center) to (23);
		\draw [bend right=15] (23) to (26.center);
		\draw [bend left=15] (25.center) to (23);
		\draw [bend left=15] (23) to (27.center);
		\draw (34.center) to (30);
		\draw [bend right] (30) to (29);
		\draw [bend right] (29) to (31);
		\draw (31) to (35.center);
		\draw [bend right] (29) to (32);
		\draw (32) to (36.center);
		\draw (33) to (37.center);
		\draw [bend left] (29) to (33);
	\end{pgfonlayer}
 \end{tikzpicture}
\end{center}

\textbf{Euler decomposition of the Hadamard operator}: This rule is special in that it does not have a category-theoretical meaning, but follows from the fact that any unitary single qubit operator can be decomposed as a sequence of three rotations around two orthogonal axes. The Euler decomposition rule cannot be derived from any combination of the other \ZX-calculus rules \cite{duncan_graph_2009}.
\begin{center}
 \begin{tikzpicture}
	\begin{pgfonlayer}{nodelayer}
		\node [style=Hadamard] (0) at (-0.75, -0) {};
		\node [style=rn,label={[rphase]right:$\pi/2$}] (1) at (0.75, -0) {};
		\node [style=gn,label={[gphase]right:$\pi/2$}] (2) at (0.75, -0.6) {};
		\node [style=gn,label={[gphase]right:$\pi/2$}] (3) at (0.75, 0.6) {};
		\node [style=none] (4) at (-0.75, 0.5) {};
		\node [style=none] (5) at (-0.75, -0.5) {};
		\node [style=none] (6) at (0.75, 0.85) {};
		\node [style=none] (7) at (0.75, -0.85) {};
		\node [style=none] (8) at (0, -0) {$=$};
	\end{pgfonlayer}
	\begin{pgfonlayer}{edgelayer}
		\draw (4.center) to (0);
		\draw (0) to (5.center);
		\draw (6.center) to (3);
		\draw (3) to (1);
		\draw (1) to (2);
		\draw (2) to (7.center);
	\end{pgfonlayer}
 \end{tikzpicture}
\end{center}

\subsection{The \ZX-calculus as a formal system}
\label{s:properties}

The elements of the \ZX-calculus, together with their interpretations in terms of matrix mechanics, allow quantum mechanical states and operators to be expressed in diagrammatic form. The rules given in section \ref{s:rules} turn this graphical notation into a formal system in its own right, justifying the name ``calculus''. As such, there are a number of properties of the \ZX-calculus that interest us:
\begin{mitem}
 \item \emph{Universality}: Is any stabilizer state or stabilizer operation expressible as a \ZX-calculus diagram?
 \item \emph{Soundness}: Does any equation derived in the \ZX-calculus hold true when translated back into matrix mechanics?
 \item \emph{Completeness}: Is any equation between two \ZX-calculus diagrams which is true when translated into matrix mechanics derivable using the rules of the \ZX-calculus?
\end{mitem}
Of these properties, soundness is clearly the most important, as a new formalism is of little use if it disagrees with the original. Fortunately, it can easily be checked that the rules of the \ZX-calculus are sound, from which it follows that any equality derived in the \ZX-calculus is true.

Also, the \ZX-calculus is indeed universal for stabilizer quantum mechanics. To see this, note that any Clifford operator can be represented by a quantum circuit consisting of controlled-X, Hadamard and phase gates \cite{nielsen_quantum_2010}. Any pure $n$-qubit stabilizer state can be represented by a Clifford operator applied to the state $\ket{0}\t{n}$. Now both the Hadamard gate and the phase gate can be expressed in the \ZX-calculus. Furthermore, it is easy to see that
\begin{center}
 \begin{tikzpicture}
	\begin{pgfonlayer}{nodelayer}
		\node [style=gn] (0) at (-3, -0.1) {};
		\node [style=rn] (1) at (-2, 0.1) {};
		\node [style=rn] (2) at (0.5, -0.1) {};
		\node [style=gn] (3) at (-0.5, 0.1) {};
		\node [style=gn] (4) at (2, -0) {};
		\node [style=rn] (5) at (3, -0) {};
		\node [style=none] (6) at (-3, 0.3) {};
		\node [style=none] (7) at (-2, 0.3) {};
		\node [style=none] (8) at (-3, -0.3) {};
		\node [style=none] (9) at (-2, -0.3) {};
		\node [style=none] (10) at (-0.5, 0.3) {};
		\node [style=none] (11) at (0.5, 0.3) {};
		\node [style=none] (12) at (-0.5, -0.3) {};
		\node [style=none] (13) at (0.5, -0.3) {};
		\node [style=none] (14) at (-1.25, -0) {$=$};
		\node [style=none] (15) at (1.25, -0) {$=$};
		\node [style=none] (16) at (2, 0.3) {};
		\node [style=none] (17) at (3, 0.3) {};
		\node [style=none] (18) at (2, -0.3) {};
		\node [style=none] (19) at (3, -0.3) {};
	\end{pgfonlayer}
	\begin{pgfonlayer}{edgelayer}
		\draw (6.center) to (0);
		\draw (0) to (8.center);
		\draw (0) to (1);
		\draw (7.center) to (1);
		\draw (1) to (9.center);
		\draw (10.center) to (3);
		\draw (3) to (12.center);
		\draw (11.center) to (2);
		\draw (2) to (13.center);
		\draw (3) to (2);
		\draw (16.center) to (4);
		\draw (4) to (18.center);
		\draw (4) to (5);
		\draw (17.center) to (5);
		\draw (5) to (19.center);
	\end{pgfonlayer}
 \end{tikzpicture}
\end{center}
is the controlled-X gate. Thus any Clifford circuit can be translated easily into the \ZX-calculus. As
\begin{center}
 \begin{tikzpicture}
	\begin{pgfonlayer}{nodelayer}
		\node [style=rn] (0) at (-0.5, -0.1) {};
		\node [style=rn] (1) at (0.5, -0.1) {};
		\node [style=none] (2) at (-0.5, 0.15) {};
		\node [style=none] (3) at (0.5, 0.15) {};
		\node [style=none] (4) at (0, -0) {$\ldots$};
		\node [style=none] (5) at (-1.5, -0) {$\ket{0}\t{n}=$};
	\end{pgfonlayer}
	\begin{pgfonlayer}{edgelayer}
		\draw (2.center) to (0);
		\draw (3.center) to (1);
	\end{pgfonlayer}
 \end{tikzpicture}
\end{center}
we can also represent any pure stabilizer state. Postselected Z-basis measurements are given by \effect{rn} and \effect{rn,label={[rphase]right:$\pi$}}.

It is easy to show that the \ZX-calculus is complete for $C_1$, the single qubit Clifford group; in fact:
\begin{lem}\label{lem:local_Clifford_normal_form}
 Any single qubit Clifford operator can be written uniquely in one of the forms
 \begin{equation}\label{eq:local_Clifford_normal_form}
  \begin{tikzpicture}[baseline=0cm]
	\begin{pgfonlayer}{nodelayer}
		\node [style=gn,label={[gphase]right:$\alpha$}] (0) at (-1.75, 0.3) {};
		\node [style=rn,label={[rphase]right:$\beta$}] (1) at (-1.75, -0.3) {};
		\node [style=rn,label={[rphase]right:$\pi/2$}] (2) at (1, 0.6) {};
		\node [style=gn,label={[gphase]right:$\pm\pi/2$}] (3) at (1, -0) {};
		\node [style=rn,label={[rphase]right:$\gamma$}] (4) at (1, -0.6) {};
		\node [style=none] (5) at (0, -0) {or};
		\node [style=none] (6) at (-1.75, 0.55) {};
		\node [style=none] (7) at (-1.75, -0.55) {};
		\node [style=none] (8) at (1, -0.9) {};
		\node [style=none] (9) at (1, 0.9) {};
	\end{pgfonlayer}
	\begin{pgfonlayer}{edgelayer}
		\draw (6.center) to (7.center);
		\draw (9.center) to (8.center);
	\end{pgfonlayer}
  \end{tikzpicture}
 \end{equation}
 where $\alpha,\beta,\gamma\in\{0,\pi/2,\pi,-\pi/2\}$.
\end{lem}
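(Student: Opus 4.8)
The plan is to prove the sharper statement that every \ZX-diagram denoting a single-qubit Clifford operator is \emph{provably equal}, using only the rules of section~\ref{s:rules}, to exactly one diagram of the two stated shapes; this is both what the lemma asserts and what is needed for completeness on $C_1$. I would organise the argument around a counting shortcut. Ignoring global phases, $C_1$ has exactly $24$ elements, while the two shapes contribute at most $4\cdot 4 = 16$ diagrams (a green phase node above a red phase node) and at most $2\cdot 4 = 8$ diagrams (a red $\pi/2$ node, a green $\pm\pi/2$ node, a red $\gamma$ node), i.e.\ at most $24$ normal forms in all. Each normal form is a composite of green and red phase nodes with phases in $\{0,\pi/2,\pi,-\pi/2\}$, hence denotes a Clifford operator. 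Therefore, once I have shown that \emph{every} single-qubit Clifford diagram is provably equal to one of these forms, soundness (the rewriting preserves the denoted operator) implies that all $24$ Clifford operators are denoted by normal forms; since there are at most $24$ of the latter, the correspondence between normal forms and $C_1$ is forced to be a bijection. In particular the two shapes are disjoint and each Clifford operator matches a single normal form, which is the claimed uniqueness. The real content is thus the existence-by-rewriting part.

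For the rewriting, I would start from the fact (stated in the excerpt) that $C_1$ is generated by $R_Z$, a green $\pi/2$ node, and $R_X$, a red $\pi/2$ node; hence any single-qubit Clifford diagram is a vertical composite of green and red phase nodes. Repeated use of the spider rule (adjacent same-colour nodes merge, their phases adding — and $\{0,\pi/2,\pi,-\pi/2\}$ is closed under addition mod $2\pi$) together with the identity rule (delete a phase-$0$ node of degree two) brings any such composite to an \emph{alternating} stack of green and red nodes in which every interior node has a non-zero phase. The heart of the proof is a shortening step for stacks of length at least four. If the third-from-top node carries phase $\pi$, the $\pi$-commutation rule pushes it past its neighbour of the other colour, and the spider rule then merges two now-adjacent same-colour nodes. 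If that node carries phase $\pm\pi/2$, one applies a ``local Euler move'' to the top three nodes: from the Euler-decomposition rule and its colour-reversed form, which identify (up to global phase) the Hadamard gate with both $\mathrm{green}(\pi/2)\,\mathrm{red}(\pi/2)\,\mathrm{green}(\pi/2)$ and $\mathrm{red}(\pi/2)\,\mathrm{green}(\pi/2)\,\mathrm{red}(\pi/2)$, together with $\pi$-commutation to account for signs, one derives that every sandwich $\mathrm{green}(\pm\pi/2)\,\mathrm{red}(\pm\pi/2)\,\mathrm{green}(\pm\pi/2)$ equals some sandwich $\mathrm{red}(\delta_1)\,\mathrm{green}(\delta_2)\,\mathrm{red}(\delta_3)$ with $\delta_2\in\{\pi/2,-\pi/2\}$, and symmetrically with colours exchanged. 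Either move produces two adjacent same-colour nodes at the junction with the fourth node, which the spider rule merges, so the node count strictly decreases and the procedure terminates at a stack of length at most three.

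It then remains to classify the short stacks. A stack of length at most two whose order is ``green over red'' is already of the first form (allowing phase $0$); a single node, the empty wire, or a ``red over green'' pair is turned into the first form by the $\pi$-commutation rule when the red phase is $0$ or $\pi$, and is of the second form with $\gamma\in\{0,\pi\}$ (absorbing, if necessary, an extra red $\pi$ node using $\mathrm{red}(-\pi/2)=\mathrm{red}(\pi/2)\,\mathrm{red}(\pi)$) when the red phase is $\pm\pi/2$. A length-three stack is brought to one of the two forms by a local Euler move, using the spider and $\pi$-commutation rules to normalise the outermost phase — collapsing to the first form if that phase becomes $0$ or $\pi$, and landing in the second form (after one more $\pi$-commutation to turn an outer $-\pi/2$ into $\pi/2$) otherwise. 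Combined with the counting argument of the first paragraph, this establishes both existence and uniqueness.

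The step I expect to be the main obstacle is the local Euler move: showing that the single Euler-decomposition rule, together with $\pi$-commutation and the spider rule, really does let one flip an arbitrary sign pattern $\mathrm{green}(\pm\pi/2)\,\mathrm{red}(\pm\pi/2)\,\mathrm{green}(\pm\pi/2)$ into a red--green--red sandwich while keeping the middle phase in $\{\pm\pi/2\}$ is a small but fiddly finite case analysis (eight sign patterns, reducible by symmetry), and one must check that interleaving these moves with the global procedure cannot loop — which is exactly why the strictly decreasing node count is the right termination measure.
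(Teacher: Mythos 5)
Your proposal is correct and follows essentially the same route as the paper: existence by rewriting with the spider, $\pi$-commutation and Euler-decomposition rules (using $R_Z^{\pm 1}=R_Z^{\mp 1}Z$ to handle signs), and uniqueness by the counting argument that the $16+8=24$ normal forms must biject onto the $24$ elements of $C_1$. You simply spell out the termination and case analysis that the paper dismisses as ``straightforward''.
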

\begin{proof}
 Rewritability follows from straightforward application of the spider rule, the $\pi$-commutation rule and the Euler decomposition rule, noting that $R_Z^{\pm 1} = R_Z^{\mp 1} Z$. Uniqueness is due to the fact that the number of distinct normal forms in \eqref{eq:local_Clifford_normal_form}, namely 24, is exactly equal to $\abs{C_1}$.
\end{proof}
In the following sections we will show that the \ZX-calculus is complete for all stabilizer quantum mechanics.

\section{Graph states in the \ZX-calculus}\label{s:ZX_graph_states}

\subsection{Graph states and local Clifford operators}

Graph states can be represented in the graphical calculus in an especially elegant way.

\begin{prop}\label{prop:graph_state}
 A graph state $\ket{G}$, where $G=(E,V)$ is an $n$-vertex graph, is represented in the graphical calculus as follows:
 \begin{mitem}
  \item for each vertex $v\in V$, a green node with one output, and
  \item for each edge $\{u,v\}\in E$, a Hadamard node connected to the green nodes representing vertices $u$ and $v$.
 \end{mitem}
\end{prop}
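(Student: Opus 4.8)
The plan is to verify directly that the diagram described in the proposition has the correct stabilizer subgroup, namely the one generated by the operators $X_v\otimes\bigotimes_{u\in V} Z_u^{\theta_{uv}}$ from Definition~\ref{dfn:graph_state}. Since a stabilizer state is uniquely determined by its stabilizer subgroup, it suffices to show that each generator $g_v = X_v\otimes\bigotimes_{u}Z_u^{\theta_{uv}}$ fixes the state represented by the diagram. I would work at the level of the \ZX-calculus itself rather than dropping to matrices: the Pauli $X$ operator on a wire is a red $\pi$-node \phase{rn,label={[rphase]right:$\pi$}} and the Pauli $Z$ operator is a green $\pi$-node \phase{gn,label={[gphase]right:$\pi$}}, so the claim ``$g_v$ stabilizes $\ket G$'' becomes the assertion that attaching a red $\pi$-node to the output wire of vertex $v$ and green $\pi$-nodes to the output wires of all neighbours of $v$ yields a diagram equal to the original graph state diagram.

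First I would set up the building blocks. A single green node with one output is $\ket{+}$, and a Hadamard node maps $\ket 0\mapsto\ket+$, so the diagram is genuinely a composite of ``$\ket+$ states'' glued by Hadamard-decorated edges; I would note that an edge between $u$ and $v$ can equivalently be read as the two green vertex-nodes connected by a wire carrying a Hadamard. The key local move is the $\pi$-copy rule together with colour change: pushing a green $\pi$ through a Hadamard turns it red, and a red $\pi$ fed into a green node is copied out along all other legs of that green node (this is the $\pi$-copy rule, since a green node with one input and several outputs copies a red $\pi$). So, starting from the generator $g_v$: the red $\pi$ sitting on the output wire of the vertex-$v$ node gets copied by that green node down every Hadamard edge incident to $v$; each copy then passes through the Hadamard on that edge, becoming a green $\pi$; and this green $\pi$ arrives at the neighbouring vertex node $u$, where it simply merges into that node by the spider rule, contributing a phase $\pi$. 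Thus the net effect of $g_v$ acting on the diagram is: remove the original red $\pi$ on wire $v$ (absorbed back by the copy move running in reverse, i.e.\ two $\pi$s on wire $v$ cancel), and deposit exactly one green $\pi$ on each neighbour $u$ of $v$ — but those are precisely cancelled against the green $\pi$s that $g_v$ itself places on those same wires. Hence $g_v$ acting on the diagram returns the diagram unchanged, i.e.\ $g_v$ stabilizes it.

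I would then observe that the $n$ operators $g_v$ generate a subgroup of $P_n$ of order $2^n$ (they are independent since $g_v$ is the only generator with an $X$ on qubit $v$), which forces the state to be a stabilizer state uniquely pinned down by this subgroup; comparing with Definition~\ref{dfn:graph_state} identifies it as $\ket G$. Care is needed about normalization and global phase, but those are explicitly ignored in the calculus (Section~\ref{s:rules}), so the stabilizer argument determines the diagram up to the scalar that the proposition does not track. The main obstacle, and the step I would write out most carefully, is the ``copy and cancel'' computation in the previous paragraph: getting the bookkeeping right for a vertex of arbitrary degree — in particular checking that the red $\pi$ really is copied to \emph{every} incident edge and to nothing else, that each copy passes cleanly through its Hadamard, and that the two $\pi$s landing on wire $v$ (the original one and the one reconstituted by the copy rule) annihilate via $\pi+\pi=0$ modulo $2\pi$. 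Everything else is routine once that local rewrite is established.
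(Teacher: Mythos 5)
Your proposal is correct and follows essentially the same route as the paper, whose proof is only a sketch invoking exactly the moves you describe: represent each generator $X_v\otimes\bigotimes_u Z_u^{\theta_{uv}}$ by a red $\pi$ on wire $v$ and green $\pi$s on the neighbours, copy the red $\pi$ through the green node of $v$ onto the incident Hadamard edges, change colour through each Hadamard, and cancel the resulting green $\pi$s against those supplied by the generator via the spider rule. Your write-up is more detailed than the paper's (which relegates the explicit computation to the worked example that follows); the only slightly muddled phrase is the parenthetical about ``two $\pi$s on wire $v$'' cancelling --- the $\pi$-copy rule simply moves the single red $\pi$ off wire $v$ onto the other legs, and the genuine $\pi+\pi=0$ cancellation happens on the neighbours' wires, as you correctly state afterwards.
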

\begin{proof}[Proof (Sketch)]
 By definition \ref{dfn:graph_state}, the graph state determined by a graph $G=(V,E)$ with adjacency matrix $\theta$ must be an eigenstate of the operators
 \[
  X_v\otimes\bigotimes_{u\in V} Z_u^{\theta_{uv}} \quad\text{for all } v\in V.
 \]
 Now in the \ZX-calculus, $Z$ is represented by \phase{gn,label={[gphase]right:$\pi$}} and $X$ is \phase{rn,label={[rphase]right:$\pi$}}. By the spider law, green phase operators can be moved past any green nodes. Thus, by the $\pi$-copy and the colour change laws, the state defined above is indeed an eigenstate of the given operators.
\end{proof}

We will occasionally use a white ellipse labelled with the name of the graph as short-hand notation for a graph state in diagrams, i.e.
\begin{tikzpicture}[baseline=-0.1 cm]
	\begin{pgfonlayer}{nodelayer}
		\node [style=none] (0) at (-0.75, 0.3) {};
		\node [style=none] (1) at (0.75, 0.3) {};
		\node [style=none] (2) at (-0.75, 0) {};
		\node [ellipse, fill=White, draw=Black, minimum width=2 cm, style=none] (3) at (0, -0) {$G$};
		\node [style=none] (4) at (0.75, 0) {};
		\node [style=none] (5) at (0, 0.27) {$\ldots$};
	\end{pgfonlayer}
	\begin{pgfonlayer}{edgelayer}
		\draw (0.center) to (2.center);
		\draw (1.center) to (4.center);
	\end{pgfonlayer}
\end{tikzpicture}
denotes the state $\ket{G}$.

\begin{ex}
 Let $G$ be the graph
 \begin{center}
  \begin{tikzpicture}
	\begin{pgfonlayer}{nodelayer}
		\node [style=bn,label={right:$4$}] (0) at (0.25, -0.25) {};
		\node [style=bn,label={right:$3$}] (1) at (0.25, 0.25) {};
		\node [style=bn,label={left:$2$}] (2) at (-0.25, 0.25) {};
		\node [style=bn,label={left:$1$}] (3) at (-0.25, -0.25) {};
	\end{pgfonlayer}
	\begin{pgfonlayer}{edgelayer}
		\draw (3) to (2);
		\draw (2) to (1);
		\draw (0) to (2);
		\draw (3) to (1);
		\draw (3) to (0);
	\end{pgfonlayer}
  \end{tikzpicture}
 \end{center}
 The corresponding graph state is the 4-qubit state whose stabilizer group is generated by the operators
\begin{align*}
 & X\otimes Z\otimes Z\otimes Z, \\
 & Z\otimes X\otimes Z\otimes Z, \\
 & Z\otimes Z\otimes X\otimes I, \text{ and} \\
 & Z\otimes Z\otimes I\otimes X.
\end{align*}
 By proposition \ref{prop:graph_state}, the corresponding diagram in the \ZX-calculus is
 \begin{center}
  \begin{tikzpicture}
	\begin{pgfonlayer}{nodelayer}
		\node [style=gn] (0) at (1.5, -1) {};
		\node [style=gn] (1) at (0.5, -0) {};
		\node [style=gn] (2) at (-0.5, -0) {};
		\node [style=gn] (3) at (-1.5, -1) {};
		\node [style=none,label={left:$1$}] (4) at (-1.5, 0.5) {};
		\node [style=none,label={left:$2$}] (5) at (-0.5, 0.5) {};
		\node [style=none,label={left:$3$}] (6) at (0.5, 0.5) {};
		\node [style=none,label={left:$4$}] (7) at (1.5, 0.5) {};
		\node [style=Hadamard] (8) at (0, -0) {};
		\node [style=Hadamard] (9) at (-0.5, -0.5) {};
		\node [style=Hadamard] (10) at (-1, -0.5) {};
		\node [style=Hadamard] (11) at (0, -1) {};
		\node [style=Hadamard] (12) at (0.5, -0.5) {};
	\end{pgfonlayer}
	\begin{pgfonlayer}{edgelayer}
		\draw (3) to (2);
		\draw (2) to (1);
		\draw (0) to (2);
		\draw (3) to (1);
		\draw (3) to (0);
		\draw (4.center) to (3);
		\draw (5.center) to (2);
		\draw (6.center) to (1);
		\draw (7.center) to (0);
	\end{pgfonlayer}
  \end{tikzpicture}
 \end{center}
 where the vertices are rearranged so that the qubits are in the correct order. We check whether this is an eigenstate of the operator $X\otimes Z\otimes Z\otimes Z$. Indeed,
 \begin{center}
  \begin{tikzpicture}
	\begin{pgfonlayer}{nodelayer}
		\node [style=gn] (0) at (-4.5, -1) {};
		\node [style=gn] (1) at (-5.5, -0) {};
		\node [style=gn] (2) at (-6.5, -0) {};
		\node [style=gn] (3) at (-7.5, -1) {};
		\node [style=none] (4) at (-7.5, 0.75) {};
		\node [style=none] (5) at (-6.5, 0.75) {};
		\node [style=none] (6) at (-5.5, 0.75) {};
		\node [style=none] (7) at (-4.5, 0.75) {};
		\node [style=Hadamard] (8) at (-6, -0) {};
		\node [style=Hadamard] (9) at (-6.5, -0.5) {};
		\node [style=Hadamard] (10) at (-7, -0.5) {};
		\node [style=Hadamard] (11) at (-6, -1) {};
		\node [style=Hadamard] (12) at (-5.5, -0.5) {};
		\node [style=rn,label={[rphase]right:$\pi$}] (13) at (-7.5, 0.5) {};
		\node [style=gn,label={[gphase]right:$\pi$}] (14) at (-6.5, 0.5) {};
		\node [style=gn,label={[gphase]right:$\pi$}] (15) at (-5.5, 0.5) {};
		\node [style=gn,label={[gphase]right:$\pi$}] (16) at (-4.5, 0.5) {};
		\node [style=none] (17) at (-4, -0) {$=$};
		\node [style=Hadamard] (18) at (-2, -0) {};
		\node [style=none] (19) at (-2.5, 0.75) {};
		\node [style=gn] (20) at (-3.5, -1) {};
		\node [style=none] (21) at (-3.5, 0.75) {};
		\node [style=gn,label={[gphase]below:$\pi$}] (22) at (-0.5, -1) {};
		\node [style=rn,label={[rphase]above:$\pi$}] (23) at (-3.25, -0.75) {};
		\node [style=Hadamard] (24) at (-2.85, -0.35) {};
		\node [style=none] (25) at (-0.5, 0.75) {};
		\node [style=gn,label={[gphase]left:$\pi$}] (26) at (-2.5, -0) {};
		\node [style=Hadamard] (27) at (-2.25, -0.375) {};
		\node [style=none] (28) at (-1.5, 0.75) {};
		\node [style=gn,label={[gphase]right:$\pi$}] (29) at (-1.5, -0) {};
		\node [style=Hadamard] (30) at (-1.5, -1) {};
		\node [style=Hadamard] (31) at (-1.5, -0.5) {};
		\node [style=rn,label={[rphase]below:$\pi$}] (32) at (-2.5, -1) {};
		\node [style=rn,label={[rphase]right:$\pi$}] (33) at (-3, -0.75) {};
		\node [style=none] (34) at (2.5, 0.75) {};
		\node [style=none] (35) at (0.5, 0.75) {};
		\node [style=Hadamard] (36) at (1.5, -1) {};
		\node [style=Hadamard] (37) at (1.25, -0.625) {};
		\node [style=Hadamard] (38) at (2, -0) {};
		\node [style=gn,label={[gphase]right:$\pi$}] (39) at (2.5, -0) {};
		\node [style=gn] (40) at (0.5, -1) {};
		\node [style=gn,label={[gphase]below:$\pi$}] (41) at (2.5, -1) {};
		\node [style=none] (42) at (3.5, 0.75) {};
		\node [style=Hadamard] (43) at (0.85, -0.65) {};
		\node [style=gn,label={[gphase]below:$\pi$}] (44) at (1.75, -0.375) {};
		\node [style=none] (45) at (0, -0) {$=$};
		\node [style=none] (46) at (1.5, 0.75) {};
		\node [style=gn,label={[gphase]left:$\pi$}] (47) at (1.5, -0) {};
		\node [style=Hadamard] (48) at (2.5, -0.5) {};
		\node [style=gn,label={[gphase]right:$\pi$}] (49) at (3.5, -1) {};
		\node [style=gn,label={[gphase]left:$\pi$}] (50) at (1.2, -0.3) {};
		\node [style=none] (51) at (4.5, 0.75) {};
		\node [style=gn] (52) at (5.5, -0) {};
		\node [style=none] (53) at (4, -0) {$=$};
		\node [style=Hadamard] (54) at (6, -0) {};
		\node [style=none] (55) at (7.5, 0.75) {};
		\node [style=gn] (56) at (6.5, -0) {};
		\node [style=Hadamard] (57) at (5.5, -0.5) {};
		\node [style=Hadamard] (58) at (5, -0.5) {};
		\node [style=gn] (59) at (7.5, -1) {};
		\node [style=none] (60) at (5.5, 0.75) {};
		\node [style=none] (61) at (6.5, 0.75) {};
		\node [style=Hadamard] (62) at (6.5, -0.5) {};
		\node [style=gn] (63) at (4.5, -1) {};
		\node [style=Hadamard] (64) at (6, -1) {};
	\end{pgfonlayer}
	\begin{pgfonlayer}{edgelayer}
		\draw (3) to (2);
		\draw (2) to (1);
		\draw (0) to (2);
		\draw (3) to (1);
		\draw (3) to (0);
		\draw (4.center) to (3);
		\draw (5.center) to (2);
		\draw (6.center) to (1);
		\draw (7.center) to (0);
		\draw (20) to (26);
		\draw (26) to (29);
		\draw (22) to (26);
		\draw (20) to (29);
		\draw (20) to (22);
		\draw (21.center) to (20);
		\draw (19.center) to (26);
		\draw (28.center) to (29);
		\draw (25.center) to (22);
		\draw (40) to (47);
		\draw (47) to (39);
		\draw (49) to (47);
		\draw (40) to (39);
		\draw (40) to (49);
		\draw (35.center) to (40);
		\draw (46.center) to (47);
		\draw (34.center) to (39);
		\draw (42.center) to (49);
		\draw (63) to (52);
		\draw (52) to (56);
		\draw (59) to (52);
		\draw (63) to (56);
		\draw (63) to (59);
		\draw (51.center) to (63);
		\draw (60.center) to (52);
		\draw (61.center) to (56);
		\draw (55.center) to (59);
	\end{pgfonlayer}
  \end{tikzpicture}
 \end{center}
 using the $\pi$-copy law and the spider rule in the first step, the colour change law in the second step, and the spider rule again in the third step.

 The same process can be applied to the other Pauli operators given above.
\end{ex}

\begin{dfn}\label{dfn:GS-LC}
 A diagram in the stabilizer \ZX-calculus is called a \emph{GS-LC diagram} if it consists of a graph state (cf. proposition \ref{prop:graph_state}) with arbitrary single qubit Clifford operators applied to each output. Following \cite{anders_fast_2005}, we call the Clifford operator associated with one of the qubits in the graph state its \emph{vertex operator}.
\end{dfn}

\begin{thm}[\cite{duncan_graph_2009}]\label{thm:Van_den_Nest}
 Let $G=(V,E)$ be a graph with adjacency matrix $\theta$ and let $G\star v$ be the graph that results from applying a local complementation about some $v\in V$. Then the equality
 \[
  \ket{G\star v} = R_{X,v}\otimes\bigotimes_{u\in V}R_{Z,u}^{-\theta_{uv}}\ket{G}
 \]
 holds in the \ZX-calculus, i.e.\ we have
 \begin{center}
  \begin{tikzpicture}
	\begin{pgfonlayer}{nodelayer}
		\node [style=none] (0) at (0.75, 0.25) {};
		\node [style=gn,label={[gphase]right:$\alpha_1$}] (1) at (0.75, -0) {};
		\node [style=none] (2) at (1.7, -0) {$\ldots$};
		\node [style=gn,label={[gphase]right:$\alpha_{v-1}$}] (3) at (2.1, -0) {};
		\node [style=none] (4) at (2.1, 0.25) {};
		\node [style=none] (5) at (0.75, -0.75) {};
		\node [ellipse, fill=White, draw=Black, minimum height=0.75 cm, minimum width=6.5 cm, style=none] (6) at (3.75, -0.75) {$G$};
		\node [style=none] (7) at (2.1, -0.75) {};
		\node [style=none] (8) at (0, -0.25) {$=$};
		\node [style=gn,label={[gphase]right:$\alpha_n$}] (9) at (6.25, -0) {};
		\node [style=gn,label={[gphase]right:$\alpha_{v+1}$}] (10) at (4.5, -0) {};
		\node [style=none] (11) at (4.5, 0.25) {};
		\node [style=none] (12) at (4.5, -0.75) {};
		\node [style=none] (13) at (6.25, -0.75) {};
		\node [style=none] (14) at (6.25, 0.25) {};
		\node [style=none] (15) at (5.85, -0) {$\ldots$};
		\node [style=none] (16) at (3.35, 0.25) {};
		\node [style=none] (17) at (3.35, -0.75) {};
		\node [style=rn,label={[rphase]right:$\pi/2$}] (18) at (3.35, -0) {};
		\node [ellipse, fill=White, draw=Black, minimum height=0.5 cm, minimum width=3 cm, style=none] (19) at (-2, -0.6) {$G\star v$};
		\node [style=none] (20) at (-3, 0) {};
		\node [style=none] (21) at (-3, -0.75) {};
		\node [style=none] (22) at (-1, 0) {};
		\node [style=none] (23) at (-1, -0.75) {};
		\node [style=none] (15) at (-2, -0.15) {$\ldots$};
	\end{pgfonlayer}
	\begin{pgfonlayer}{edgelayer}
		\draw (0.center) to (5.center);
		\draw (4.center) to (7.center);
		\draw (11.center) to (12.center);
		\draw (14.center) to (13.center);
		\draw (16.center) to (17.center);
		\draw (20.center) to (21.center);
		\draw (22.center) to (23.center);
	\end{pgfonlayer}
  \end{tikzpicture}
 \end{center}
 where $\alpha_k = -\theta_{kv}\pi/2$ for $k\in V\setminus\{v\}$.
\end{thm}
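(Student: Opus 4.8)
The plan is to rewrite the left-hand diagram into the graph-state diagram of $G\star v$, with the Euler decomposition of the Hadamard as the essential ingredient. By Proposition~\ref{prop:graph_state} the left-hand diagram consists of a green spider for each vertex of $G$, a Hadamard edge for each edge of $G$, a red $\pi/2$ node (i.e.\ $R_X$) on the output leg of $v$, and a green $-\pi/2$ node (i.e.\ $R_Z^{-1}$) on the output leg of each neighbour $u$ of $v$; all other output legs are bare. First I would invert the Euler decomposition rule — attach a green $-\pi/2$ node to each end of both sides and use the spider rule to cancel it against the outer green $\pi/2$ nodes — to obtain that a red $\pi/2$ node equals a green $-\pi/2$ node, a Hadamard, and a green $-\pi/2$ node in series. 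Applying this on $v$'s output leg and fusing the inner green $-\pi/2$ into the vertex-$v$ spider (spider rule) makes that spider carry phase $-\pi/2$; the remaining ``Hadamard then green $-\pi/2$'' on the output leg is by definition a Hadamard edge from $v$ to a fresh green spider $w$ of phase $-\pi/2$ whose other leg is the output. Fusing each $R_Z^{-1}$ into its neighbour spider likewise gives every neighbour of $v$ phase $-\pi/2$. At this point the whole diagram is \emph{graph-state-like}: green spiders joined by Hadamard edges, with nonzero phase ($-\pi/2$) only on $v$, on the neighbours of $v$, and on $w$, and with $w$ adjacent only to $v$.

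The key step is then the local-complementation elimination lemma for such diagrams: a green spider of phase $\pm\pi/2$ that is joined by Hadamard edges to distinct spiders $y_1,\dots,y_m$ and to nothing else can be deleted, provided one toggles the Hadamard edge between every pair $\{y_i,y_j\}$ and adds phase $\mp\pi/2$ to each $y_i$. Applying this at $v$ (phase $-\pi/2$, with the $y$'s the neighbours of $v$ together with $w$) deletes $v$, adds $+\pi/2$ to each original neighbour of $v$ and to $w$, toggles every edge inside the neighbourhood of $v$, and creates a Hadamard edge from $w$ to each neighbour of $v$ (there was none before). The phases now cancel: each original neighbour of $v$ has phase $-\pi/2+\pi/2=0$ and $w$ has phase $-\pi/2+\pi/2=0$, so no vertex operators remain. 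Relabelling $w$ as $v$, the result is a graph state on the original vertex set whose edges agree with those of $G$ except that the edges inside the neighbourhood of $v$ have been toggled; its adjacency matrix is exactly that of $G\star v$. So the diagram is the graph-state diagram of $\ket{G\star v}$, as claimed.

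The hard part is proving the local-complementation elimination lemma itself, which is the real content of the theorem (it is the Duncan--Perdrix result \cite{duncan_graph_2009}, and it crucially uses the Euler decomposition rule, which is independent of the other rules of the calculus). I would prove it by unfusing the central green spider, using the Euler decomposition together with the colour-change rule to recast the local $\pm\pi/2$ structure as a red spider whose legs can be expanded, by repeated application of the bialgebra law, into a complete bipartite green--red pattern, and then cleaning up with the Hopf law and the spider rule; the edge-\emph{toggling} appears because two parallel Hadamard edges between green spiders cancel, which itself follows from spider fusion and a short computation with two Hadamards. Two routine points to verify along the way are that the fresh spider $w$ may indeed be treated as an ordinary vertex of a graph-state-like diagram, and that the lemma's ``add a Hadamard edge'' clause genuinely becomes a toggle because of this cancellation.
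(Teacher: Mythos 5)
The paper does not prove this theorem at all: it is imported verbatim from Duncan and Perdrix \cite{duncan_graph_2009}, so there is no in-paper argument to compare yours against. That said, your reduction is correct and is essentially the structure of the proof in the cited source. The Euler-decomposition rearrangement $R_X(\pi/2)=R_Z(-\pi/2)\,H\,R_Z(-\pi/2)$ is right; fusing the inner $-\pi/2$ into $v$ and reading the outer ``$H$ then green $-\pi/2$'' as a fresh vertex $w$ carrying $v$'s output is a legitimate re-parsing of the diagram; and the bookkeeping checks out --- after eliminating the internal $-\pi/2$ vertex, every phase cancels, $w$ inherits exactly $N_G(v)$ as its neighbourhood, the edges inside $N_G(v)$ are toggled, and nothing else changes, which is precisely $G\star v$ with $w$ relabelled as $v$.

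The gap is exactly where you say it is: the ``local-complementation elimination lemma'' for an internal $\pm\pi/2$ green vertex is not a routine consequence of the reduction --- it carries the entire content of the theorem, and your proof of it is only a sketch. In particular, ``expand by repeated application of the bialgebra law into a complete bipartite green--red pattern'' invokes the $n$-ary generalised bialgebra law, which is itself a nontrivial derived rule requiring its own induction, and the cancellation of parallel Hadamard edges (the Hadamard--Hopf law) must also be derived before the toggling clause makes sense. The phase bookkeeping in passing the $\pm\pi/2$ through the colour change and Euler decomposition is delicate and is where sign errors typically creep in. So as written the proposal replaces the theorem by an unproven lemma of equal strength; to be complete you would need to carry out the star-graph-to-complete-graph computation of \cite{duncan_graph_2009} (or an equivalent induction) in full. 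You are right, and it is worth emphasising, that the Euler decomposition rule is indispensable here --- the result is not derivable from the remaining rules.
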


\subsection{Equivalence transformations of GS-LC diagrams}
\label{s:equivalence_GS-LC}

Consider the $n$-qubit GS-LC diagram
\begin{center}
 \begin{tikzpicture}[baseline=-0.1 cm]
	\begin{pgfonlayer}{nodelayer}
		\node [style=none] (0) at (-0.75, 0.6) {};
		\node [style=none] (1) at (0.75, 0.6) {};
		\node [style=none] (2) at (-0.75, -0.25) {};
		\node [ellipse, fill=White, draw=Black, minimum height=0.5 cm, minimum width=2.3 cm, style=none] (3) at (0, -0.35) {$G$};
		\node [style=none] (4) at (0.75, -0.25) {};
		\node [style=none] (5) at (0, 0.25) {$\ldots$};
		\node [rectangle,fill=white,draw=black,minimum height=12pt,minimum width=14pt,inner sep=0pt] (6) at (-0.75, 0.25) {$U_1$};
		\node [rectangle,fill=white,draw=black,minimum height=12pt,minimum width=14pt,inner sep=0pt] (7) at (0.75, 0.25) {$U_n$};
	\end{pgfonlayer}
	\begin{pgfonlayer}{edgelayer}
		\draw (0.center) to (2.center);
		\draw (1.center) to (4.center);
	\end{pgfonlayer}
 \end{tikzpicture}
\end{center}
where $G=(V,E)$ is a graph with adjacency matrix $\theta$, and $U_v\in C_1$ for $v\in V$. It is useful to set out explicitly three equivalence transformations of GS-LC diagrams, i.e.\ operations that take a GS-LC diagram to an equal but generally not identical GS-LC diagram:

\textbf{Local complementation about a qubit $v$}: Let $G\star v$ denote the graph that results from $G$ through application of the graph-theoretical local complementation about some vertex $v\in V$. Then by theorem \ref{thm:Van_den_Nest},
\begin{center}
 \begin{tikzpicture}
	\begin{pgfonlayer}{nodelayer}
		\node [style=none] (0) at (0.75, 0.9) {};
		\node [style=gn,label={[gphase]right:$\alpha_1$}] (1) at (0.75, -0) {};
		\node [style=none] (2) at (1.75, -0) {$\ldots$};
		\node [style=gn,label={[gphase]right:$\alpha_{v-1}$}] (3) at (2.1, -0) {};
		\node [style=none] (4) at (2.1, 0.9) {};
		\node [style=none] (5) at (0.75, -0.75) {};
		\node [ellipse, fill=White, draw=Black, minimum height=0.75 cm, minimum width=7 cm, style=none] (6) at (4, -0.75) {$G\star v$};
		\node [style=none] (7) at (2.1, -0.75) {};
		\node [style=none] (8) at (0, -0) {$=$};
		\node [style=gn,label={[gphase]right:$\alpha_n$}] (9) at (6.6, -0) {};
		\node [style=gn,label={[gphase]right:$\alpha_{v+1}$}] (10) at (4.85, -0) {};
		\node [style=none] (11) at (4.85, 0.9) {};
		\node [style=none] (12) at (4.85, -0.75) {};
		\node [style=none] (13) at (6.6, -0.75) {};
		\node [style=none] (14) at (6.6, 0.9) {};
		\node [style=none] (15) at (6.25, -0) {$\ldots$};
		\node [style=none] (16) at (3.35, 0.9) {};
		\node [style=none] (17) at (3.35, -0.75) {};
		\node [style=rn,label={[rphase]right:$-\pi/2$}] (18) at (3.35, -0) {};
		\node [ellipse, fill=White, draw=Black, minimum height=0.5 cm, minimum width=3 cm, style=none] (19) at (-2, -0.4) {$G$};
		\node [style=none] (20) at (-3, 0.5) {};
		\node [style=none] (21) at (-3, -0.5) {};
		\node [style=none] (22) at (-1, 0.5) {};
		\node [style=none] (23) at (-1, -0.5) {};
		\node [style=none] (24) at (-2, 0.2) {$\ldots$};
		\node [rectangle,fill=white,draw=black,minimum height=13pt,minimum width=14pt,inner sep=0pt] (25) at (-3, 0.2) {$U_1$};
		\node [rectangle,fill=white,draw=black,minimum height=13pt,minimum width=14pt,inner sep=0pt] (26) at (-1, 0.2) {$U_n$};
		\node [rectangle,fill=white,draw=black,minimum height=13pt,minimum width=14pt,inner sep=0pt] (27) at (0.75, 0.6) {$U_1$};
		\node [rectangle,fill=white,draw=black,minimum height=13pt,minimum width=24pt,inner sep=0pt] (28) at (2.1, 0.6) {$U_{v-1}$};
		\node [rectangle,fill=white,draw=black,minimum height=13pt,minimum width=14pt,inner sep=0pt] (29) at (3.35, 0.6) {$U_v$};
		\node [rectangle,fill=white,draw=black,minimum height=13pt,minimum width=24pt,inner sep=0pt] (30) at (4.85, 0.6) {$U_{v+1}$};
		\node [rectangle,fill=white,draw=black,minimum height=13pt,minimum width=14pt,inner sep=0pt] (31) at (6.6, 0.6) {$U_n$};
		\node [style=none] (32) at (1.35, 0.6) {$\ldots$};
		\node [style=none] (33) at (5.8, 0.6) {$\ldots$};
	\end{pgfonlayer}
	\begin{pgfonlayer}{edgelayer}
		\draw (0.center) to (5.center);
		\draw (4.center) to (7.center);
		\draw (11.center) to (12.center);
		\draw (14.center) to (13.center);
		\draw (16.center) to (17.center);
		\draw (20.center) to (21.center);
		\draw (22.center) to (23.center);
	\end{pgfonlayer}
 \end{tikzpicture}
\end{center}
where $\alpha_u = \theta_{uv}\pi/2$ for $u\in V\setminus\{v\}$. In the following, when we say ``local complementation'', we usually mean this transformation, which consists of a graph operation and a change to the local Clifford operators.

\textbf{Fixpoint operation on a qubit $v$}: Let $v\in V$, then
\begin{center}
 \begin{tikzpicture}
	\begin{pgfonlayer}{nodelayer}
		\node [style=none] (0) at (0.75, 0.8) {};
		\node [style=gn,label={[gphase]right:$\alpha_1$}] (1) at (0.75, -0) {};
		\node [style=none] (2) at (1.75, -0) {$\ldots$};
		\node [style=gn,label={[gphase]right:$\alpha_{v-1}$}] (3) at (2.1, -0) {};
		\node [style=none] (4) at (2.1, 0.8) {};
		\node [style=none] (5) at (0.75, -0.75) {};
		\node [ellipse, fill=White, draw=Black, minimum height=0.75 cm, minimum width=6.5 cm, style=none] (6) at (3.6, -0.65) {$G$};
		\node [style=none] (7) at (2.1, -0.75) {};
		\node [style=none] (8) at (0, -0) {$=$};
		\node [style=gn,label={[gphase]right:$\alpha_n$}] (9) at (6.35, -0) {};
		\node [style=gn,label={[gphase]right:$\alpha_{v+1}$}] (10) at (4.6, -0) {};
		\node [style=none] (11) at (4.6, 0.8) {};
		\node [style=none] (12) at (4.6, -0.75) {};
		\node [style=none] (13) at (6.35, -0.75) {};
		\node [style=none] (14) at (6.35, 0.8) {};
		\node [style=none] (15) at (6, -0) {$\ldots$};
		\node [style=none] (16) at (3.45, 0.8) {};
		\node [style=none] (17) at (3.45, -0.75) {};
		\node [style=rn,label={[rphase]right:$\pi$}] (18) at (3.45, -0) {};
		\node [ellipse, fill=White, draw=Black, minimum height=0.5 cm, minimum width=3 cm, style=none] (19) at (-2, -0.4) {$G$};
		\node [style=none] (20) at (-3, 0.5) {};
		\node [style=none] (21) at (-3, -0.5) {};
		\node [style=none] (22) at (-1, 0.5) {};
		\node [style=none] (23) at (-1, -0.5) {};
		\node [style=none] (24) at (-2, 0.2) {$\ldots$};
		\node [rectangle,fill=white,draw=black,minimum height=13pt,minimum width=14pt,inner sep=0pt] (25) at (-3, 0.2) {$U_1$};
		\node [rectangle,fill=white,draw=black,minimum height=13pt,minimum width=14pt,inner sep=0pt] (26) at (-1, 0.2) {$U_n$};
		\node [rectangle,fill=white,draw=black,minimum height=13pt,minimum width=14pt,inner sep=0pt] (27) at (0.75, 0.5) {$U_1$};
		\node [rectangle,fill=white,draw=black,minimum height=13pt,minimum width=24pt,inner sep=0pt] (28) at (2.1, 0.5) {$U_{v-1}$};
		\node [rectangle,fill=white,draw=black,minimum height=13pt,minimum width=14pt,inner sep=0pt] (29) at (3.45, 0.5) {$U_v$};
		\node [rectangle,fill=white,draw=black,minimum height=13pt,minimum width=24pt,inner sep=0pt] (30) at (4.6, 0.5) {$U_{v+1}$};
		\node [rectangle,fill=white,draw=black,minimum height=13pt,minimum width=14pt,inner sep=0pt] (31) at (6.35, 0.5) {$U_n$};
		\node [style=none] (32) at (1.35, 0.5) {$\ldots$};
		\node [style=none] (33) at (5.55, 0.5) {$\ldots$};
	\end{pgfonlayer}
	\begin{pgfonlayer}{edgelayer}
		\draw (0.center) to (5.center);
		\draw (4.center) to (7.center);
		\draw (11.center) to (12.center);
		\draw (14.center) to (13.center);
		\draw (16.center) to (17.center);
		\draw (20.center) to (21.center);
		\draw (22.center) to (23.center);
	\end{pgfonlayer}
 \end{tikzpicture}
\end{center}
where $\alpha_u = \theta_{uv}\pi$ for $u\in V\setminus\{v\}$. This equality holds by the definition of graph states, or, alternatively, by a double local complementation about $v$.

\textbf{Local complementation along an edge $\{v,w\}$}: Let $v,w\in V$ such that $\{v,w\}\in E$. Then
\begin{center}
 \begin{tikzpicture}
	\begin{pgfonlayer}{nodelayer}
		\node [style=none] (0) at (1, 0.5) {};
		\node [style=none] (1) at (1, -0.5) {};
		\node [ellipse, fill=White, draw=Black, minimum height=0.5 cm, minimum width=3 cm, style=none] (2) at (2, -0.4) {$G'$};
		\node [style=none] (3) at (0, -0) {$=$};
		\node [style=none] (4) at (3, -0.5) {};
		\node [style=none] (5) at (3, 0.5) {};
		\node [ellipse, fill=White, draw=Black, minimum height=0.5 cm, minimum width=3 cm, style=none] (6) at (-2, -0.4) {$G$};
		\node [style=none] (7) at (-3, 0.5) {};
		\node [style=none] (8) at (-3, -0.5) {};
		\node [style=none] (9) at (-1, 0.5) {};
		\node [style=none] (10) at (-1, -0.5) {};
		\node [style=none] (11) at (-2, 0.2) {$\ldots$};
		\node [rectangle,fill=white,draw=black,minimum height=14pt,minimum width=14pt,inner sep=0pt] (12) at (-3, 0.2) {$U_1$};
		\node [rectangle,fill=white,draw=black,minimum height=14pt,minimum width=14pt,inner sep=0pt] (13) at (-1, 0.2) {$U_n$};
		\node [rectangle,fill=white,draw=black,minimum height=14pt,minimum width=14pt,inner sep=0pt] (14) at (1, 0.2) {$U_1'$};
		\node [rectangle,fill=white,draw=black,minimum height=14pt,minimum width=14pt,inner sep=0pt] (15) at (3, 0.2) {$U_n'$};
		\node [style=none] (16) at (2, 0.2) {$\ldots$};
	\end{pgfonlayer}
	\begin{pgfonlayer}{edgelayer}
		\draw (0.center) to (1.center);
		\draw (5.center) to (4.center);
		\draw (7.center) to (8.center);
		\draw (9.center) to (10.center);
	\end{pgfonlayer}
 \end{tikzpicture}
\end{center}
where
\[
 U_j' = \begin{cases} U_j\circ R_Z\circ R_X^{-1}\circ R_Z & \text{if }j\in\{v,w\} \\ U_j\circ Z & \text{if } j\in V\setminus\{v,w\}\wedge(\{j,v\}\in E \vee \{j,w\}\in E) \\ U_j & \text{otherwise} \end{cases}
\]
and $G' = (V,E')$ satisfies the following properties
\begin{mitem}
 \item $G' = ((G\star v)\star w)\star v = ((G\star w)\star v)\star w$;
 \item $\{v,w\}\in E'$;
 \item for $j\in V\setminus\{v,w\}$, $\{j,v\}\in E'\Leftrightarrow \{j,w\}\in E$ and $\{j,w\}\in E'\Leftrightarrow \{j,v\}\in E$, i.e.\ a vertex $j$ is adjacent to $v$ in $G'$ if and only if $j$ was adjacent to $w$ in $G$ and correspondingly with $v$ and $w$ exchanged;
 \item for $p,q\in V\setminus\{v,w\}$, let $P$ be the intersection of $p$'s neighbourhood with $\{v,w\}$, i.e.\ $v\in P$ if $\{p,v\}\in E$ and $w\in P$ if $\{p,w\}\in E$, and define $Q$ correspondingly. Then the edge $\{p,q\}$ is toggled if and only if $P,Q$ and $\emptyset$ are pairwise distinct.
\end{mitem}
This is an equivalence transformation because it consists of three subsequent local complementations on qubits, but it is worth a separate mention because of non-obvious properties like the symmetry under interchange of $v$ and $w$.

\subsection{Any stabilizer state diagram is equal to some GS-LC diagram}

From standard stabilizer quantum mechanics, we know that any stabilizer state is local Clifford-e\-quiv\-a\-lent to some graph state (cf. theorem \ref{thm:stabilizer_graph_state}). In the following, we will show that a corresponding statement holds in the \ZX-calculus: any stabilizer state diagram is equal to some GS-LC diagram. The proof of this result is strongly inspired by Anders and Briegel's proof that stabilizer quantum mechanics can be simulated efficiently on classical computers using a representation based on graph states and local Clifford operators \cite{anders_fast_2005}.

First, note that \ZX-calculus diagrams can be decomposed into five types of basic elements.

\begin{lem}\label{lem:basic_elements}
 Any \ZX-calculus diagram can be written as a combination of four basic spiders
 \begin{equation}\label{eq:basic_spiders}
  \begin{tikzpicture}[baseline=0 cm]
	\begin{pgfonlayer}{nodelayer}
		\node [style=gn] (0) at (-3, -0) {};
		\node [style=gn] (1) at (1, -0) {};
		\node [style=gn] (2) at (-1, -0) {};
		\node [style=gn] (3) at (3, -0) {};
		\node [style=none] (4) at (-3.25, 0.25) {};
		\node [style=none] (5) at (-2.75, 0.25) {};
		\node [style=none] (6) at (-3, -0.25) {};
		\node [style=none] (7) at (1, 0.25) {};
		\node [style=none] (8) at (0.75, -0.25) {};
		\node [style=none] (9) at (1.25, -0.25) {};
		\node [style=none] (10) at (-1, -0.25) {};
		\node [style=none] (11) at (3, 0.25) {};
	\end{pgfonlayer}
	\begin{pgfonlayer}{edgelayer}
		\draw [bend right=15] (4.center) to (0);
		\draw [bend left=15] (5.center) to (0);
		\draw (0) to (6.center);
		\draw (7.center) to (1);
		\draw [bend right=15] (1) to (8.center);
		\draw [bend left=15] (1) to (9.center);
		\draw (2) to (10.center);
		\draw (11.center) to (3);
	\end{pgfonlayer}
  \end{tikzpicture}
 \end{equation}
 and the 24 single qubit Clifford unitaries.
\end{lem}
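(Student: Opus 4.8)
The plan is to argue by structural decomposition of \ZX-calculus diagrams into their generating pieces and then to rewrite each piece, using only the spider rule and the colour-change rule, in terms of the four basic green spiders and the single qubit Clifford operators. Viewed as morphisms in a (dagger) compact closed category, all stabilizer \ZX-calculus diagrams are obtained by sequential and parallel composition from the following generators: green spiders and red spiders with phase in $\{0,\pm\pi/2,\pi\}$, the Hadamard node, the identity wire, the wire crossing, and the cup and cap. Since a diagram is nothing but a composition of its generators, it suffices to express each generator as a combination of the $1\to2$ green node (split), the $2\to1$ green node (merge), the $0\to1$ green node (the state $\ket{+}$, i.e.\ the unit) and the $1\to0$ green node (the effect $\bra{+}$, i.e.\ the counit), together with elements of $C_1$.

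First I would handle the green nodes. A phase-zero green spider with $m$ inputs and $n$ outputs is obtained by applying the spider rule in reverse: compose $m-1$ binary merge nodes into a tree to fuse the $m$ inputs onto a single wire --- or, if $m=0$, place one unit node $\ket{+}$ --- and dually compose $n-1$ binary split nodes into a tree, or a single counit node $\bra{+}$ when $n=0$; gluing these two along their shared middle wire and applying the spider rule collapses the whole construction back to the desired spider. The case $m=n=1$ is the bare wire by the identity rule, and $m=n=0$ is a scalar, which we are entitled to ignore. In particular the cup (a $0\to2$ green node) and the cap (a $2\to0$ green node) are of this form, so the compact structure is covered; wire crossings need no rewriting since only the topology of a diagram matters. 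A green spider carrying a nonzero Clifford phase $\alpha$ is then the phase-zero green spider with one extra output on which we place the single qubit Clifford phase gate $R_Z$, $Z$, or $R_Z^{-1}$ according to whether $\alpha$ is $\pi/2$, $\pi$, or $-\pi/2$, fused in by the spider rule.

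The remaining generators reduce to this case. The Hadamard node is literally one of the 24 elements of $C_1$. A red spider with $m$ inputs, $n$ outputs and phase $\alpha$ equals, by the colour-change rule, the corresponding green spider with a Hadamard node inserted on each of its $m+n$ legs; substituting the decomposition of the green spider just obtained, and recalling that each Hadamard is a single qubit Clifford, exhibits the red spider as a combination of the four basic green spiders and elements of $C_1$. Having thus rewritten every generator, the claim for an arbitrary diagram follows.

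I expect the only real difficulty to be bookkeeping rather than mathematics: one must pin down precisely what ``a \ZX-calculus diagram'' is --- namely an arrow in the free compact closed category on these generators --- so that ``decompose the diagram into its generators'' is a legitimate move, and one must treat the degenerate low-arity spiders (with $0$, $1$, or $2$ legs) and the scalar subdiagrams explicitly. Once that framework is fixed, every remaining step is a routine application of the spider rule and the colour-change rule.
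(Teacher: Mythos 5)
Your proposal is correct and follows essentially the same route as the paper: pull apart phase-zero green spiders via the spider rule, absorb cups and caps as low-arity green spiders, factor out nonzero phases as single qubit phase operators, and convert red spiders to green ones by the colour-change rule with Hadamards (which lie in $C_1$) on each leg. Your version merely spells out the details the paper leaves implicit, such as the binary-tree construction, the degenerate arities, and the wire crossings.
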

\begin{proof}
 Using the spider law, any green spider with phase 0 can be ``pulled apart'' into a diagram composed of the four elements given above. By the identity law, cups
 \begin{tikzpicture}
	\begin{pgfonlayer}{nodelayer}
		\node [style=none] (0) at (-0.25, 0.25) {};
		\node [style=none] (1) at (0, 0) {};
		\node [style=none] (2) at (0.25, 0.25) {};
	\end{pgfonlayer}
	\begin{pgfonlayer}{edgelayer}
		\draw [bend right=45] (0.center) to (1.center);
		\draw [bend left=45] (2.center) to (1.center);
	\end{pgfonlayer}
 \end{tikzpicture}
 and caps 
 \begin{tikzpicture}
	\begin{pgfonlayer}{nodelayer}
		\node [style=none] (0) at (0.25, -0) {};
		\node [style=none] (1) at (0, 0.25) {};
		\node [style=none] (2) at (-0.25, -0) {};
	\end{pgfonlayer}
	\begin{pgfonlayer}{edgelayer}
		\draw [bend right=45] (0.center) to (1.center);
		\draw [bend left=45] (2.center) to (1.center);
	\end{pgfonlayer}
 \end{tikzpicture}
 can be replaced with green spiders.

 Any red spider of phase 0 can be turned into a green spider using the colour change law, introducing a Hadamard operator on each leg. Thus any red spider can be written as a combination of Hadamard operators and the basic green spiders.

 If a red or green spider has a non-zero phase, it can be decomposed into a phase 0 spider and a single qubit phase operator. Therefore, any diagram in the \ZX-calculus for stabilizer quantum mechanics can be written as a combination of the four spiders given in \eqref{eq:basic_spiders} and the 24 single qubit Clifford unitaries.
\end{proof}

\begin{thm}\label{thm:ZX_GS-LC}
 Any stabilizer state diagram is equal to some GS-LC diagram within the \ZX-calculus.
\end{thm}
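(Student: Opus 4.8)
The plan is to decompose the diagram into its basic pieces via Lemma~\ref{lem:basic_elements} and then build it back up one element at a time, maintaining at every stage the invariant that the sub-diagram constructed so far is equal to a GS-LC diagram. Since the full diagram is a state, it has no inputs, so in a topological order from the bottom the only elements with no inputs are copies of the green state $\ket{+}$; every other element --- a single-qubit Clifford, the green ``multiply'' $2\to 1$ spider, the green ``comultiply'' $1\to 2$ spider, or the green effect $\bra{+}$ --- is applied to outputs of the sub-diagram already built. It is convenient first to use a few \ZX\ rewrites to replace this generating set by the equivalent list: tensoring in $\ket{+}$, applying a single-qubit Clifford to one output, applying a controlled-$Z$ to two outputs (built from green and red spiders as in the controlled-NOT identity of Section~\ref{s:properties}), and post-composing one output with a post-selected $Z$-basis effect (the multiply spider being a controlled-NOT followed by such an effect). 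These are exactly the primitives of Anders and Briegel's simulation algorithm~\cite{anders_fast_2005}, and the argument is essentially a transcription of that algorithm into the \ZX-calculus.

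The base case and two inductive cases are immediate. A fresh $\ket{+}$ is the one-vertex graph state with no edges and trivial vertex operator. Given a GS-LC diagram and a single-qubit Clifford $U$ applied to output $v$, by Lemma~\ref{lem:local_Clifford_normal_form} the composite $U\circ U_v$ with the existing vertex operator is again one of the $24$ single-qubit Cliffords, so replacing $U_v$ by $U\circ U_v$ yields a GS-LC diagram. The substance is in the other two cases: a controlled-$Z$ on outputs $v,w$ and a post-selected $Z$-effect on an output $v$. Here the obstruction is that the element being attached does not commute through the vertex operators $U_v$ (and $U_w$), so it cannot simply be made to act on the underlying graph state. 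The strategy, following Anders--Briegel, is to first apply the equivalence transformations of Section~\ref{s:equivalence_GS-LC} --- local complementation about a vertex, local complementation along an edge, and the fixpoint operation --- to bring the GS-LC diagram into an equal GS-LC diagram in which $U_v$ (respectively $U_v$ and $U_w$) lies in a small, explicitly enumerated subset of $C_1$; for such ``simple'' vertex operators the controlled-$Z$ or $Z$-effect does commute onto the graph, where it merely toggles the edge $\{v,w\}$ and adjusts a few local Cliffords (respectively deletes $v$, complements or toggles edges within its neighbourhood, and updates the neighbours' vertex operators). One then tidies the result back into GS-LC form using the spider, $\pi$-copy, bialgebra, Hopf and colour-change rules together with the Euler decomposition.

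The main obstacle is precisely this last family of cases. One must show that local complementations always suffice to drive the relevant vertex operators into the simple subset, enumerate the finitely many resulting configurations, and verify --- entirely inside the \ZX-calculus, not merely in the stabilizer formalism --- how each configuration updates the graph, its adjacency structure on the affected neighbourhoods, and the surrounding vertex operators. This is the \ZX-internal re-derivation of Anders and Briegel's update rules, and it is where essentially all of the bookkeeping lives; the remaining induction simply glues the cases together. A pleasant byproduct is that the proof is constructive, giving an explicit procedure that rewrites any stabilizer state diagram into a GS-LC diagram using only the rules of the calculus.
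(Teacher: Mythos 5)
Your proposal is correct in outline and follows essentially the same route as the paper: decompose the diagram into the generators of Lemma~\ref{lem:basic_elements}, induct upward from $\ket{+}$ while maintaining the invariant that the partial diagram is equal to a GS-LC diagram, and in the hard cases use local complementations about the operand vertex and a swapping partner to normalize the vertex operators until the incoming element can be absorbed into the graph --- exactly the Anders--Briegel strategy \cite{anders_fast_2005} that the paper transcribes. The one genuine difference is your choice of generators: you trade the paper's two green spiders and the green effect for controlled-$Z$ and a post-selected $Z$-basis effect, whereas the paper keeps the spiders as primitives and proves one absorption lemma per generator (Lemmas~\ref{lem:basic_state}--\ref{lem:join}). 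The translation you sketch (multiply $=$ controlled-NOT followed by a $Z$-effect, and implicitly comultiply $=$ controlled-NOT applied to a fresh $\ket{0}$, with $\bra{+}=\bra{0}\circ H$) is sound, and it buys a more literal match with Anders and Briegel's published update rules at the cost of a preliminary rewriting step; the paper's version avoids that step but must then handle the join spider directly, which is where its longest case analysis lives. Either way the substance is the bookkeeping you correctly identify but do not carry out: verifying inside the calculus that local complementations always reach a commuting vertex operator, including the degenerate configurations where an operand vertex has no non-operand neighbour (where the paper's Lemmas~\ref{lem:measurement} and~\ref{lem:join} do most of their work, and where the outcome can be the zero diagram rather than a GS-LC diagram --- a case your statement of the invariant should also admit). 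As submitted this is a faithful and workable plan rather than a complete proof.
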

\begin{proof}
 For clarity, the proof has been split into various lemmas, which are stated and proved after the theorem.

 By lemma \ref{lem:basic_elements}, any \ZX-calculus diagram can be written in terms of five basic elements. Recall that a \ZX-calculus diagram represents a quantum state if and only if it has no inputs. Of the basic elements given in lemma \ref{lem:basic_elements}, \state{gn} is the only one with no inputs. Thus any diagram representing a state must contain at least one such component (or a cup, which can be replaced by spiders). Clearly \state{gn} is a GS-LC diagram. We can now proceed by induction: If, for each of the basic components, applying it to a GS-LC diagram yields a diagram that is equal to some GS-LC diagram, then any stabilizer state diagram is equal to some GS-LC diagram. Lemmas \ref{lem:basic_state}, \ref{lem:LC}, \ref{lem:measurement}, \ref{lem:split} and \ref{lem:join} show these inductive steps. Therefore any stabilizer state diagram can be decomposed into the basic elements and then converted, step by step, into a GS-LC diagram.
\end{proof}

\begin{lem}\label{lem:basic_state}
 A stabilizer state diagram which consists of a GS-LC diagram and \state{gn} is equal to some GS-LC diagram within the ZX-calculus.
\end{lem}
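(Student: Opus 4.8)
The plan is to observe that this inductive step is essentially immediate, because \state{gn} is itself already a GS-LC diagram: by Proposition~\ref{prop:graph_state} the graph state of the one-vertex graph with empty edge set is exactly a single green node with one output, i.e.\ \state{gn}, and the identity operator lies in $C_1$ (it is the $\alpha=\beta=0$ instance of the normal form of Lemma~\ref{lem:local_Clifford_normal_form}). So \state{gn} equals the one-qubit graph state on a single vertex carrying the identity vertex operator. Moreover, since a GS-LC diagram has no inputs, the only way \state{gn} can occur ``together with'' such a diagram in the inductive construction of Theorem~\ref{thm:ZX_GS-LC} is as a disconnected piece placed alongside it, i.e.\ by parallel composition.

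The remaining work is then just to check that GS-LC form is preserved under this parallel composition. Given an $n$-qubit GS-LC diagram with underlying graph $G=(V,E)$ and vertex operators $U_v$ ($v\in V$), placing \state{gn} beside it yields a diagram whose underlying graph is the disjoint union of $G$ with one new isolated vertex $w$, whose vertex operators are $U_v$ for $v\in V$ together with the identity on $w$, and which therefore fits Definition~\ref{dfn:GS-LC} directly. The one point to spell out is that the graph state diagram of a disjoint union of two graphs is the parallel composition of the two graph state diagrams; this is clear from Proposition~\ref{prop:graph_state}, since the absence of edges between the two vertex sets means no Hadamard nodes connect the corresponding green nodes.

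Consequently no rewriting is required here, so there is no real obstacle in this lemma: it is the trivial, base-style step of the induction in Theorem~\ref{thm:ZX_GS-LC}. The genuinely substantial inductive steps are the other four, where a wire of an existing GS-LC diagram is fed into a basic element and GS-LC form must be actively restored — attaching a single-qubit Clifford (Lemma~\ref{lem:LC}), an effect (Lemma~\ref{lem:measurement}), or a split or join node (Lemmas~\ref{lem:split} and~\ref{lem:join}) — which is where the local complementation, fixpoint and edge-local-complementation transformations of Section~\ref{s:equivalence_GS-LC} do the heavy lifting.
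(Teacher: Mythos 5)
Your proof is correct and takes essentially the same route as the paper: the paper's own argument is simply that adding an isolated vertex to a graph yields another graph, so placing \state{gn} beside a GS-LC diagram yields another GS-LC diagram. You have merely spelled out the details (identity vertex operator, disjoint union of graphs) that the paper leaves implicit.
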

\begin{proof}
 Adding a vertex to a graph yields another graph, so adding \state{gn} to a graph state diagram yields another graph state diagram. The same holds for GS-LC diagrams.
\end{proof}

\begin{lem}\label{lem:LC}
 A stabilizer state diagram which consists of a single qubit Clifford unitary applied to some GS-LC diagram is equal to a GS-LC diagram within the ZX-calculus.
\end{lem}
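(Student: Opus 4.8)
The plan is to treat this as the easiest of the five inductive steps needed for Theorem~\ref{thm:ZX_GS-LC}, by noting that post-composing a single qubit Clifford onto one output of a GS-LC diagram does not touch the underlying graph state at all; it only alters the vertex operator on that one qubit.

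Concretely, I would write the given GS-LC diagram as a graph state $\ket{G}$ with $G=(V,E)$ and vertex operators $U_w\in C_1$ for $w\in V$, and suppose we stack an arbitrary single qubit Clifford $U\in C_1$ on the output corresponding to vertex $v$. The resulting diagram is then literally the same graph state $\ket{G}$, now with vertex operators $U_w$ for $w\neq v$ and $U\circ U_v$ for $w=v$. Since $C_1$ is a group, the composite $U\circ U_v$ is again a single qubit Clifford operator, so by Definition~\ref{dfn:GS-LC} the diagram is already a GS-LC diagram; if one wants the vertex operators presented in the explicit form of~\eqref{eq:local_Clifford_normal_form}, one just rewrites $U\circ U_v$ into one of the $24$ normal forms using the spider rule, the $\pi$-commutation rule and the Euler decomposition rule, exactly as in the proof of Lemma~\ref{lem:local_Clifford_normal_form}.

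I do not expect any genuine obstacle here: the step is immediate once the definition of GS-LC diagram is unfolded, and it amounts to the observation that the ``LC'' part of ``GS-LC'' is closed under composition with single qubit Cliffords. The real content of Theorem~\ref{thm:ZX_GS-LC} lies instead in the remaining inductive steps, Lemmas~\ref{lem:measurement}, \ref{lem:split} and \ref{lem:join}, where a red spider or a multi-legged green spider is attached and the graph itself has to be rebuilt (via the fixpoint operation and local complementation along edges from Section~\ref{s:equivalence_GS-LC}) in order to absorb the new component.
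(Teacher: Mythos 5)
Your proof is correct and takes the same route as the paper, which simply notes that the lemma follows directly from Definition~\ref{dfn:GS-LC}; your elaboration (the new Clifford merges with the existing vertex operator, and $C_1$ is closed under composition) is exactly the content of that one-line argument.
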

\begin{proof}
 This follows directly from definition \ref{dfn:GS-LC}, the definition of GS-LC diagrams.
\end{proof}

\begin{lem}\label{lem:measurement}
 A stabilizer state diagram which consists of \effect{gn} applied to some GS-LC diagram is equal to a GS-LC diagram or to the zero diagram within the ZX-calculus.
\end{lem}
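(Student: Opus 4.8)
The plan is to read \effect{gn} as post-selection onto $\bra{+}$ and to combine it with the vertex operator $U_v$ on the output $v$ to which it is attached. Then $\bra{+}U_v$ is, up to a global scalar, one of the six single-qubit stabilizer effects $\bra{0},\bra{1},\bra{+},\bra{-},\bra{+i},\bra{-i}$; these split into a $Z$-type pair $\{\bra{0},\bra{1}\}$, an $X$-type pair $\{\bra{+},\bra{-}\}$ and a $Y$-type pair $\{\bra{+i},\bra{-i}\}$, and within a pair the two members differ by appending a Pauli, i.e.\ by a $\pi$-labelled node that can be commuted around and carried along as an ``outcome bit''. Using Lemma~\ref{lem:local_Clifford_normal_form} and the $\pi$-commutation rule to rewrite $U_v$, I would first reduce to these three types.

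The $Z$-type case is the base case. Writing the effect as the red node \effect{rn} (carrying an extra $\pi$ label in the $\bra{1}$ branch), the copying rule lets it pass through the green vertex node $v$ onto each of its Hadamard-edges; together with the colour-change, spider and identity rules this removes $v$ and all incident edges, and in the $\bra{1}$ branch leaves a $Z$ (green $\pi$) on each former neighbour, which Lemma~\ref{lem:local_Clifford_normal_form} absorbs into that neighbour's vertex operator. What remains is a GS-LC diagram on $n-1$ qubits, and in this case it is never the zero diagram.

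The $Y$-type case reduces to the $Z$-type case: first apply the equivalence transformation ``local complementation about $v$'' from Section~\ref{s:equivalence_GS-LC} (valid by Theorem~\ref{thm:Van_den_Nest}), which appends an $R_X^{\pm\pi/2}$ to $U_v$; since this rotation carries the $Y$-eigenbasis onto the computational basis, the effect becomes $Z$-type on the new vertex operator and the base case applies. For the $X$-type case, if $v$ is isolated its vertex carries $U_v\ket{+}$, which the reduced effect $\bra{+}$ maps to a nonzero scalar (delete $v$) and $\bra{-}$ annihilates --- this is the one and only source of the zero diagram. If instead $v$ has a neighbour $b$, I would apply ``local complementation about $b$'' followed by ``local complementation about $v$'': the first appends $R_Z^{\pm\pi/2}$ to $U_v$ (turning the $X$-type effect into a $Y$-type one) and the second appends $R_X^{\pm\pi/2}$ (turning it into a $Z$-type one), after which the base case again applies; all the extra local Cliffords produced along the way are, by Definition~\ref{dfn:GS-LC}, legitimate vertex operators, and the outcome bit travels along as $Z$-corrections on neighbours.

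The main obstacle is the bookkeeping that glues these cases together: one must verify that the prescribed sequences of local complementations really do carry each effect-type to the next, tracking every sign and every $\pm\pi/2$ and which Pauli correction lands on which vertex; check that after the copying step the diagram is back in the precise graph-state-plus-vertex-operators shape of Definition~\ref{dfn:GS-LC} rather than some intermediate form; and confirm that the zero diagram occurs exactly in the isolated-$v$, $\bra{-}$ branch and nowhere else.
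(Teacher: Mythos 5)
Your proposal is correct and follows essentially the same route as the paper: handle the isolated operand vertex by direct enumeration (with $\ket{-}$ the unique zero case), and otherwise use local complementations about the operand vertex and a neighbour to normalize things so that the effect becomes a $Z$-basis effect, which the copy and colour-change rules push through the graph-state vertex, deleting it and leaving Pauli corrections on the neighbours. The only difference is presentational --- the paper argues that the two rotations generated by these local complementations give all of $C_1$ and drives the vertex operator to $H$, whereas you classify the combined effect $\bra{+}U_v$ into $X$/$Y$/$Z$ types and give the explicit two-step reduction, which amounts to the same thing.
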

\begin{proof}
 Call the vertex of the GS-LC diagram to which the post-selected measurement \effect{gn} is applied the \emph{operand vertex}. There are two cases.

 \textit{The operand vertex has no neighbours}: There are six single qubit pure stabilizer states. If the operand vertex is in state \state{gn,label={[gphase]right:$\pi$}}, the result of the measurement is zero. Otherwise the measurement operator \effect{gn} combines with the single qubit state to a non-zero global factor, which can be ignored.

 \textit{The operand vertex has at least one neighbour}: It is well known that Z-basis measurements on graph states are easy.

 In the ZX-calculus, the Z-basis states are denoted (somewhat counter-intuitively) by red effects: \effect{rn} denotes $\ket{0}$ and \effect{rn,label={[rphase]right:$\pi$}} represents $\bra{1}$. By the copy rule,
  \begin{center}
   \begin{tikzpicture}
	\begin{pgfonlayer}{nodelayer}
		\node [style=rn] (0) at (-2, 1) {};
		\node [style=gn] (1) at (-2, 0.5) {};
		\node [style=Hadamard] (2) at (-2.5, -0) {};
		\node [style=Hadamard] (3) at (-1.5, -0) {};
		\node [style=none] (4) at (-2, -0) {$\ldots$};
		\node [style=none] (5) at (-2.5, -0.5) {};
		\node [style=none] (6) at (-1.5, -0.5) {};
		\node [style=none] (7) at (-1, -0) {$=$};
		\node [style=rn] (8) at (-0.5, 0.5) {};
		\node [style=rn] (9) at (0.5, 0.5) {};
		\node [style=Hadamard] (10) at (-0.5, -0) {};
		\node [style=Hadamard] (11) at (0.5, -0) {};
		\node [style=none] (12) at (0, -0) {$\ldots$};
		\node [style=none] (13) at (-0.5, -0.5) {};
		\node [style=none] (14) at (0.5, -0.5) {};
		\node [style=none] (15) at (1, -0) {$=$};
		\node [style=gn] (16) at (1.5, 0.25) {};
		\node [style=gn] (17) at (2.5, 0.25) {};
		\node [style=none] (18) at (2, 0.25) {$\ldots$};
		\node [style=none] (19) at (1.5, -0.25) {};
		\node [style=none] (20) at (2.5, -0.25) {};
	\end{pgfonlayer}
	\begin{pgfonlayer}{edgelayer}
		\draw (0) to (1);
		\draw [bend right=15] (1) to (2);
		\draw (2) to (5.center);
		\draw [bend left=15] (1) to (3);
		\draw (3) to (6.center);
		\draw (8) to (10);
		\draw (10) to (13.center);
		\draw (9) to (11);
		\draw (11) to (14.center);
		\draw (16) to (19.center);
		\draw (17) to (20.center);
	\end{pgfonlayer}
   \end{tikzpicture}
  \end{center}
  and by the $\pi$-copy rule, the same holds for \effect{rn,label={[rphase]right:$\pi$}}. Thus if the vertex operator of the operand vertex is
  \begin{center}
   \begin{tikzpicture}[baseline=-0.1cm]
	\begin{pgfonlayer}{nodelayer}
		\node [style=Hadamard] (0) at (0, 0) {};
		\node [style=none] (1) at (0, 0.5) {};
		\node [style=none] (2) at (0, -0.5) {};
	\end{pgfonlayer}
	\begin{pgfonlayer}{edgelayer}
		\draw (0.center) to (1);
		\draw (2.center) to (1);
	\end{pgfonlayer}
   \end{tikzpicture}
   $\quad$ or $\quad$
   \begin{tikzpicture}[baseline=-0.1cm]
	\begin{pgfonlayer}{nodelayer}
		\node [style=Hadamard] (0) at (0, 0.2) {};
		\node [style=none] (1) at (0, 0.5) {};
		\node [style=none] (2) at (0, -0.5) {};
		\node [style=rn,label={[rphase]right:$\pi$}] (3) at (0, -0.25) {};
	\end{pgfonlayer}
	\begin{pgfonlayer}{edgelayer}
		\draw (0.center) to (1);
		\draw (2.center) to (1);
	\end{pgfonlayer}
   \end{tikzpicture}
  \end{center}
  the measured vertex is simply removed from the graph state.

  Otherwise, we can pick one neighbour of the operand vertex; following \cite{anders_fast_2005}, this neighbour will be called the \emph{swapping partner}. A local complementation about the operand vertex adds \phase{rn,label={[rphase]right:$\pi/2$}} to its vertex operator. A local complementation about the swapping partner adds \phase{gn,label={[gphase]right:$-\pi/2$}} to the vertex operator on the operand vertex. Now these two single qubit operators together generate all of $C_1$. Note that local complementations about the operand vertex or its swapping partner do not remove the edge between these two vertices. Therefore, after each local complementation, the operand vertex still has a neighbour, enabling further local complementations. Hence it is always possible to change the vertex operator on the operand vertex to \Hadamard using local complementations. Then, the measurement is easy.
\end{proof}

\begin{lem}\label{lem:split}
 A stabilizer state diagram which consists of \splitnode applied to some GS-LC diagram is equal to a GS-LC diagram within the ZX-calculus.
\end{lem}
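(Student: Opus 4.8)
The plan is to imitate the proof of Lemma~\ref{lem:measurement}: first neutralise, using local complementations, the vertex operator on the qubit being split, and then perform the split on what is effectively a bare graph-state vertex. Write $v$ for the vertex of the GS-LC diagram whose output wire is fed into \splitnode -- the \emph{operand vertex} -- and let $U_v\in C_1$ be its vertex operator. I would distinguish two cases according to whether $v$ has a neighbour in the underlying graph.

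\textbf{The operand vertex has at least one neighbour.} As in Lemma~\ref{lem:measurement}, fix a neighbour $w$ of $v$ as a swapping partner. A local complementation about $v$ adds a red $\pi/2$ phase to $U_v$, and a local complementation about $w$ adds a green $-\pi/2$ phase to $U_v$; since these two single-qubit operators generate all of $C_1$, and since neither kind of local complementation removes the edge $\{v,w\}$ -- so the procedure never stalls for want of a neighbour -- a suitable finite sequence of them turns the diagram into an equal GS-LC diagram in which $U_v$ is the identity. Now \splitnode is applied directly to the green graph-state node of $v$, so by the spider rule it merges with that node, leaving a green node that bears the Hadamard edges to $v$'s neighbours together with two bare output legs. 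The final step is the following ``pendant-vertex'' rewrite: such a node equals the diagram obtained by attaching to $v$ a fresh graph-state vertex $w'$, joined to $v$ by a single Hadamard edge and carrying the vertex operator \Hadamard on its output, with the second of the two output legs now issuing from $w'$. This holds because in the rewritten diagram $w'$ has exactly two incident edges and zero phase, so it may be removed by the identity rule; the two Hadamard boxes thereby placed in series then cancel (the Hadamard is self-inverse), reconnecting $w'$'s output directly to the node of $v$ and recovering the original node. The right-hand side is a GS-LC diagram on $n+1$ vertices, which is the required inductive step.

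\textbf{The operand vertex has no neighbour.} Then the diagram is just \splitnode applied to $U_v$ applied to \state{gn}, i.e.\ to one of the six single-qubit stabilizer states, so this case is a routine finite check. One verifies directly -- using the copy rule for the $Z$-eigenstates, the spider rule for the $X$-eigenstates, and the easy single-qubit completeness behind Lemma~\ref{lem:local_Clifford_normal_form} to dispose of $U_v$ (note that \state{gn} is fixed by $R_X$, and a green phase can be moved through \splitnode by the spider rule) -- that in each case the result equals a GS-LC diagram on two vertices, either two disconnected vertices or the single-edge graph state, with suitable single-qubit Clifford vertex operators.

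The main obstacle is recognising that one cannot simply split the green graph-state node and read off a GS-LC diagram: the pendant-vertex rewrite is valid only once $U_v$ has been made trivial, because for $U_v\neq I$ the operator $U_v$ acts jointly on the two wires produced by \splitnode, whereas a pendant graph-state vertex can only affect one of them. The real content of the proof is therefore the reduction to $U_v = I$ by local complementations, and it is precisely the unavailability of that reduction when $v$ has no neighbour that forces the separate -- but elementary -- treatment of the neighbourless case.
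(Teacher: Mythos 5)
Your proof is correct and takes essentially the same route as the paper's: the same case split on whether the operand vertex has a neighbour, the same reduction of the vertex operator to the identity via local complementations about the operand vertex and a swapping partner (exactly as in Lemma~\ref{lem:measurement}), the same pendant-vertex rewrite introducing a new vertex joined by a Hadamard edge and carrying a Hadamard vertex operator, and the same finite check over the six single-qubit stabilizer states in the neighbourless case. No gaps.
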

\begin{proof}
 As before, call the vertex we are acting upon the operand vertex. Again, there are two cases.

 \textit{The operand vertex has no neighbours}: In this case, the part of the diagram representing the non-operand qubits does not change, hence if it is in GS-LC form originally, it will remain that way. The overall diagram will be in GS-LC form if and only if \splitnode applied to the operand vertex can be transformed into a GS-LC diagram. Now, the six single qubit stabilizer states can be written as
  \begin{equation}\label{eq:single_qubit_states}
   \begin{tikzpicture}[baseline=0.1cm]
	\begin{pgfonlayer}{nodelayer}
		\node [style=gn] (0) at (-4.75, -0) {};
		\node [style=gn,label={[gphase]right:$\pi/2$}] (1) at (-3.25, -0) {};
		\node [style=gn,label={[gphase]right:$\pi$}] (2) at (-1.25, -0) {};
		\node [style=gn,label={[gphase]right:$-\pi/2$}] (3) at (0.5, -0) {};
		\node [style=rn] (4) at (2.75, -0) {};
		\node [style=rn,label={[rphase]right:$\pi$}] (5) at (4.25, -0) {};
		\node [style=none] (6) at (-4.75, 0.5) {};
		\node [style=none] (7) at (-3.25, 0.5) {};
		\node [style=none] (8) at (-1.25, 0.5) {};
		\node [style=none] (9) at (0.5, 0.5) {};
		\node [style=none] (10) at (2.75, 0.5) {};
		\node [style=none] (11) at (4.25, 0.5) {};
	\end{pgfonlayer}
	\begin{pgfonlayer}{edgelayer}
		\draw (6.center) to (0);
		\draw (7.center) to (1);
		\draw (8.center) to (2);
		\draw (9.center) to (3);
		\draw (10.center) to (4);
		\draw (11.center) to (5);
	\end{pgfonlayer}
   \end{tikzpicture}
  \end{equation}
  By the spider law, the identity law, and the self-inverse property of the Hadamard operator,
  \begin{center}
   \begin{tikzpicture}
	\begin{pgfonlayer}{nodelayer}
		\node [style=gn] (0) at (-2, -0) {};
		\node [style=gn,label={[gphase]right:$\alpha$}] (1) at (-2, -0.5) {};
		\node [style=gn] (2) at (0, -0.5) {};
		\node [style=gn] (3) at (1, -0.5) {};
		\node [style=gn,label={[gphase]right:$\alpha$}] (4) at (1, -0) {};
		\node [style=none] (5) at (-2.5, 0.5) {};
		\node [style=none] (6) at (-1.5, 0.5) {};
		\node [style=none] (7) at (0, 0.5) {};
		\node [style=none] (8) at (1, 0.5) {};
		\node [style=Hadamard] (9) at (0.5, -0.5) {};
		\node [style=Hadamard] (10) at (0, -0) {};
		\node [style=none] (11) at (-0.75, -0) {$=$};
	\end{pgfonlayer}
	\begin{pgfonlayer}{edgelayer}
		\draw [bend right=15] (5.center) to (0);
		\draw [bend left=15] (6.center) to (0);
		\draw (0) to (1);
		\draw (7.center) to (10);
		\draw (10) to (2);
		\draw (2) to (9);
		\draw (9) to (3);
		\draw (8.center) to (4);
		\draw (4) to (3);
	\end{pgfonlayer}
   \end{tikzpicture}
  \end{center}
  for $\alpha\in\{0,\pi/2,\pi,-\pi/2\}$. Using the copy law and the $\pi$-copy law, for $\beta\in\{0,\pi\}$,
  \begin{center}
   \begin{tikzpicture}
	\begin{pgfonlayer}{nodelayer}
		\node [style=gn] (0) at (-3, -0) {};
		\node [style=rn,label={[rphase]right:$\beta$}] (1) at (-3, -0.5) {};
		\node [style=none] (2) at (-3.5, 0.5) {};
		\node [style=none] (3) at (-2.5, 0.5) {};
		\node [style=none] (4) at (-1.75, -0) {$=$};
		\node [style=rn,label={[rphase]right:$\beta$}] (5) at (-1, -0.25) {};
		\node [style=rn,label={[rphase]right:$\beta$}] (6) at (0, -0.25) {};
		\node [style=none] (7) at (-1, 0.25) {};
		\node [style=none] (8) at (0, 0.25) {};
		\node [style=none] (9) at (1.25, -0) {$=$};
		\node [style=gn] (10) at (2, -0.75) {};
		\node [style=Hadamard] (11) at (2, -0.25) {};
		\node [style=rn,label={[rphase]right:$\beta$}] (12) at (2, 0.25) {};
		\node [style=gn] (13) at (3, -0.75) {};
		\node [style=Hadamard] (14) at (3, -0.25) {};
		\node [style=rn,label={[rphase]right:$\beta$}] (15) at (3, 0.25) {};
		\node [style=none] (16) at (2, 0.6) {};
		\node [style=none] (17) at (3, 0.6) {};
	\end{pgfonlayer}
	\begin{pgfonlayer}{edgelayer}
		\draw [bend right=15] (2.center) to (0);
		\draw [bend left=15] (3.center) to (0);
		\draw (0) to (1);
		\draw (7.center) to (5);
		\draw (8.center) to (6);
		\draw (16.center) to (10);
		\draw (17.center) to (13);
	\end{pgfonlayer}
   \end{tikzpicture}
  \end{center}
  In each of these cases, the right hand side of the equation can easily be seen to be a GS-LC diagram.

 \textit{The operand vertex has at least one neighbour}: Note that
  \begin{center}
   \begin{tikzpicture}
	\begin{pgfonlayer}{nodelayer}
		\node [style=gn] (0) at (-1.25, 0.25) {};
		\node [style=none] (1) at (-1.25, -0.25) {};
		\node [style=none] (2) at (-1.75, 0.75) {};
		\node [style=none] (3) at (-0.75, 0.75) {};
		\node [style=none] (4) at (-0.25, 0.25) {$=$};
		\node [style=gn] (5) at (0.5, -0) {};
		\node [style=gn] (6) at (1.5, -0) {};
		\node [style=Hadamard] (7) at (1, -0) {};
		\node [style=Hadamard] (8) at (0.5, 0.5) {};
		\node [style=none] (9) at (0.5, 1) {};
		\node [style=none] (10) at (1.5, 1) {};
		\node [style=none] (11) at (1.5, -0.5) {};
	\end{pgfonlayer}
	\begin{pgfonlayer}{edgelayer}
		\draw [bend right=15] (2.center) to (0);
		\draw (0) to (1.center);
		\draw [bend right=15] (0) to (3.center);
		\draw (9.center) to (8);
		\draw (8) to (5);
		\draw (5) to (7);
		\draw (7) to (6);
		\draw (10.center) to (6);
		\draw (6) to (11.center);
	\end{pgfonlayer}
   \end{tikzpicture}
  \end{center}
  so if the vertex operator on the operand vertex is trivial, we just add a new vertex and edge to the graph. Now, as described in the proof for lemma \ref{lem:measurement}, we can use local complementations about the operand vertex and a swapping partner to change the vertex operator on the operand vertex to the identity. Thus whenever we apply \splitnode to a GS-LC diagram, the result is equal to some GS-LC diagram.
\end{proof}

\begin{lem}\label{lem:join}
 A stabilizer state diagram which consists of \joinnode applied to some GS-LC diagram is equal to a GS-LC diagram or the zero diagram within the ZX-calculus.
\end{lem}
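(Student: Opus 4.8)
The plan is to peel \joinnode apart into ingredients already under control. The key observation I would use is that \joinnode, viewed as a map from its two inputs to its output, equals the composite that first applies a Hadamard gate to one input leg, then a $\mathrm{CZ}$ gate between the two legs, and finally the effect \effect{gn} to that same leg. This identity is a two-move diagram rewrite: writing $\mathrm{CZ}$ in the \ZX-calculus as a green (one-to-two) node on each of the two wires joined by a Hadamard edge, the \effect{gn} merges with one of those green nodes by the spider rule and reduces it to a bare wire, and the two Hadamard gates then left in series on that wire cancel. Hence, if $D$ is the GS-LC diagram whose outputs $v$ and $w$ are fed into \joinnode, then \joinnode applied to $D$ equals: apply a Hadamard to output $w$ of $D$ (still GS-LC, by Lemma \ref{lem:LC}); then apply $\mathrm{CZ}$ between outputs $v$ and $w$; then apply \effect{gn} to output $w$ (a GS-LC diagram or the zero diagram, by Lemma \ref{lem:measurement}). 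So everything reduces to the auxiliary claim that $\mathrm{CZ}$ applied to two outputs of a GS-LC diagram is equal to a GS-LC diagram.

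To prove that claim I would argue as follows. Suppose first that both $v$ and $w$ have a neighbour in the current graph. Using the local-complementation equivalence transformations of section \ref{s:equivalence_GS-LC}, bring the vertex operators $U_v$ and $U_w$ into the subgroup of powers of $R_Z$: reduce $U_v$ first, by local complementations about $v$ and a fixed neighbour of $v$ — these never delete the edge joining them, so $v$ keeps a neighbour at every stage, and since $\langle R_X,R_Z\rangle = C_1$ this lets us cancel $U_v$ entirely — and then reduce $U_w$ in the same way. A power of $R_Z$ is diagonal in the computational basis, hence commutes with $\mathrm{CZ}$, so the $\mathrm{CZ}$ slides past $U_v\otimes U_w$ (and trivially past every other vertex operator) onto the bare graph state. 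There its two green nodes merge with the graph-state nodes of $v$ and $w$ by the spider rule, and its Hadamard edge either inserts the edge $\{v,w\}$ or, if that edge was already present, creates a pair of parallel Hadamard edges which the Hopf law removes (change the colour of one node, apply the Hopf rule, change the colour back). The outcome is a graph state with the edge $\{v,w\}$ toggled together with single-qubit Clifford vertex operators — still a GS-LC diagram. If instead one of $v,w$ (say $v$) has no neighbour, its state is $U_v\ket{+}$, one of the six single-qubit stabilizer states: if it is $\ket{0}$ the $\mathrm{CZ}$ does nothing; if it is $\ket{1}$ the $\mathrm{CZ}$ acts as $Z$ on $w$, handled by Lemma \ref{lem:LC}; and if it is a green node with a phase, the $\mathrm{CZ}$ node on $v$'s wire absorbs it, so $v$ becomes a pendant graph vertex adjacent to $w$ with a green-phase vertex operator, and one finishes by reducing $U_w$ exactly as above. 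The situation with $w$ unneighboured, or with both unneighboured, is symmetric or a small finite check.

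The step I expect to be the main obstacle is the interference between the two vertex-operator reductions in the generic case, and the point that defuses it is the same one behind the proofs of Lemmas \ref{lem:measurement} and \ref{lem:split}: a local complementation used to adjust one of the two operand vertices modifies the other operand's vertex operator only by post-composing a power of $R_Z$ — the other operand, if it is touched at all, lies in the neighbourhood of the complementation centre — so the property of being a power of $R_Z$, and hence of commuting with $\mathrm{CZ}$, is preserved once it has been achieved. One still has to choose swapping partners distinct from the other operand vertex wherever possible, and to absorb the genuinely degenerate configurations — an isolated edge $\{v,w\}$, or $v$ a pendant of $w$ — either into the unneighboured-vertex argument above or into a direct check on the few resulting one- and two-qubit stabilizer states; I do not expect these corner cases to cause real trouble, only bookkeeping.
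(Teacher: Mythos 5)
Your proposal is correct in its main line but takes a genuinely different route from the paper. The paper attacks \joinnode{} directly with a four-way case analysis on the neighbourhood structure of the two operand vertices (connected only to each other; one isolated; both with non-operand neighbours; one a pendant of the other), doing an explicit spider/Hopf computation in the degenerate cases and the local-complementation reduction in the generic one. You instead factor \joinnode{} as $\effect{gn}_w\circ \mathrm{CZ}_{vw}\circ H_w$ --- an identity that does check out, both diagrammatically and on matrices up to the ignored scalar --- which immediately offloads the first and last stages onto Lemmas \ref{lem:LC} and \ref{lem:measurement} and isolates the genuinely new content as ``$\mathrm{CZ}$ applied to two outputs of a GS-LC diagram is again GS-LC.'' That modularity is a real gain: the $\mathrm{CZ}$ statement is cleaner to state and reuse than the \joinnode{} one, and your generic-case argument (reduce both vertex operators to powers of $R_Z$, observing that adjusting one operand only ever post-composes green phases onto the other, then slide $\mathrm{CZ}$ onto the graph and toggle the edge via Hopf) is exactly the paper's case-3 technique transplanted to the simpler gate.

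One caution: the degenerate configurations are more than bookkeeping, and you should not expect to ``absorb'' the pendant case into the unneighboured-vertex argument as written. If $v$'s only neighbour is $w$, complementations about $v$ alone multiply $U_v$ only by powers of $R_X$, and a counting argument on cosets of $\langle R_X\rangle$ in $C_1$ shows this cannot always reach a power of $R_Z$; the way out (which the paper uses in its fourth case) is to note that a local complementation about $w$ joins $v$ to $w$'s other neighbours, returning you to the generic case, with an explicit fallback computation when it does not. Likewise the isolated-edge case provably cannot be handled by independently reducing $U_v$ and $U_w$ and does require a direct two-qubit derivation, as in the paper's first case. With those two finite checks written out, your argument closes.
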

\begin{proof}
 As usual, call the qubits to which \joinnode is applied the operand qubits. This time there are two of them, and there are four cases to consider.

 \textit{Operand vertices are connected only to each other}: Since neither operand vertex is connected to any other vertex, we can neglect all non-operand vertices. Now, for any $U,V\in C_1$,
 \begin{center}
  \begin{tikzpicture}
	\begin{pgfonlayer}{nodelayer}
		\node [style=gn] (0) at (-4, -0.5) {};
		\node [style=gn] (1) at (-3, -0.5) {};
		\node [style=gn] (2) at (-3.5, 0.5) {};
		\node [style=Hadamard] (3) at (-3.5, -0.5) {};
		\node [style=normalrect] (4) at (-4, -0) {$U$};
		\node [style=normalrect] (5) at (-3, -0) {$V$};
		\node [style=none] (6) at (-3.5, 1) {};
		\node [style=none] (7) at (-2.25, -0) {$=$};
		\node [style=gn] (8) at (-1.25, 0.25) {};
		\node [style=normalrect] (9) at (-1.25, -0.5) {$W$};
		\node [style=none] (10) at (-1.25, 0.75) {};
	\end{pgfonlayer}
	\begin{pgfonlayer}{edgelayer}
		\draw (6.center) to (2);
		\draw [bend right=15] (2) to (4);
		\draw (4) to (0);
		\draw [bend left=15] (2) to (5);
		\draw (5) to (1);
		\draw (0) to (3);
		\draw (3) to (1);
		\draw (10.center) to (8);
		\draw [bend right=60, looseness=1.50] (8) to (9);
		\draw [bend left=60, looseness=1.50] (8) to (9);
	\end{pgfonlayer}
  \end{tikzpicture}
 \end{center}
 where $W$ is again in $C_1$. Using the $\pi$-commutation rule, the colour change law and the Euler decomposition of the Hadamard operator, it is easy to show that any single qubit Clifford unitary can be written as
 \begin{equation}\label{eq:LC}
  \begin{tikzpicture}[baseline=0cm]
	\begin{pgfonlayer}{nodelayer}
		\node [style=normalrect] (0) at (-0.75, -0) {$W$};
		\node [style=gn,label={[gphase]right:$\alpha$}] (1) at (0.75, 0.6) {};
		\node [style=rn,label={[rphase]right:$\beta$}] (2) at (0.75, -0) {};
		\node [style=gn,label={[gphase]right:$\gamma$}] (3) at (0.75, -0.6) {};
		\node [style=none] (4) at (-0.75, 0.3) {};
		\node [style=none] (5) at (-0.75, -0.3) {};
		\node [style=none] (6) at (0.75, 0.8) {};
		\node [style=none] (7) at (0.75, -0.8) {};
		\node [style=none] (8) at (0, -0) {$=$};
	\end{pgfonlayer}
	\begin{pgfonlayer}{edgelayer}
		\draw (4.center) to (0);
		\draw (0) to (5.center);
		\draw (6.center) to (7.center);
	\end{pgfonlayer}
  \end{tikzpicture}
 \end{equation}
 for some $\alpha,\beta,\gamma\in\{0,\pi/2,\pi,-\pi/2\}$. Thus, using the spider law and the Hopf law,
 \begin{center}
  \begin{tikzpicture}
	\begin{pgfonlayer}{nodelayer}
		\node [style=gn] (0) at (-4, -0.5) {};
		\node [style=gn] (1) at (-3, -0.5) {};
		\node [style=gn] (2) at (-3.5, 0.5) {};
		\node [style=Hadamard] (3) at (-3.5, -0.5) {};
		\node [style=normalrect] (4) at (-4, -0) {$U$};
		\node [style=normalrect] (5) at (-3, -0) {$V$};
		\node [style=none] (6) at (-3.5, 1) {};
		\node [style=none] (7) at (-2.5, -0) {$=$};
		\node [style=gn] (8) at (-1.75, 0.25) {};
		\node [style=normalrect] (9) at (-1.75, -0.5) {$W$};
		\node [style=none] (10) at (-1.75, 0.75) {};
		\node [style=none] (11) at (-1, -0) {$=$};
		\node [style=gn,label={[gphase]right:$\alpha$}] (12) at (-0.5, -0) {};
		\node [style=gn,label={[gphase]right:$\gamma$}] (13) at (0.5, -0) {};
		\node [style=gn] (14) at (0, 0.5) {};
		\node [style=rn,label={[rphase]below:$\beta$}] (15) at (0, -0.5) {};
		\node [style=none] (16) at (0, 1) {};
		\node [style=none] (17) at (1.5, -0) {$=$};
		\node [style=gn] (18) at (2, -0) {};
		\node [style=rn] (19) at (2, -0.5) {};
		\node [style=gn,label={[gphase]right:$\alpha+\gamma$}] (20) at (2, 0.5) {};
		\node [style=rn,label={[rphase]right:$\beta$}] (21) at (2, -1) {};
		\node [style=none] (22) at (2, 1) {};
		\node [style=none] (23) at (3.5, -0) {$=$};
		\node [style=none] (24) at (4, 0.75) {};
		\node [style=gn] (25) at (4, -0.25) {};
		\node [style=rn,label={[rphase]right:$\beta$}] (26) at (4, -0.75) {};
		\node [style=gn,label={[gphase]right:$\alpha+\gamma$}] (27) at (4, 0.25) {};
	\end{pgfonlayer}
	\begin{pgfonlayer}{edgelayer}
		\draw (6.center) to (2);
		\draw [bend right=15] (2) to (4);
		\draw (4) to (0);
		\draw [bend left=15] (2) to (5);
		\draw (5) to (1);
		\draw (0) to (3);
		\draw (3) to (1);
		\draw (10.center) to (8);
		\draw [bend right=60, looseness=1.50] (8) to (9);
		\draw [bend left=60, looseness=1.50] (8) to (9);
		\draw [bend right=45] (18) to (19);
		\draw [bend left=45] (18) to (19);
		\draw (16.center) to (14);
		\draw [bend right=15] (14) to (12);
		\draw [bend right=15] (12) to (15);
		\draw [bend right=15] (15) to (13);
		\draw [bend right=15] (13) to (14);
		\draw (22.center) to (20);
		\draw (20) to (18);
		\draw (19) to (21);
		\draw (24.center) to (25);
	\end{pgfonlayer}
  \end{tikzpicture}
 \end{center}
 Hence if $\beta=\pi$, the resulting diagram is zero, otherwise it is a GS-LC diagram.

 \textit{One operand vertex has no neighbours}: If one of the operand vertices has no neighbours, it must be in one of the six single qubit states given in \eqref{eq:single_qubit_states}. Now for $\alpha\in\{0,\pi/2,\pi,-\pi/2\}$ and $\beta\in\{0,\pi\}$,
 \begin{center}
  \begin{tikzpicture}
	\begin{pgfonlayer}{nodelayer}
		\node [style=gn,label={[gphase]below:$\alpha$}] (0) at (-6, -0.5) {};
		\node [style=gn] (1) at (-5.5, -0) {};
		\node [style=none] (2) at (-5.5, 0.5) {};
		\node [style=none] (3) at (-5, -0.5) {};
		\node [style=none] (4) at (-4.5, -0) {$=$};
		\node [style=gn,label={[gphase]right:$\alpha$}] (5) at (-3.75, -0) {};
		\node [style=none] (6) at (-3.75, 0.5) {};
		\node [style=none] (7) at (-3.75, -0.5) {};
		\node [style=none] (8) at (0.5, -0.25) {};
		\node [style=rn,label={[rphase]below:$\beta$}] (9) at (-0.5, -0.25) {};
		\node [style=none] (10) at (0, 0.75) {};
		\node [style=none] (11) at (1, -0) {$=$};
		\node [style=gn] (12) at (0, 0.25) {};
		\node [style=gn] (13) at (2, 0.25) {};
		\node [style=rn,label={[rphase]below:$\beta$}] (14) at (2, -0.25) {};
		\node [style=none] (15) at (1.5, 0.75) {};
		\node [style=none] (16) at (2.5, 0.75) {};
		\node [style=none] (17) at (3, 0.25) {};
		\node [style=none] (18) at (3, -0.25) {};
		\node [style=none] (19) at (3.75, -0) {$=$};
		\node [style=gn] (20) at (4.5, -0.75) {};
		\node [style=gn] (21) at (5.5, 0.75) {};
		\node [style=Hadamard] (22) at (5.5, 0.25) {};
		\node [style=rn,label={[rphase]right:$\beta$}] (23) at (5.5, -0.25) {};
		\node [style=none] (24) at (4.5, 0.75) {};
		\node [style=none] (25) at (5.5, -0.75) {};
		\node [style=none] (26) at (-2, -0) {and};
		\node [style=rn,label={[rphase]right:$\beta$}] (27) at (4.5, 0.25) {};
		\node [style=Hadamard] (28) at (4.5, -0.25) {};
	\end{pgfonlayer}
	\begin{pgfonlayer}{edgelayer}
		\draw (2.center) to (1);
		\draw [bend right=15] (1) to (0);
		\draw [bend left=15] (1) to (3.center);
		\draw (6.center) to (5);
		\draw (5) to (7.center);
		\draw (10.center) to (12);
		\draw [bend right=15] (12) to (9);
		\draw [bend left=15] (12) to (8.center);
		\draw [bend right=15] (15.center) to (13);
		\draw [in=-165, out=45] (13) to (16.center);
		\draw [bend left=45] (16.center) to (17.center);
		\draw (17.center) to (18.center);
		\draw (13) to (14);
		\draw (24.center) to (20);
		\draw (21) to (25.center);
	\end{pgfonlayer}
  \end{tikzpicture}
 \end{center}
 Thus by lemma \ref{lem:LC} and lemma \ref{lem:measurement}, no matter what the properties of the other operand vertex are, the resulting state is always equal to a GS-LC diagram or the zero diagram.

 Having resolved the case where the two operand vertices are connected only to each other and the case where one of the operand vertices has no neighbours, we are left with the cases where both operand vertices have neighbours and at least one of the operand vertices has a non-operand neighbour.

 \textit{Both operand vertices have non-operand neighbours}: Denote the two operand vertices by $a$ and $b$. Pick one of $a$'s non-operand neighbours to be a swapping partner. As laid out in the proof of lemma \ref{lem:measurement}, we can use local complementations about $a$ and its swapping partner to change the vertex operator of $a$ to the identity. We can then do the same for $b$, picking a new swapping partner from among its neighbours. If $a$ is connected to $b$ or to $b$'s swapping partner, these operations may result in adding some phase operators of the form \phase{gn,label={[gphase]right:$-\pi/2$}} to $a$'s vertex operator. This is not a problem, as green phase operators commute with \joinnode. Once the vertex operators for both operand vertices are identities or green phase operators, we can move the green phases through the spider and then merge the vertices. Note that
 \begin{center}
  \begin{tikzpicture}
	\begin{pgfonlayer}{nodelayer}
		\node [style=gn] (0) at (-4.75, -0) {};
		\node [style=gn] (1) at (-3.75, -0) {};
		\node [style=Hadamard] (2) at (-4.25, 0.25) {};
		\node [style=Hadamard] (3) at (-4.25, -0.25) {};
		\node [style=none] (4) at (-5.25, 0.25) {};
		\node [style=none] (5) at (-5.2, 0.1) {$\vdots$};
		\node [style=none] (6) at (-5.25, -0.25) {};
		\node [style=none] (7) at (-3.25, 0.25) {};
		\node [style=none] (8) at (-3.3, 0.1) {$\vdots$};
		\node [style=none] (9) at (-3.25, -0.25) {};
		\node [style=none] (10) at (-4.75, 0.5) {};
		\node [style=none] (11) at (-3.75, 0.5) {};
		\node [style=none] (12) at (-2.75, 0.25) {$=$};
		\node [style=none] (13) at (-2.25, 0.25) {};
		\node [style=none] (14) at (-2.2, 0.1) {$\vdots$};
		\node [style=none] (15) at (-2.25, -0.25) {};
		\node [style=none] (16) at (-0.75, 0.25) {};
		\node [style=none] (17) at (-0.8, 0.1) {$\vdots$};
		\node [style=none] (18) at (-0.75, -0.25) {};
		\node [style=none] (19) at (-1.75, 0.5) {};
		\node [style=none] (20) at (-1.25, 0.5) {};
		\node [style=gn] (21) at (-1.75, -0) {};
		\node [style=gn] (22) at (-1.25, -0) {};
		\node [style=none] (23) at (2.7, 0.1) {$\vdots$};
		\node [style=none] (24) at (2.75, -0.25) {};
		\node [style=none] (25) at (2.25, 0.5) {};
		\node [style=gn] (26) at (2.25, -0) {};
		\node [style=none] (27) at (2.75, 0.25) {};
		\node [style=none] (28) at (4.7, 0.1) {$\vdots$};
		\node [style=none] (29) at (4.75, -0.25) {};
		\node [style=none] (30) at (4.25, 1) {};
		\node [style=gn] (31) at (4.25, -0) {};
		\node [style=none] (32) at (4.75, 0.25) {};
		\node [style=Hadamard] (33) at (1.75, -0) {};
		\node [style=none] (34) at (3.25, 0.25) {$=$};
		\node [style=gn,label={[gphase]left:$\pi$}] (35) at (4.25, 0.5) {};
		\node [style=none] (36) at (0.5, 0.25) {and};
	\end{pgfonlayer}
	\begin{pgfonlayer}{edgelayer}
		\draw (10.center) to (0);
		\draw [bend left=15] (4.center) to (0);
		\draw [bend left=15] (0) to (6.center);
		\draw [bend left=15] (0) to (2);
		\draw [bend left=15] (2) to (1);
		\draw [bend right=15] (0) to (3);
		\draw [bend right=15] (3) to (1);
		\draw (11.center) to (1);
		\draw [bend left=15] (1) to (7.center);
		\draw [bend right=15] (1) to (9.center);
		\draw [bend left=15] (13.center) to (21);
		\draw [bend right=15] (15.center) to (21);
		\draw (19.center) to (21);
		\draw (20.center) to (22);
		\draw [bend left=15] (22) to (16.center);
		\draw [bend right=15] (22) to (18.center);
		\draw (25.center) to (26);
		\draw [bend left=15] (26) to (27.center);
		\draw [bend right=15] (26) to (24.center);
		\draw (30.center) to (31);
		\draw [bend left=15] (31) to (32.center);
		\draw [bend right=15] (31) to (29.center);
		\draw [bend left=60, looseness=1.50] (26) to (33);
		\draw [bend right=60, looseness=1.50] (26) to (33);
	\end{pgfonlayer}
  \end{tikzpicture}
 \end{center}
 so we can remove any double edges or self-loops resulting from the merging. The result is a GS-LC diagram on $n-1$ qubits, where $n$ is the number of qubits in the original GS-LC diagram.

 \textit{One operand vertex is connected only to the other, but the latter has a non-operand neighbour}: We can change the vertex operator of the second operand vertex to the identity as in the previous case. In the process, the first operand vertex may aquire one or more non-operand neighbours; in that case we proceed as above. Else, by \eqref{eq:LC}, for any vertex operator $V$ on the first operand qubit,
 \begin{center}
  \begin{tikzpicture}
	\begin{pgfonlayer}{nodelayer}
		\node [style=gn] (0) at (-4.5, 0.25) {};
		\node [style=normalrect] (1) at (-5, -0.25) {$V$};
		\node [style=gn] (2) at (-5, -0.75) {};
		\node [style=gn] (3) at (-4, -0.75) {};
		\node [style=Hadamard] (4) at (-4.5, -0.75) {};
		\node [style=none] (5) at (-4.5, 0.75) {};
		\node [style=none] (6) at (-3.5, -0.5) {};
		\node [style=none] (7) at (-3.55, -0.65) {$\vdots$};
		\node [style=none] (8) at (-3.5, -1) {};
		\node [style=none] (9) at (-3.5, 0.25) {$=$};
		\node [style=gn] (10) at (-2.5, 0.25) {};
		\node [style=gn] (11) at (-2.5, -0.75) {};
		\node [style=normalrect] (12) at (-2.75, -0.25) {$W$};
		\node [style=none] (13) at (-2.5, 0.75) {};
		\node [style=none] (14) at (-2, -1) {};
		\node [style=none] (15) at (-2.05, -0.65) {$\vdots$};
		\node [style=none] (16) at (-2, -0.5) {};
		\node [style=none] (17) at (-1.5, 0.25) {$=$};
		\node [style=gn,label={[gphase]left:$\alpha$}] (18) at (-0.25, 0.25) {};
		\node [style=rn,label={[rphase]left:$\beta$}] (19) at (-0.5, -0.25) {};
		\node [style=gn,label={[gphase]left:$\gamma$}] (20) at (-0.25, -0.75) {};
		\node [style=none] (21) at (-0.25, 0.75) {};
		\node [style=none] (22) at (0.25, -1) {};
		\node [style=none] (23) at (0.2, -0.65) {$\vdots$};
		\node [style=none] (24) at (0.25, -0.5) {};
		\node [style=none] (25) at (0.5, 0.25) {$=$};
		\node [style=gn] (26) at (2.25, -0) {};
		\node [style=gn,label={[gphase]left:$\alpha+\gamma$}] (27) at (2.25, 0.5) {};
		\node [style=rn] (28) at (2.25, -0.5) {};
		\node [style=rn,label={[rphase]left:$\beta$}] (29) at (2.25, -1) {};
		\node [style=none] (30) at (2.25, 1) {};
		\node [style=none] (31) at (2.75, 0.75) {};
		\node [style=none] (32) at (2.7, 0.6) {$\vdots$};
		\node [style=none] (33) at (2.75, 0.25) {};
		\node [style=none] (34) at (3.25, 0.25) {$=$};
		\node [style=gn] (35) at (5, -0.25) {};
		\node [style=gn,label={[gphase]left:$\alpha+\gamma$}] (36) at (5, 0.25) {};
		\node [style=rn,label={[rphase]left:$\beta$}] (37) at (5, -1) {};
		\node [style=none] (38) at (5.5, -0) {};
		\node [style=none] (39) at (5.45, -0.15) {$\vdots$};
		\node [style=none] (40) at (5.5, -0.5) {};
		\node [style=none] (41) at (5, 0.75) {};
	\end{pgfonlayer}
	\begin{pgfonlayer}{edgelayer}
		\draw (5.center) to (0);
		\draw [bend right=15] (0) to (1);
		\draw (1) to (2);
		\draw (2) to (4);
		\draw (4) to (3);
		\draw [bend left=15] (3) to (6.center);
		\draw [bend right=15] (3) to (8.center);
		\draw (13.center) to (10);
		\draw [bend left=15] (11) to (16.center);
		\draw [bend right=15] (11) to (14.center);
		\draw [bend left=45] (10) to (11);
		\draw [bend right=15] (10) to (12);
		\draw [bend right=15] (12) to (11);
		\draw (21.center) to (18);
		\draw [bend right=15] (18) to (19);
		\draw [bend right=15] (19) to (20);
		\draw [bend left] (18) to (20);
		\draw [bend left=15] (20) to (24.center);
		\draw [bend right=15] (20) to (22.center);
		\draw (30.center) to (27);
		\draw (27) to (26);
		\draw (28) to (29);
		\draw [bend right=45] (26) to (28);
		\draw [bend left=45] (26) to (28);
		\draw [bend left=15] (27) to (31.center);
		\draw [bend right=15] (27) to (33.center);
		\draw (41.center) to (36);
		\draw (36) to (35);
		\draw [bend left=15] (35) to (38.center);
		\draw [bend right=15] (35) to (40.center);
		\draw [bend left] (0) to (3);
	\end{pgfonlayer}
  \end{tikzpicture}
 \end{center}
 where $W=V\circ H$ and we have used the spider law and the Hopf law. Again, if $\beta=\pi$ the resulting diagram is the zero diagram, otherwise it is equal to some GS-LC diagram.

 The four cases we have considered cover all the possible configurations of the graph underlying the original GS-LC diagram, hence the proof is complete.
\end{proof}

\subsection{Reduced GS-LC states}

Following \cite{elliott_graphical_2008}, we now define a more normalized form of GS-LC diagrams. The resulting diagrams are still not unique, but the number of equivalent diagrams is significantly smaller.

\begin{dfn}\label{dfn:rGS-LC}
 A \emph{stabilizer state diagram in reduced GS-LC (or rGS-LC) form} is a diagram in GS-LC form satisfying the following additional constraints:
 \begin{menum}
  \item\label{it:rGS-LC:VO} All vertex operators belong to the set
   \begin{equation}\label{eq:reduced_vertex_operators}
    R = 
    \begin{tikzpicture}[baseline=-0.1cm]
	\begin{pgfonlayer}{nodelayer}
		\node [style=gn,label={[gphase]right:$\pi/2$}] (0) at (-4, -0) {};
		\node [style=gn,label={[gphase]right:$-\pi/2$}] (1) at (2.25, -0.3) {};
		\node [style=rn,label={[rphase]right:$\pi/2$}] (2) at (2.25, 0.3) {};
		\node [style=none] (3) at (-4, 0.6) {};
		\node [style=none] (4) at (2.25, 0.6) {};
		\node [style=none] (5) at (-4, -0.6) {};
		\node [style=none] (6) at (2.25, -0.6) {};
		\node [style=none] (7) at (-5.25, -0) {$\Bigg\{$};
		\node [style=none] (8) at (4, -0) {$\Bigg\}.$};
		\node [style=gn,label={[gphase]right:$\pi$}] (9) at (-2.5, -0) {};
		\node [style=none] (10) at (-2.5, 0.6) {};
		\node [style=none] (11) at (-2.5, -0.6) {};
		\node [style=none] (12) at (-1.4, 0.6) {};
		\node [style=gn,label={[gphase]right:$-\pi/2$}] (13) at (-1.4, -0) {};
		\node [style=none] (14) at (-1.4, -0.6) {};
		\node [style=none] (15) at (-4.75, 0.6) {};
		\node [style=none] (16) at (-4.75, -0.6) {};
		\node [style=none] (17) at (0.5, -0.6) {};
		\node [style=none] (18) at (0.5, 0.6) {};
		\node [style=rn,label={[rphase]right:$\pi/2$}] (19) at (0.5, 0.3) {};
		\node [style=gn,label={[gphase]right:$\pi/2$}] (20) at (0.5, -0.3) {};
	\end{pgfonlayer}
	\begin{pgfonlayer}{edgelayer}
		\draw (3.center) to (5.center);
		\draw (4.center) to (6.center);
		\draw (10.center) to (11.center);
		\draw (12.center) to (14.center);
		\draw (15.center) to (16.center);
		\draw (18.center) to (17.center);
	\end{pgfonlayer}
    \end{tikzpicture}
   \end{equation}
  \item\label{it:rGS-LC:H} Two adjacent vertices must not both have vertex operators that include red nodes.
 \end{menum}
\end{dfn}

\begin{thm}\label{thm:ZX_rGS-LC}
 Any stabilizer state diagram is equal to some rGS-LC diagram within the \ZX-calculus.
\end{thm}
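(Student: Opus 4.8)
The starting point is Theorem~\ref{thm:ZX_GS-LC}, which already rewrites an arbitrary stabilizer state diagram to a GS-LC diagram: a graph $G=(V,E)$ together with a single-qubit Clifford vertex operator $U_v$ on each output. It therefore remains to rewrite such a diagram, using the rules of the \ZX-calculus, so that in addition every $U_v$ lies in the set $R$ of~\eqref{eq:reduced_vertex_operators} and no two adjacent vertices both carry a vertex operator containing a red node. The only transformations I would use are the three equivalence transformations of Section~\ref{s:equivalence_GS-LC} --- local complementation about a vertex, the fixpoint operation, and local complementation along an edge --- each of which is \ZX-derivable via Theorem~\ref{thm:Van_den_Nest}, together with Lemma~\ref{lem:local_Clifford_normal_form} to rewrite the individual vertex operators.

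\emph{Reducing the vertex operators into $R$.} Rewrite each $U_v$ by Lemma~\ref{lem:local_Clifford_normal_form}; in either of the two normal forms the factor of $U_v$ nearest the graph node is a power of $R_X$. An $R_X^{\pm1}$ part of this factor is removed by a local complementation about $v$ (which, by Theorem~\ref{thm:Van_den_Nest}, replaces $G$ by $G\star v$ and composes each neighbour's operator on the right with an $R_Z^{\pm1}$), and a residual Pauli $X$ by the fixpoint operation about $v$ (which leaves the graph unchanged and composes each neighbour's operator on the right with a $Z$). Once this bottom factor has been cleared, $U_v$ equals one of $I,R_Z,R_Z^2,R_Z^{-1}$ or $R_X\circ R_Z^{\pm1}$, all of which lie in $R$. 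The fixpoint moves do no harm, since composition with $Z$ on the right maps $R$ to itself ($R_Z^k\circ Z=R_Z^{k+2}$ and $R_X\circ R_Z^{\pm1}\circ Z=R_X\circ R_Z^{\mp1}$); it is the local-complementation moves that are delicate, because composing a neighbour's operator with $R_Z^{\pm1}$ can push it out of $R$ --- and if that neighbour carried one of the two red operators of $R$, doing so can trigger a chain of further repairs.

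\emph{Removing red--red edges.} Once every $U_v$ lies in $R$, the red-containing ones are exactly $R_X\circ R_Z$ and $R_X\circ R_Z^{-1}$. As long as some edge $\{v,w\}$ joins two such vertices, I would apply a local complementation along $\{v,w\}$ followed by a fixpoint operation about $v$ and/or about $w$ to clear the residual Pauli $X$ the edge operation leaves on one or both endpoints. A short computation --- using the Euler decomposition and $\pi$-commutation, with $R_X=HR_ZH$ --- shows that afterwards $U_v$ and $U_w$ are both $R_Z$-phases, while every other vertex operator has merely been composed on the right with a power of $Z$ and so is still in $R$. Hence the number of vertices carrying a red-containing operator strictly decreases, so finitely many such moves suffice and this phase does not disturb the previous one, leaving the diagram in rGS-LC form.

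\emph{The main obstacle.} The hard part is showing that the local-complementation repairs of the first phase terminate: a single sweep through $V$ is not enough, because clearing the bottom factor of $U_v$ composes the neighbours with $R_Z^{\pm1}$, which can knock an already-repaired neighbour out of $R$, and in the worst case one obtains a ping-pong between $v$ and a neighbour carrying a red operator. I would resolve this by a well-founded-ordering / potential argument in the style of Anders and Briegel~\cite{anders_fast_2005} and Elliott et al.~\cite{elliott_graphical_2008}, exploiting two facts: a local complementation about $v$ also changes the graph to $G\star v$, deleting the edges among $v$'s neighbours and so ``disconnecting'' the frontier of freshly-perturbed vertices; and once a vertex operator has been reduced to an $R_Z$-phase it is thereafter only ever composed with $R_Z$-phases and so stays one. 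Combining these --- for instance by processing vertices so that the $R_Z$-phase operators accumulate monotonically, and interleaving the red--red-edge moves of the third paragraph to dispose of the red operators of $R$ --- yields a quantity that strictly decreases under the remaining repairs. I would package this as a lemma stated and proved after the theorem, in parallel with how Theorem~\ref{thm:ZX_GS-LC} is organized.
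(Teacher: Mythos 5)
Your overall route is the same as the paper's: bring every vertex operator into one of the normal forms of Lemma~\ref{lem:local_Clifford_normal_form}, strip the red factor nearest the graph node by local complementations (and fixpoint operations) about that vertex, and then eliminate adjacent pairs of red-containing vertex operators by local complementations along edges followed by fixpoint operations. Your second phase matches the paper's computation~\eqref{eq:pi_2+api_2+lc-edge} and its termination count, and your observation that composition with $Z$ maps $R$ to itself is correct.

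The genuine gap is exactly where you locate it: termination of the first phase, which you defer to an unspecified potential argument. The two ingredients you offer do not assemble into one. First, a local complementation about $v$ does not delete the edges among $v$'s neighbours --- it \emph{inverts} the subgraph on the neighbourhood, so it creates edges as readily as it destroys them, and no ``disconnection of the frontier'' is guaranteed. Second, the stability of green-phase vertex operators under green composition says nothing about the two red-containing elements $R_X\circ R_Z^{\pm1}$ of $R$, which are precisely the operators that a neighbour's repair knocks out of $R$ and which drive the ping-pong you worry about; nor does interleaving the edge moves obviously help, since those require the \emph{pair} of red-containing operators to be adjacent. The paper closes the gap with one further observation that your sketch is missing: pre-multiplying an element of $R$ by a green phase operator never changes the number of red nodes appearing in its normal form (green-only elements stay green-only, and $R_X\circ R_Z^{\pm1}$ is sent to an operator whose normal form still contains exactly one red node, as the paper checks explicitly). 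Consequently the total number of red nodes over all vertex operators' normal forms --- at most $2n$ to begin with --- is unaffected by all the collateral green pre-multiplications and is driven down to its minimum by the deliberate red-stripping local complementations, so the repair process halts after at most $2n$ such steps. Without this monotone quantity (or an equivalent one) your first phase is not known to terminate, and the reduction to rGS-LC form is not established.
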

\begin{proof}
 By theorem \ref{thm:ZX_GS-LC}, any stabilizer state diagram is equal to some GS-LC diagram within the \ZX-calculus. Lemma \ref{lem:local_Clifford_normal_form} shows that each vertex operator in the GS-LC diagram can be brought into one of the forms
 \begin{center}
  \begin{tikzpicture}[baseline=0cm]
	\begin{pgfonlayer}{nodelayer}
		\node [style=gn,label={[gphase]right:$\alpha$}] (0) at (-1.75, 0.3) {};
		\node [style=rn,label={[rphase]right:$\beta$}] (1) at (-1.75, -0.3) {};
		\node [style=rn,label={[rphase]right:$\pi/2$}] (2) at (1, 0.6) {};
		\node [style=gn,label={[gphase]right:$\pm\pi/2$}] (3) at (1, -0) {};
		\node [style=rn,label={[rphase]right:$\gamma$}] (4) at (1, -0.6) {};
		\node [style=none] (5) at (0, -0) {or};
		\node [style=none] (6) at (-1.75, 0.55) {};
		\node [style=none] (7) at (-1.75, -0.55) {};
		\node [style=none] (8) at (1, -0.85) {};
		\node [style=none] (9) at (1, 0.85) {};
	\end{pgfonlayer}
	\begin{pgfonlayer}{edgelayer}
		\draw (6.center) to (7.center);
		\draw (9.center) to (8.center);
	\end{pgfonlayer}
  \end{tikzpicture}
 \end{center}
 Note that the cases $\beta=0$ and $\gamma=0$ of the above normal forms correspond exactly to the elements of  $R$, the restricted set of vertex operators given in \eqref{eq:reduced_vertex_operators}. A local complementation about a vertex $v$ pre-multiplies the vertex operator of $v$ with \phase{rn,label={[rphase]right:$-\pi/2$}}, so any vertex operator can be brought into one of the forms \eqref{eq:reduced_vertex_operators} by some number of local complementations about the corresponding vertex. The other effects of local complementations are to toggle some of the edges in the graph state and to pre-multiply the vertex operators of neighbouring vertices by \phase{gn,label={[gphase]right:$\pi/2$}}. The set $R$ is not mapped to itself under repeated pre-multiplication with green phase operators: this operation sends the set $\{$\phase{gn,label={[gphase]right:$\alpha$}}$\}$ to itself, but it maps
 \begin{center}
   \begin{tikzpicture}[baseline=-0.1]
	\begin{pgfonlayer}{nodelayer}
		\node [style=gn,label={[gphase]right:$\pm\pi/2$}] (0) at (-2.25, -0.3) {};
		\node [style=rn,label={[rphase]right:$\pi/2$}] (1) at (-2.25, 0.3) {};
		\node [style=none] (2) at (-2.25, 0.6) {};
		\node [style=none] (3) at (-2.25, -0.6) {};
		\node [style=none] (4) at (-3.25, -0) {$\Bigg\}\mapsto\Bigg\{$};
		\node [style=none] (5) at (-5.25, -0.6) {};
		\node [style=none] (6) at (-5.25, 0.6) {};
		\node [style=rn,label={[rphase]right:$\pi/2$}] (7) at (-5.25, 0.3) {};
		\node [style=gn,label={[gphase]right:$\pm\pi/2$}] (8) at (-5.25, -0.3) {};
		\node [style=rn,label={[rphase]right:$\pi/2$}] (9) at (-0.25, -0) {};
		\node [style=rn,label={[rphase]right:$-\pi/2$}] (10) at (1.5, -0.3) {};
		\node [style=gn,label={[gphase]right:$\pi$}] (11) at (1.5, 0.3) {};
		\node [style=none] (12) at (-0.25, 0.6) {};
		\node [style=none] (13) at (1.5, 0.6) {};
		\node [style=none] (14) at (-0.25, -0.6) {};
		\node [style=none] (15) at (1.5, -0.6) {};
		\node [style=none] (16) at (-5.75, -0) {$\Bigg\{$};
		\node [style=none] (17) at (3.25, -0) {$\Bigg\}.$};
	\end{pgfonlayer}
	\begin{pgfonlayer}{edgelayer}
		\draw (2.center) to (3.center);
		\draw (6.center) to (5.center);
		\draw (12.center) to (14.center);
		\draw (13.center) to (15.center);
	\end{pgfonlayer}
   \end{tikzpicture}
 \end{center}
 The normal form of a vertex operator contains at most two red nodes. Once a vertex operator is in one of the forms in $R$, pre-multiplication by green phase operators does not change the number of red nodes it contains when expressed in normal form. Thus the process of removing red nodes from the vertex operators by applying local complementations must terminate after at most $2n$ steps for an $n$-qubit diagram, at which point all vertex operators are elements of the set $R$.

 With all vertex operators in $R$, suppose there are two adjacent qubits $u$ and $v$ which both have red nodes in their vertex operators, i.e.\ there is a subdiagram of the form
  \begin{equation}\label{eq:neighbouring_red_nodes}
   \begin{tikzpicture}[baseline=0.6cm]
	\begin{pgfonlayer}{nodelayer}
		\node [style=gn] (0) at (-0.5, -0) {};
		\node [style=gn] (1) at (0.5, -0) {};
		\node [style=Hadamard] (2) at (0, -0) {};
		\node [style=gn,label={[gphase]left:$a\pi/2$}] (3) at (-0.5, 0.6) {};
		\node [style=gn,label={[gphase]right:$b\pi/2$}] (4) at (0.5, 0.6) {};
		\node [style=rn,label={[rphase]left:$\pi/2$}] (5) at (-0.5, 1.2) {};
		\node [style=rn,label={[rphase]right:$\pi/2$}] (6) at (0.5, 1.2) {};
		\node [style=none,label={above:$u$}] (7) at (-0.5, 1.45) {};
		\node [style=none,label={above:$v$}] (8) at (0.5, 1.45) {};
		\node [style=none] (9) at (-0.8, -0.5) {};
		\node [style=none] (10) at (-0.5, -0.45) {$\ldots$};
		\node [style=none] (11) at (-0.2, -0.5) {};
		\node [style=none] (12) at (0.2, -0.5) {};
		\node [style=none] (13) at (0.5, -0.45) {$\ldots$};
		\node [style=none] (14) at (0.8, -0.5) {};
	\end{pgfonlayer}
	\begin{pgfonlayer}{edgelayer}
		\draw (7.center) to (0);
		\draw (0) to (1);
		\draw (8.center) to (1);
		\draw [bend right=15] (0) to (9.center);
		\draw [bend left=15] (0) to (11.center);
		\draw [bend right=15] (1) to (12.center);
		\draw [bend left=15] (1) to (14.center);
	\end{pgfonlayer}
   \end{tikzpicture}
  \end{equation}
 for $a,b\in\{\pm 1\}$. A local complementation along the edge $\{u,v\}$ maps the vertex operator of $u$ to
 \begin{equation}\label{eq:pi_2+api_2+lc-edge}
  \begin{tikzpicture}[baseline=0cm]
	\begin{pgfonlayer}{nodelayer}
		\node [style=rn,label={[rphase]right:$\pi/2$}] (0) at (-3, 0.9) {};
		\node [style=rn,label={[rphase]right:$-\pi/2$}] (1) at (-3, -0.3) {};
		\node [style=gn,label={[gphase]right:$(a+1)\pi/2$}] (3) at (-3, 0.3) {};
		\node [style=gn,label={[gphase]right:$\pi/2$}] (4) at (-3, -0.9) {};
		\node [style=none] (5) at (-3, 1.15) {};
		\node [style=none] (6) at (-3, -1.15) {};
		\node [style=none] (7) at (-0.5, -0) {$=$};
		\node [style=rn,label={[rphase]right:$(a+1)\pi/2$}] (8) at (0, 0.3) {};
		\node [style=gn,label={[gphase]right:$-a\pi/2$}] (9) at (0, -0.3) {};
		\node [style=none] (10) at (0, 0.55) {};
		\node [style=none] (11) at (0, -0.55) {};
		\node [style=none] (12) at (2.5, -0) {$=$};
		\node [style=gn,label={[gphase]right:$\pi/2$}] (13) at (3, 0.3) {};
		\node [style=rn,label={[rphase]right:$(a+1)\pi/2$}] (14) at (3, -0.3) {};
		\node [style=none] (15) at (3, 0.55) {};
		\node [style=none] (16) at (3, -0.55) {};
		\node [style=rn,label={[rphase]right:$\pi/2$}] (17) at (-5, 1.2) {};
		\node [style=rn,label={[rphase]right:$-\pi/2$}] (18) at (-5, -0.6) {};
		\node [style=gn,label={[gphase]right:$a\pi/2$}] (19) at (-5, 0.6) {};
		\node [style=gn,label={[gphase]right:$\pi/2$}] (20) at (-5, -0) {};
		\node [style=gn,label={[gphase]right:$\pi/2$}] (21) at (-5, -1.2) {};
		\node [style=none] (22) at (-5, 1.45) {};
		\node [style=none] (23) at (-5, -1.45) {};
		\node [style=none] (7) at (-3.5, -0) {$=$};
	\end{pgfonlayer}
	\begin{pgfonlayer}{edgelayer}
		\draw (5.center) to (6.center);
		\draw (10.center) to (11.center);
		\draw (15.center) to (16.center);
		\draw (22.center) to (23.center);
	\end{pgfonlayer}
  \end{tikzpicture}
 \end{equation}
 and similarly for $v$. If $a=1$, we apply a fixpoint operation to $u$ and if $b=1$, we apply one to $v$. After this, the vertex operators on both $u$ and $v$ are green phase operators. Vertex operators of qubits adjacent to $u$ or $v$ are pre-multiplied with some power of \phase{gn,label={[gphase]right:$\pi$}}. Thus each such operation removes the red nodes from a pair of adjacent qubits and leaves all vertex operators in the set $R$. Hence after at most $n/2$ such operations, it will be impossible to find a subdiagram as in \eqref{eq:neighbouring_red_nodes}. Thus, the diagram is in reduced GS-LC form.
\end{proof}

\subsection{Equivalence transformations of rGS-LC diagrams}\label{s:equivalence_rGS-LC}

It is obvious that local complementations and applications of the fixpoint rule do not in general take rGS-LC diagrams to rGS-LC diagrams. Still, rGS-LC forms are not necessarily unique, as the following two propositions show. The propositions are adapted from similar results in \cite{elliott_graphical_2008}.

\begin{prop}\label{prop:rGS-LC_transformation1}
 Suppose a rGS-LC diagram contains a pair of neighbouring qubits $p$ and $q$ in the following configuration
 \begin{center}
  \begin{tikzpicture}
	\begin{pgfonlayer}{nodelayer}
		\node [style=gn] (0) at (-0.5, -0) {};
		\node [style=gn] (1) at (0.5, -0) {};
		\node [style=Hadamard] (2) at (0, -0) {};
		\node [style=gn,label={[gphase]left:$a\pi/2$}] (3) at (-0.5, 0.6) {};
		\node [style=gn,label={[gphase]right:$b\pi$}] (4) at (0.5, 0.6) {};
		\node [style=rn,label={[rphase]left:$\pi/2$}] (5) at (-0.5, 1.2) {};
		\node [style=none,label={right:$p$}] (7) at (-0.5, 1.45) {};
		\node [style=none,label={right:$q$}] (8) at (0.5, 1.45) {};
		\node [style=none] (9) at (-0.8, -0.5) {};
		\node [style=none] (10) at (-0.5, -0.45) {$\ldots$};
		\node [style=none] (11) at (-0.2, -0.5) {};
		\node [style=none] (12) at (0.2, -0.5) {};
		\node [style=none] (13) at (0.5, -0.45) {$\ldots$};
		\node [style=none] (14) at (0.8, -0.5) {};
	\end{pgfonlayer}
	\begin{pgfonlayer}{edgelayer}
		\draw (7.center) to (0);
		\draw (0) to (1);
		\draw (8.center) to (1);
		\draw [bend right=15] (0) to (9.center);
		\draw [bend left=15] (0) to (11.center);
		\draw [bend right=15] (1) to (12.center);
		\draw [bend left=15] (1) to (14.center);
	\end{pgfonlayer}
  \end{tikzpicture}
 \end{center}
 where $a\in\{\pm 1\}$ and $b\in\{0,1\}$. Then a local complementation about $q$, followed by a local complementation about $p$, yields a diagram which can be brought into rGS-LC form by at most two applications of the fixpoint rule.
\end{prop}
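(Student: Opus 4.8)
The plan is to carry out the two prescribed local complementations explicitly, track what happens to every vertex operator, and then read off which fixpoint operations are needed and verify the two conditions of Definition~\ref{dfn:rGS-LC}. Throughout I use the descriptions of the transformations from Section~\ref{s:equivalence_GS-LC}: a local complementation about $v$ pre-multiplies the vertex operator of $v$ with $R_X^{-1}$ and that of every neighbour of $v$ with $R_Z$, and replaces the graph by $G\star v$; the fixpoint operation about $v$ pre-multiplies the vertex operator of $v$ with $X$ and that of every neighbour with $Z$, leaving the graph unchanged.

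First I would perform the local complementation about $q$: the operand $q$ gets $Z^b$ pre-multiplied by $R_X^{-1}$, its neighbour $p$ gets an $R_Z$ that turns ``green $a\pi/2$, red $\pi/2$'' into ``green $(a+1)\pi/2$, red $\pi/2$'', every other neighbour $w$ of $q$ gets an $R_Z$, and the graph becomes $G\star q$. Since $p$ is still adjacent to $q$ there, the subsequent local complementation about $p$ gives $q$ a second $R_Z$, gives $p$ (now the operand) an $R_X^{-1}$, gives each neighbour of $p$ in $G\star q$ an $R_Z$, and the graph becomes $(G\star q)\star p$. Simplifying with the spider rule and $\pi$-commutation, the vertex operator of $p$ becomes a bare wire when $a=-1$ and an operator containing a red $\pi$ and a green $\pi$ node when $a=1$, the latter reduced to $Z$ by one fixpoint operation about $p$; and the vertex operator of $q$ becomes one of the two red-node elements of $R$ directly when $b=1$, and after one fixpoint operation about $q$ when $b=0$. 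So at most two fixpoint operations are used --- one about $p$ exactly when $a=1$, one about $q$ exactly when $b=0$ --- and I would note that these extra fixpoint operations only append $Z$-type nodes to $p$, $q$ and their neighbours, which shuffles $p$'s operator within $\{I,Z\}$, shuffles $q$'s between the two red-node elements of $R$, and leaves any green-phase vertex operator a green phase.

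The substance of the proof is showing that every other vertex operator still lies in $R$ and that condition~\ref{it:rGS-LC:H} is restored, and this is where I expect the real work. I would split the remaining vertices by their adjacency to $p$ and $q$ in $G$. A neighbour of $p$ has a green-phase vertex operator by hypothesis, receives only $R_Z$ and $Z$ nodes, and so stays a green phase in $R$; a vertex adjacent to neither $p$ nor $q$ is untouched. The delicate case is a vertex $w$ adjacent to $q$ but not to $p$: if its operator is a green phase it stays one, but if it contains a red node (permissible, since $w$ is not adjacent to $p$), then because $w$ and $p$ are both neighbours of $q$ the local complementation about $q$ toggles the edge $\{p,w\}$, so $w$ becomes a neighbour of $p$ in $G\star q$ and picks up a \emph{second} $R_Z$; the two $R_Z$ nodes together amount to a $Z$, which merely interchanges the two red-node elements of $R$, keeping $w$'s operator in $R$.

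For condition~\ref{it:rGS-LC:H} I would observe that the only vertices carrying a red node at the end are $q$ and those that already carried one in $G$, and the latter are non-adjacent to $p$ by the original rGS-LC condition. A red-node vertex $w$ adjacent to $q$ in $G$ has its edge to $q$ toggled exactly once --- never by the local complementation about $q$ (which touches no edge incident to $q$), and once by the local complementation about $p$, since $w$ and $q$ are both neighbours of $p$ in $G\star q$ --- so $w$ is not adjacent to $q$ in the final graph; and for any two red-node vertices neither of which is $q$, the edge between them is toggled an even number of times, so, being absent in $G$ by the original rGS-LC condition, it is still absent at the end. Hence no two adjacent vertices both carry red nodes, every vertex operator lies in $R$, and the diagram is in rGS-LC form. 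The main obstacle, I expect, is exactly this bookkeeping of how adjacencies toggle through the two local complementations, rather than any individual node computation.
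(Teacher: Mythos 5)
Your proposal is correct and follows essentially the same route as the paper: explicitly composing the two local complementations on the vertex operators of $p$ and $q$ (obtaining the identity or $ZX$ on $p$ and a red-node element of $R$ times $X^{1-b}$ on $q$, hence a fixpoint on $p$ exactly when $a=1$ and on $q$ exactly when $b=0$), noting that red-node neighbours $w$ of $q$ acquire two $R_Z$'s --- the second because the complementation about $q$ makes $w$ adjacent to $p$ --- so their operators are multiplied by $Z$ and stay in $R$, and tracking the edge toggles to see that $\{w,q\}$ is removed and no two surviving red-node vertices end up adjacent. The only quibble is the phrase ``gives $q$ a second $R_Z$'' (the first node $q$ received was an $R_X^{-1}$, not an $R_Z$), but your subsequent computation of $q$'s operator is correct, so this is purely a slip of wording.
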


\begin{proof}
 Consider first the effect of the two local complementations on the vertex operators of $p$ and $q$. We have
 \begin{center}
  \begin{tikzpicture}
	\begin{pgfonlayer}{nodelayer}
		\node [style=rn,label={[rphase]right:$\pi/2$}] (0) at (-7.5, 0.9) {};
		\node [style=rn,label={[rphase]right:$-\pi/2$}] (1) at (-7.5, -0.9) {};
		\node [style=gn,label={[gphase]right:$a\pi/2$}] (2) at (-7.5, 0.3) {};
		\node [style=gn,label={[gphase]right:$\pi/2$}] (3) at (-7.5, -0.3) {};
		\node [style=none] (4) at (-7.5, 1.15) {};
		\node [style=none] (5) at (-7.5, -1.15) {};
		\node [style=none] (6) at (-6, -0) {$=$};
		\node [style=rn,label={[rphase]right:$\pi/2$}] (7) at (-5.5, 0.6) {};
		\node [style=rn,label={[rphase]right:$-\pi/2$}] (8) at (-5.5, -0.6) {};
		\node [style=gn,label={[gphase]right:$(a+1)\pi/2$}] (9) at (-5.5, -0) {};
		\node [style=none] (10) at (-5.5, 0.85) {};
		\node [style=none] (11) at (-5.5, -0.85) {};
		\node [style=none] (12) at (-3.05, -0) {$=$};
		\node [style=gn,label={[gphase]right:$(a+1)\pi/2$}] (13) at (-2.6, 0.3) {};
		\node [style=rn,label={[rphase]right:$(a+1)\pi/2$}] (14) at (-2.6, -0.3) {};
		\node [style=none] (15) at (-2.6, 0.55) {};
		\node [style=none] (16) at (-2.6, -0.55) {};
		\node [style=gn,label={[gphase]right:$b\pi$}] (17) at (1.25, 0.6) {};
		\node [style=rn,label={[rphase]right:$-\pi/2$}] (18) at (1.25, -0) {};
		\node [style=gn,label={[gphase]right:$\pi/2$}] (19) at (1.25, -0.6) {};
		\node [style=none] (20) at (1.25, 0.85) {};
		\node [style=none] (21) at (1.25, -0.85) {};
		\node [style=none] (22) at (2.9, -0) {$=$};
		\node [style=rn,label={[rphase]right:$(-1)^{b+1}\pi/2$}] (23) at (3.4, 0.32) {};
		\node [style=gn,label={[gphase]right:$(-1)^b\pi/2$}] (24) at (3.4, -0.32) {};
		\node [style=none] (25) at (3.4, 0.6) {};
		\node [style=none] (26) at (3.4, -0.6) {};
		\node [style=none] (27) at (6, -0) {$=$};
		\node [style=rn,label={[rphase]right:$\pi/2$}] (28) at (6.5, 0.6) {};
		\node [style=gn,label={[gphase]right:$-\pi/2$}] (29) at (6.5, -0) {};
		\node [style=rn,label={[rphase]right:$(1+b)\pi$}] (30) at (6.5, -0.6) {};
		\node [style=none] (31) at (6.5, 0.85) {};
		\node [style=none] (32) at (6.5, -0.85) {};
		\node [style=none] (27) at (0.25, -0) {and};
	\end{pgfonlayer}
	\begin{pgfonlayer}{edgelayer}
		\draw (4.center) to (5.center);
		\draw (10.center) to (11.center);
		\draw (15.center) to (16.center);
		\draw (20.center) to (21.center);
		\draw (25.center) to (26.center);
		\draw (31.center) to (32.center);
	\end{pgfonlayer}
  \end{tikzpicture}
 \end{center}
 If $a=+1$, we apply a fixpoint operation to $p$ and if $b=0$, we apply a fixpoint operation to $q$; then the vertex operators of $p$ and $q$ are in $R$. The fixpoint operations add \phase{gn,label={[gphase]right:$\pi$}} to neighbouring qubits, which maps the set $R$ to itself. As fixpoint operations do not change any edges, we do not have to worry about them when considering whether the rest of the diagram satisfies definition \ref{dfn:rGS-LC}.

 We first need to check that the two local complementations map all vertex operators to allowed ones. Vertices not adjacent to $p$ or $q$ can safely be ignored because their vertex operators remain unchanged. As the local complementations and fixpoint operations add only green phase operators to vertices other than $p$ and $q$, any vertex operator on another qubit that started out as a green phase will remain a green phase. It remains to check the effect of the transformation on qubits whose vertex operator contains a red node and which are adjacent to $p$ or $q$. By definition \ref{dfn:rGS-LC}, such qubits cannot be adjacent to $p$. So suppose $w$ is a qubit in the original graph state with a red node in its vertex operator and suppose the initial diagram contains an edge $\{w,q\}$. Then the local complementation about $q$ adds a phase factor \phase{gn,label={[gphase]right:$\pi/2$}} to the vertex operator of $w$ and it creates an edge between $w$ and $p$. The complementation about $p$ adds another \phase{gn,label={[gphase]right:$\pi/2$}} to $w$ and removes the edge between $w$ and $q$. Thus the vertex operator of $w$ remains in the set $R$, i.e. the transformation preserves property 1 of the definition of rGS-LC diagrams.

 Suppose there are two qubits $v,w$ in the original graph state, both of which have red nodes in their vertex operators and are adjacent to $q$. Since the original diagram is in rGS-LC form, there is no edge between $v$ and $w$. Now the local complementation about $q$ adds an edge between $v$ and $w$ and creates edges $\{p,v\}$ and $\{p,w\}$. The local complementation about $p$ removes the edge $\{v,w\}$, so once again $v$ and $w$ are not adjacent. Edges involving any qubits that are not adjacent to $p$ or $q$ remain unchanged. Thus the transformation preserves property 2 of definition \ref{dfn:rGS-LC}. Hence, the resulting diagram is in rGS-LC form.
\end{proof}

\begin{prop}\label{prop:rGS-LC_transformation2}
 Suppose a rGS-LC diagram contains a pair of neighbouring qubits $p$ and $q$ in the following configuration
 \begin{center}
  \begin{tikzpicture}
	\begin{pgfonlayer}{nodelayer}
		\node [style=gn] (0) at (-0.5, -0) {};
		\node [style=gn] (1) at (0.5, -0) {};
		\node [style=Hadamard] (2) at (0, -0) {};
		\node [style=gn,label={[gphase]left:$a\pi/2$}] (3) at (-0.5, 0.6) {};
		\node [style=gn,label={[gphase]right:$b\pi/2$}] (4) at (0.5, 0.6) {};
		\node [style=rn,label={[rphase]left:$\pi/2$}] (5) at (-0.5, 1.2) {};
		\node [style=none,label={right:$p$}] (7) at (-0.5, 1.45) {};
		\node [style=none,label={right:$q$}] (8) at (0.5, 1.45) {};
		\node [style=none] (9) at (-0.8, -0.5) {};
		\node [style=none] (10) at (-0.5, -0.45) {$\ldots$};
		\node [style=none] (11) at (-0.2, -0.5) {};
		\node [style=none] (12) at (0.2, -0.5) {};
		\node [style=none] (13) at (0.5, -0.45) {$\ldots$};
		\node [style=none] (14) at (0.8, -0.5) {};
	\end{pgfonlayer}
	\begin{pgfonlayer}{edgelayer}
		\draw (7.center) to (0);
		\draw (0) to (1);
		\draw (8.center) to (1);
		\draw [bend right=15] (0) to (9.center);
		\draw [bend left=15] (0) to (11.center);
		\draw [bend right=15] (1) to (12.center);
		\draw [bend left=15] (1) to (14.center);
	\end{pgfonlayer}
  \end{tikzpicture}
 \end{center}
 where $a,b\in\{\pm 1\}$. Then a local complementation along the edge $\{p,q\}$ yields a diagram which can be brought into rGS-LC form by at most two applications of the fixpoint rule.
\end{prop}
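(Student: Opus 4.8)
The plan is to mimic the proof of Proposition~\ref{prop:rGS-LC_transformation1}, using this time the single local complementation along the edge $\{p,q\}$, whose combined effect on the graph and on the vertex operators was recorded in section~\ref{s:equivalence_GS-LC}. Three things must be checked: that the vertex operators of $p$ and $q$ can be returned to the set $R$ using at most two fixpoint operations, that every other vertex operator stays in $R$, and that no two adjacent vertices end up both carrying a red node.

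For $p$ and $q$, I would apply the vertex-operator rule of the edge complementation, which replaces $U_p = R_X^{\pi/2}\circ R_Z^{a\pi/2}$ by $U_p\circ R_Z\circ R_X^{-1}\circ R_Z$ and $U_q = R_Z^{b\pi/2}$ by $U_q\circ R_Z\circ R_X^{-1}\circ R_Z$, and then simplify each resulting single-qubit Clifford with the spider rule, the $\pi$-commutation rule, the Euler decomposition rule and the normal form of Lemma~\ref{lem:local_Clifford_normal_form}. The outcome is that, up to a possible leftover factor of $X$, $p$ acquires a green-phase vertex operator and $q$ acquires one of the form $R_X^{\pi/2}\circ R_Z^{\pm\pi/2}$; informally, the red node migrates from $p$ onto $q$. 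Applying a fixpoint operation to $p$ and/or to $q$ exactly in the cases where such a leftover $X$ appears removes it and leaves both vertex operators in $R$, at a cost of at most two fixpoint operations, as claimed. Since fixpoint operations leave the graph unchanged, they do not interfere with the edge bookkeeping.

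For the remaining vertices, the edge complementation replaces $U_j$ by $U_j\circ Z$ for every $j\notin\{p,q\}$ adjacent to $p$ or $q$ and leaves the others untouched, while the subsequent fixpoint operations only compose in further copies of $Z$. The set $R$ is closed under composition with $Z$ --- green phases stay green phases, and $Z$ merely interchanges $R_X^{\pi/2}\circ R_Z^{\pi/2}$ and $R_X^{\pi/2}\circ R_Z^{-\pi/2}$ --- so every vertex operator other than those of $p$ and $q$ stays in $R$, and the set $S$ of qubits whose vertex operator contains a red node becomes exactly $(S\setminus\{p\})\cup\{q\}$.

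The last point is the crux: $(S\setminus\{p\})\cup\{q\}$ must be independent in the new graph $G'$. Reducedness of the original diagram gives that $S$ is independent in $G$ and contains $p$, so no vertex of $S\setminus\{p\}$ is a neighbour of $p$ in $G$. In $G'$ the neighbours of $q$ other than $p$ are exactly the former neighbours of $p$, so $q$ is not adjacent in $G'$ to any vertex of $S\setminus\{p\}$; and for $v,w\in S\setminus\{p\}$, since neither is adjacent to $p$ in $G$, the intersection of each of their neighbourhoods with $\{p,q\}$ is $\emptyset$ or $\{q\}$, so the toggling condition of section~\ref{s:equivalence_GS-LC} for the edge $\{v,w\}$ --- that these two intersections and $\emptyset$ be pairwise distinct --- fails, whence $\{v,w\}$ is an edge of $G'$ iff it was an edge of $G$, which it was not. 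Thus constraint~\ref{it:rGS-LC:H} of Definition~\ref{dfn:rGS-LC} is restored and the diagram is in rGS-LC form. I expect this edge-toggling analysis to be the main obstacle: it relies not merely on the independence of $S$ but on the sharper fact that a red-node qubit such as $p$ has no red-node neighbour, which is precisely what forces each relevant neighbourhood intersection down to $\emptyset$ or $\{q\}$ and so prevents any edge being created inside the red-node set; the single-qubit Clifford identities behind the first two points are, by contrast, routine given Lemma~\ref{lem:local_Clifford_normal_form} and the Euler rule.
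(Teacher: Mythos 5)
Your proposal is correct and follows essentially the same route as the paper: compute the post-complementation vertex operators of $p$ and $q$, cancel the leftover $X$ factors with fixpoint operations (the paper's cases $a=1$ and $b=-1$), and invoke the recorded properties of local complementation along an edge for the remaining qubits. The only difference is that you spell out the independence argument for the red-node set in $G'$ (via the neighbourhood-swap and edge-toggling conditions), which the paper dispatches with a one-line appeal to section~\ref{s:equivalence_GS-LC}; your spelled-out version is accurate.
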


\begin{proof}
 After the local complementation along the edge, the vertex operator of $p$ is given by \eqref{eq:pi_2+api_2+lc-edge}. For the vertex operator of $q$, we have
 \begin{center}
  \begin{tikzpicture}
	\begin{pgfonlayer}{nodelayer}
		\node [style=gn,label={[gphase]right:$(b+1)\pi/2$}] (0) at (-4.75, 0.6) {};
		\node [style=rn,label={[rphase]right:$-\pi/2$}] (2) at (-4.75, -0) {};
		\node [style=gn,label={[gphase]right:$\pi/2$}] (3) at (-4.75, -0.6) {};
		\node [style=none] (4) at (-4.75, 0.85) {};
		\node [style=none] (5) at (-4.75, -0.85) {};
		\node [style=none] (6) at (-2.5, -0) {$=$};
		\node [style=rn,label={[rphase]right:$b\pi/2$}] (7) at (-2, 0.3) {};
		\node [style=gn,label={[gphase]right:$-b\pi/2$}] (8) at (-2, -0.3) {};
		\node [style=none] (9) at (-2, 0.55) {};
		\node [style=none] (10) at (-2, -0.55) {};
		\node [style=none] (11) at (-0, -0) {$=$};
		\node [style=rn,label={[rphase]right:$\pi/2$}] (12) at (0.5, 0.6) {};
		\node [style=gn,label={[gphase]right:$-\pi/2$}] (13) at (0.5, -0) {};
		\node [style=rn,label={[rphase]right:$(b-1)\pi/2$}] (14) at (0.5, -0.6) {};
		\node [style=none] (15) at (0.5, 0.85) {};
		\node [style=none] (16) at (0.5, -0.85) {};
		\node [style=gn,label={[gphase]right:$b\pi/2$}] (17) at (-7, 0.9) {};
		\node [style=gn,label={[gphase]right:$\pi/2$}] (18) at (-7, 0.3) {};
		\node [style=rn,label={[rphase]right:$-\pi/2$}] (19) at (-7, -0.3) {};
		\node [style=gn,label={[gphase]right:$\pi/2$}] (20) at (-7, -0.9) {};
		\node [style=none] (21) at (-7, 1.15) {};
		\node [style=none] (22) at (-7, -1.15) {};
		\node [style=none] (6) at (-5.25, -0) {$=$};
	\end{pgfonlayer}
	\begin{pgfonlayer}{edgelayer}
		\draw (4.center) to (5.center);
		\draw (9.center) to (10.center);
		\draw (15.center) to (16.center);
		\draw (21.center) to (22.center);
	\end{pgfonlayer}
  \end{tikzpicture}
 \end{center}
 Thus if $a=1$, we apply a fixpoint operation to $p$ and if $b=-1$, we apply a fixpoint operation to $q$. From the properties of local complementations along edges (see section \ref{s:equivalence_GS-LC}) it follows that this transformation preserves the two properties of rGS-LC states. 
\end{proof}
\section{Completeness}
\label{s:completeness}

\subsection{Comparing rGS-LC diagrams}
\label{s:equality_testing}

Theorem \ref{thm:ZX_rGS-LC} shows that any stabilizer state diagram is equal to some rGS-LC diagram. Thus, the \ZX-calculus is complete for stabilizer states if, given two rGS-LC diagrams representing the same state, we can show that they are equal using the rules of the \ZX-calculus. Again, we follow \cite{elliott_graphical_2008}.

\begin{dfn}
 A pair of rGS-LC diagrams on the same number of qubit is called \emph{simplified} if there are no pairs of qubits $p,q$ such that $p$ has a red node in its vertex operator in the first diagram but not in the second, $q$ has a red node in the second diagram but not in the first, and $p$ and $q$ are adjacent in at least one of the diagrams.
\end{dfn}

\begin{prop}
 Any pair of rGS-LC diagrams on $n$ qubits can be simplified.
\end{prop}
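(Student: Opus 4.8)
The plan is to destroy ``bad pairs'' one at a time using the equivalence transformations of Propositions \ref{prop:rGS-LC_transformation1} and \ref{prop:rGS-LC_transformation2}, while a potential strictly decreases at each step. Let $D_1,D_2$ be rGS-LC diagrams on $n$ qubits. Since each $D_i$ is in rGS-LC form, every vertex operator lies in the set $R$ of \eqref{eq:reduced_vertex_operators}; call a qubit \emph{red-type in $D_i$} if its vertex operator in $D_i$ is one of the two elements of $R$ containing a red node, and \emph{green-type} otherwise, in which case its vertex operator is one of the four green phase operators in $R$. Let $A$ be the set of qubits red-type in $D_1$ but green-type in $D_2$, and $B$ the set of qubits red-type in $D_2$ but green-type in $D_1$; these sets are disjoint, so $|A|+|B|\le n$. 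A \emph{bad pair} is a pair $(p,q)\in A\times B$ with $p,q$ adjacent in $D_1$ or in $D_2$, and $(D_1,D_2)$ is simplified exactly when no bad pair exists.

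First I would show how to destroy a single bad pair $(p,q)$, assuming first that $p$ and $q$ are joined by an edge in $D_1$. In $D_1$ the vertex operator of $p$ is red-type and that of $q$ is green-type, so the local configuration around this edge is exactly the hypothesis of Proposition \ref{prop:rGS-LC_transformation1} when $q$'s vertex operator is the identity or the green $\pi$ phase, and the hypothesis of Proposition \ref{prop:rGS-LC_transformation2} when it is the green $\pm\pi/2$ phase; these four cases exhaust the green vertex operators, and $p$'s red-type vertex operator matches the common ``$p$-slot'' of both propositions, so one of the two applies. Carrying out the corresponding transformation (two local complementations, or one local complementation along the edge, followed by at most two fixpoint operations) turns $D_1$ into an equal rGS-LC diagram $D_1'$; moreover the vertex-operator computations displayed in the two proofs show that in $D_1'$ the qubit $p$ has become green-type and $q$ has become red-type, while every vertex operator other than those of $p$ and $q$ has only been premultiplied by green phase operators. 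If instead $p$ and $q$ are adjacent in $D_2$ but not in $D_1$, I would perform the analogous transformation inside $D_2$ with the roles of $p$ and $q$ interchanged (so the red-type qubit $q$ plays the ``$p$-slot'').

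Next I would track the potential $|A|+|B|$. By the reasoning in the proof of Theorem \ref{thm:ZX_rGS-LC}, once a vertex operator lies in $R$ further premultiplication by green phase operators does not change whether its normal form contains a red node; hence in the step above no qubit other than $p$ and $q$ changes red/green-type in the diagram acted on, and the other diagram is untouched entirely. Therefore after the step $p$ has left $A$ (it is now green-type in both diagrams), $q$ has left $B$ (now red-type in both), and no qubit has joined $A$ or $B$, so $|A|+|B|$ has dropped by exactly $2$. Since $|A|+|B|$ is a non-negative integer, after at most $n/2$ steps there is no bad pair left, and the resulting pair of rGS-LC diagrams---each equal in the \ZX-calculus to the corresponding original---is simplified, which proves the proposition.

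The step I expect to be the crux is arguing that the edges introduced or removed by the local complementations cannot create a new bad pair. This comes down to the two facts used above: Propositions \ref{prop:rGS-LC_transformation1} and \ref{prop:rGS-LC_transformation2} genuinely return rGS-LC diagrams (so all vertex operators stay in $R$ and property \ref{it:rGS-LC:H} of Definition \ref{dfn:rGS-LC} still holds), and the only qubits whose red/green-type changes are the two operands, every other vertex operator being modified by green phases alone; together with the fact that the untouched diagram does not change, this pins down exactly which qubits lie in $A$ and $B$ after the step. A routine point to verify en route is the case split asserting that each green phase operator in $R$ is covered by the hypothesis of Proposition \ref{prop:rGS-LC_transformation1} or Proposition \ref{prop:rGS-LC_transformation2}.
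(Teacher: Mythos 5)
Your proof is correct and follows essentially the same route as the paper's: repeatedly apply the appropriate transformation from Propositions \ref{prop:rGS-LC_transformation1} and \ref{prop:rGS-LC_transformation2} in whichever diagram the bad pair is adjacent, observing that this pairs up the two red nodes while leaving every other qubit's red/green type unchanged. The paper states this more tersely (and bounds the number of steps by $n$ rather than $n/2$), but your potential-function bookkeeping with $|A|+|B|$ is just a more explicit version of the same termination argument.
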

\begin{proof}
 Suppose there exists a pair of qubits $p,q$ such that $p$ has a red node in its vertex operator in the first diagram but not in the second, $q$ has a red node in the second diagram but not in the first, and $p$ and $q$ are adjacent in at least one of the diagrams. Then in the diagram in which they are adjacent, we can apply the apropriate one of the equivalence transformations given in section \ref{s:equivalence_rGS-LC}. The equivalence rules do not change the total number of red nodes among the vertex operators. Each such application pairs up red nodes between the two diagrams. Paired up qubits do not participate further in these transformations, therefore in less than $n$ steps the pair of diagrams is simplified.
\end{proof}

\begin{lem}\label{lem:unpaired_red_node}
 Consider a simplified pair of rGS-LC diagrams and suppose there exists an unpaired red node, i.e.\ there is a qubit $p$ which has a red node in its vertex operator in one of the diagrams, but not in the other. Then the two diagrams are not equal.
\end{lem}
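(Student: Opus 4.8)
The plan is to attach to the underlying \emph{state} an invariant that detects how many red‑node vertex operators a diagram has and where they sit, and then to contradict the rGS‑LC and ``simplified'' properties. The invariant is the subgroup $S_Z$ of the stabilizer group of the state consisting of those Pauli products supported only on $\{I,Z\}$ (up to sign). Recording, for each element $\pm\bigotimes_{j\in T}Z_j\in S_Z$, the set $T\subseteq V$ of qubits on which it acts nontrivially turns $S_Z$ into a binary linear code $\mathcal C\subseteq\mathbb F_2^V$; since $-I$ is never a stabilizer of a state this assignment is injective, so $\mathcal C$ — in particular $\dim_{\mathbb F_2}\mathcal C$ — depends only on the state, not on the chosen diagram.

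First I would compute $\mathcal C$ for an arbitrary rGS‑LC diagram with graph $G=(V,E)$ and vertex operators in the set $R$. The stabilizer group is generated by the operators $UK_vU^\dagger$, where $K_v=X_v\otimes\bigotimes_{u\sim v}Z_u$ are the generators of Definition~\ref{dfn:graph_state} and $U$ is the tensor product of the vertex operators. Checking the six elements of $R$ directly, a green‑phase vertex operator conjugates $Z$ to $Z$ and $X$ to $\pm X$ or $\pm Y$, whereas a red‑node vertex operator conjugates $X$ to $\pm Z$ and $Z$ to $\pm Y$. Hence for a qubit $w$ whose vertex operator contains a red node, property~\ref{it:rGS-LC:H} forces every neighbour of $w$ to be green, so $UK_wU^\dagger=\pm Z_w\otimes\bigotimes_{u\sim w}Z_u$ is pure‑$Z$ with support vector $c_w:=e_w+\sum_{u\sim w}e_u$. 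Conversely a product $\prod_{v\in A}UK_vU^\dagger$ can be pure‑$Z$ only if every $v\in A$ is a red‑node qubit — a green $w\in A$ leaves an $X$‑ or $Y$‑component on qubit $w$ — and since $A$ then consists of pairwise non‑adjacent red qubits the product has support vector $\sum_{w\in A}c_w$. Because distinct red qubits are non‑adjacent, $c_w$ is the only one of these vectors with a nonzero $w$‑coordinate, so $\{c_w:w\ \mathrm{red}\}$ is a basis of $\mathcal C$ and $\abs{S_Z}=2^{(\text{number of red nodes})}$. In particular equal rGS‑LC diagrams have equally many red nodes.

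To finish, suppose the two diagrams are equal; write $W_i$ for the set of red‑node qubits of diagram $i$, so $\{c_w:w\in W_i\}$ are two bases of the one code $\mathcal C$, with $c_w$ the indicator of $\{w\}\cup N_{G_i}(w)$. Without loss of generality $p\in W_1\setminus W_2$. Expand $c_p=\sum_{w\in A}c_w$ with $A\subseteq W_2$; comparing $p$‑coordinates and using $p\notin W_2\supseteq A$ shows $\abs{A\cap N_{G_2}(p)}$ is odd, so some $r\in A$ is adjacent to $p$ in diagram~2. Then $r\in W_2$; were $r\notin W_1$, the qubits $p$ (red in diagram~1 only) and $r$ (red in diagram~2 only) would be adjacent in diagram~2, contradicting that the pair is simplified. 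So $r\in W_1$. Comparing now the $r$‑coordinates of $c_p=\sum_{w\in A}c_w$, and using that red qubits are pairwise non‑adjacent in diagram~2 so that only the term $w=r$ contributes on the right, forces $p\sim_{G_1}r$; but $p,r\in W_1$, contradicting property~\ref{it:rGS-LC:H}. Hence the diagrams are not equal.

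The substantive step is the second one: establishing that $\mathcal C$ is exactly the span of the $c_w$ over red qubits — in particular that any product of stabilizer generators touching a green vertex fails to be pure‑$Z$, and that the non‑adjacency of red vertices makes the parity conditions that would otherwise appear at red vertices vacuous. Once that identification is in hand, the remainder is linear algebra over $\mathbb F_2$ read off coordinate by coordinate, together with a single use of the ``simplified'' hypothesis.
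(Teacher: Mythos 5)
Your proof is correct, and it takes a genuinely different route from the paper's. The paper splits into cases on whether $p$ is isolated in $D_1$ and/or $D_2$ and, in the main case, applies an explicit unitary $\left(\bigotimes_{v\in N_1}\wedge X_{v\to p}\right)\circ R_Z^{p}$ to both diagrams, tracking diagrammatically that $p$ factors out in a $Z$-basis state in $U\circ D_1$ but not in $U\circ D_2$. You instead extract from the state the invariant $S_Z\cong\mathcal{C}\subseteq\mathbb{F}_2^V$ and identify $\mathcal{C}$ for an rGS-LC diagram as the span of the vectors $c_w$ over red-node qubits $w$; the two facts carrying the weight --- that condition \ref{it:rGS-LC:H} of Definition \ref{dfn:rGS-LC} forces $UK_wU^\dagger$ to be pure-$Z$ for red $w$ (all its neighbours being green), and that any product involving a green generator retains an $X$- or $Y$-component on that green qubit --- are correct (I verified the conjugation actions of $R_XR_Z^{\pm1}$ and of powers of $R_Z$), as is the linear independence of the $c_w$, which follows from the pairwise non-adjacency of red qubits. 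The closing $\mathbb{F}_2$ computation and the single appeal to the simplified hypothesis are also sound, and your argument absorbs the paper's isolated-vertex cases uniformly (if $N_{G_2}(p)=\emptyset$ the parity condition at the $p$-coordinate already fails). What your approach buys is a short, uniform argument replacing the paper's case analysis and lengthy diagram manipulations; what it forgoes is any diagrammatic content --- it lives entirely in the matrix-level stabilizer formalism --- but that is harmless here, since inequality of diagrams is a semantic claim and the paper's own proof likewise leans on semantic facts (e.g., the local-complementation classification of graph states) at the corresponding points.
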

\begin{proof}
 Let $D_1$ be the diagram in which $p$ has the red node, $D_2$ the other diagram. There are multiple cases:

 \emph{In either diagram, $p$ has no neighbours}: In this case, the overall quantum state factorises and the two diagrams are equal only if the two states of $p$ are the same. But
 \begin{center}
  \begin{tikzpicture}
	\begin{pgfonlayer}{nodelayer}
		\node [style=gn,label={[gphase]right:$\alpha$}] (0) at (-3.4, -0) {};
		\node [style=rn,label={[rphase]right:$-b\pi/2$}] (1) at (0.9, -0.3) {};
		\node [style=rn,label={[rphase]right:$\pi/2$}] (2) at (0.9, 0.3) {};
		\node [style=none] (3) at (0.9, 0.55) {};
		\node [style=none] (4) at (-3.4, 0.25) {};
		\node [style=none] (5) at (-2.5, -0) {$\neq$};
		\node [style=none] (6) at (-4.8, 0.5) {};
		\node [style=gn,label={[gphase]right:$\alpha$}] (7) at (-4.8, 0.3) {};
		\node [style=gn] (8) at (-4.8, -0.3) {};
		\node [style=none] (9) at (-3.9, -0) {$=$};
		\node [style=none] (10) at (0.35, -0) {$=$};
		\node [style=rn,label={[rphase]right:$(1-b)\pi/2$}] (11) at (-2, -0) {};
		\node [style=none] (12) at (-2, 0.25) {};
		\node [style=rn,label={[rphase]right:$\pi/2$}] (13) at (3.1, 0.3) {};
		\node [style=none] (14) at (2.6, -0) {$=$};
		\node [style=gn,label={[gphase]right:$b\pi/2$}] (15) at (3.1, -0.3) {};
		\node [style=none] (16) at (3.1, 0.55) {};
		\node [style=rn,label={[rphase]right:$\pi/2$}] (17) at (5, 0.6) {};
		\node [style=gn] (18) at (5, -0.6) {};
		\node [style=gn,label={[gphase]right:$b\pi/2$}] (19) at (5, -0) {};
		\node [style=none] (20) at (4.5, -0) {$=$};
		\node [style=none] (21) at (5, 0.85) {};
	\end{pgfonlayer}
	\begin{pgfonlayer}{edgelayer}
		\draw (4.center) to (0);
		\draw (3.center) to (1);
		\draw (6.center) to (8);
		\draw (12.center) to (11);
		\draw (16.center) to (15);
		\draw (21.center) to (18);
	\end{pgfonlayer}
  \end{tikzpicture}
 \end{center}
 for $\alpha\in\{0,\pi/2,\pi,-\pi/2\}$ and $b\in\{\pm 1\}$, so the diagrams must be unequal.

 \emph{$p$ is isolated in one of the diagrams but not in the other}: We know that two graph states are equal only if one can be transformed into the other via a sequence of local complementations. A local complementation never turns a vertex with neighbours into a vertex without neighbours, or conversely. Thus the two diagrams cannot be equal.

 \emph{$p$ has neighbours in both diagrams}: Let $N_1$ be the set of all qubits that are adjacent to $p$ in $D_1$, and define $N_2$ similarly. The vertex operators of any qubit in $N_1$ must be green phases in both diagrams. In $D_1$, this is because of the definition of rGS-LC states, in $D_2$ it is because the pair of diagrams is simplified. To both diagrams apply the operation
 \[
  U = \left(\bigotimes_{v\in N_1} \wedge X_{v\to p}\right)\circ R_Z^{p},
 \]
 where $R_Z^{p}$ denotes \phase{gn,label={[gphase]right:$\pi/2$}} on $p$, and $\wedge X_{v\to p}$ is a controlled-X operation with control $v$ and target $p$. The controlled-X gates with different controls and the same target commute, so this is well-defined. Furthermore, $U$ is invertible, so (in a slight abuse of notation) $U\circ D_1 = U\circ D_2 \Leftrightarrow D_1 = D_2$. We will show that, no matter what the properties of $D_1$ and $D_2$ are (beyond the existence of an unpaired red node on $p$),
 \begin{mitem}
  \item in $U\circ D_1$, qubit $p$ is in state \state{rn} or \state{rn,label={[rphase]right:$\pi$}};
  \item in $U\circ D_2$, $p$ is either entangled with other qubits, or in one of the states \state{gn,label={[gphase]right:$\phi$}}, where $\phi\in\{0,\pi/2,\pi,-\pi/2\}$.
 \end{mitem}
 By the arguments used in the first two cases, this implies that $U\circ D_1\neq U\circ D_2$ and therefore $D_1\neq D_2$.

 Let $n=\abs{N_1}$, $m=\abs{N_1\cap N_2}$, and suppose the qubits are arranged in such a way that the first $m$ elements of $N_1$ are those which are also elements of $N_2$, if there are any. Consider first the effect on diagram $D_1$. The local Clifford operator on $p$ combines with the $R_Z$ from $U$ to
 \[
  R_Z\circ R_X\circ R_Z^{\pm 1} = H\circ Z^{a},
 \]
 where $a = (1\mp 1)/2$. Thus $U\circ D_1$ is equal to
 \begin{center}
  \begin{tikzpicture}
	\begin{pgfonlayer}{nodelayer}
		\node [style=gn] (0) at (-7.5, -0) {};
		\node [style=gn] (1) at (-6.5, -1) {};
		\node [style=gn] (2) at (-5.5, -1) {};
		\node [style=gn] (3) at (-4, -1) {};
		\node [style=gn,label={[gphase]right:$a\pi$}] (4) at (-7.5, 0.5) {};
		\node [style=gn,label={[gphase]right:$\alpha_1$}] (5) at (-6.5, 0.5) {};
		\node [style=gn,label={[gphase]right:$\alpha_2$}] (6) at (-5.5, 0.5) {};
		\node [style=gn,label={[gphase]right:$\alpha_n$}] (7) at (-4, 0.5) {};
		\node [style=Hadamard] (8) at (-7.5, 1) {};
		\node [style=none,label={left:$p$}] (9) at (-7.5, 1.75) {};
		\node [style=none] (10) at (-6.5, 1.75) {};
		\node [style=none] (11) at (-5.5, 1.75) {};
		\node [style=none] (12) at (-4, 1.75) {};
		\node [style=Hadamard] (13) at (-7, -0.5) {};
		\node [style=Hadamard] (14) at (-6, -0.75) {};
		\node [style=Hadamard] (15) at (-5, -0.75) {};
		\node [style=none] (16) at (-6.8, -1.5) {};
		\node [style=none] (17) at (-6.5, -1.45) {$\ldots$};
		\node [style=none] (18) at (-6.2, -1.5) {};
		\node [style=none] (19) at (-5.8, -1.5) {};
		\node [style=none] (20) at (-5.5, -1.45) {$\ldots$};
		\node [style=none] (21) at (-5.2, -1.5) {};
		\node [style=none] (22) at (-4.3, -1.5) {};
		\node [style=none] (23) at (-4, -1.45) {$\ldots$};
		\node [style=none] (24) at (-3.7, -1.5) {};
		\node [style=rn] (25) at (-7.5, 1.5) {};
		\node [style=gn] (26) at (-6.5, 1) {};
		\node [style=gn] (27) at (-5.5, 1) {};
		\node [style=gn] (28) at (-4, 1) {};
		\node [style=none] (29) at (-4.75, -1.25) {$\ldots$};
		\node [style=none] (30) at (-4.75, 1) {$\cdots$};
		\node [style=none] (31) at (-3, -0) {$=$};
		\node [style=Hadamard] (32) at (-0.5, -0.75) {};
		\node [style=none] (33) at (1.5, 1.75) {};
		\node [style=gn] (34) at (0, 1.5) {};
		\node [style=none] (35) at (0.3, -1.5) {};
		\node [style=gn,label={[gphase]right:$\alpha_n$}] (36) at (1.5, 0.5) {};
		\node [style=none] (37) at (0.75, 1.5) {$\cdots$};
		\node [style=gn] (38) at (1.5, 1.5) {};
		\node [style=none] (39) at (-0.3, -1.5) {};
		\node [style=none] (40) at (1.8, -1.5) {};
		\node [style=none] (41) at (-1, -1.45) {$\ldots$};
		\node [style=none] (42) at (0, -1.45) {$\ldots$};
		\node [style=Hadamard] (43) at (0.5, -0.75) {};
		\node [style=none] (44) at (1.5, -1.45) {$\ldots$};
		\node [style=none] (45) at (-0.7, -1.5) {};
		\node [style=none] (46) at (0.75, -1.25) {$\ldots$};
		\node [style=gn,label={[gphase]right:$\alpha_1$}] (47) at (-1, 0.5) {};
		\node [style=Hadamard] (48) at (-1.5, -0.5) {};
		\node [style=gn,label={[gphase]left:$a\pi$}] (49) at (-2, 1) {};
		\node [style=none] (50) at (-1, 1.75) {};
		\node [style=gn,label={[gphase]right:$\alpha_2$}] (51) at (0, 0.5) {};
		\node [style=Hadamard] (52) at (-2, 1.5) {};
		\node [style=gn] (53) at (1.5, -1) {};
		\node [style=gn] (54) at (-1, -1) {};
		\node [style=none] (55) at (1.2, -1.5) {};
		\node [style=gn] (56) at (0, -1) {};
		\node [style=none] (57) at (-1.3, -1.5) {};
		\node [style=gn] (58) at (-1, 1.5) {};
		\node [style=none] (59) at (0, 1.75) {};
		\node [style=gn] (60) at (-2, 0.5) {};
		\node [style=gn] (61) at (-2, -0) {};
		\node [style=none] (62) at (-2, 1.75) {};
		\node [style=Hadamard] (63) at (-1.5, 1) {};
		\node [style=Hadamard] (64) at (-0.5, 1.25) {};
		\node [style=Hadamard] (65) at (0.5, 1.25) {};
		\node [style=none] (66) at (5.25, 1.75) {};
		\node [style=gn,label={[gphase]right:$\alpha_1$}] (67) at (4.25, 0.5) {};
		\node [style=none] (68) at (4.25, 1.75) {};
		\node [style=none] (69) at (4.55, -1.5) {};
		\node [style=gn] (70) at (5.25, -1) {};
		\node [style=none] (71) at (3.95, -1.5) {};
		\node [style=gn,label={[gphase]right:$\alpha_2$}] (72) at (5.25, 0.5) {};
		\node [style=none] (73) at (3.25, 1.75) {};
		\node [style=Hadamard] (74) at (3.25, 1) {};
		\node [style=none] (75) at (6.45, -1.5) {};
		\node [style=none] (76) at (6, -1.25) {$\ldots$};
		\node [style=none] (77) at (6.35, 0.5) {$\cdots$};
		\node [style=none] (78) at (4.95, -1.5) {};
		\node [style=none] (79) at (6.75, -1.45) {$\ldots$};
		\node [style=none] (80) at (2.5, -0) {$=$};
		\node [style=none] (81) at (6.75, 1.75) {};
		\node [style=none] (82) at (5.55, -1.5) {};
		\node [style=gn] (83) at (6.75, -1) {};
		\node [style=gn,label={[gphase]right:$a\pi$}] (84) at (3.25, 0.5) {};
		\node [style=gn,label={[gphase]right:$\alpha_n$}] (85) at (6.75, 0.5) {};
		\node [style=none] (86) at (5.25, -1.45) {$\ldots$};
		\node [style=gn] (87) at (4.25, -1) {};
		\node [style=none] (88) at (7.05, -1.5) {};
		\node [style=none] (89) at (4.25, -1.45) {$\ldots$};
		\node [style=gn] (90) at (3.25, -0) {};
		\node [style=none] (91) at (-4.4, 0.5) {$\cdots$};
		\node [style=none] (92) at (1.1, 0.5) {$\cdots$};
	\end{pgfonlayer}
	\begin{pgfonlayer}{edgelayer}
		\draw (0) to (1);
		\draw (0) to (2);
		\draw (0) to (3);
		\draw (9.center) to (0);
		\draw (10.center) to (1);
		\draw (11.center) to (2);
		\draw (12.center) to (3);
		\draw [bend right=15] (1) to (16.center);
		\draw [bend left=15] (1) to (18.center);
		\draw [bend right=15] (2) to (19.center);
		\draw [bend left=15] (2) to (21.center);
		\draw [bend right=15] (3) to (22.center);
		\draw [bend left=15] (3) to (24.center);
		\draw (25) to (26);
		\draw (25) to (27);
		\draw (25) to (28);
		\draw (61) to (54);
		\draw (61) to (56);
		\draw (61) to (53);
		\draw (62.center) to (61);
		\draw (50.center) to (54);
		\draw (59.center) to (56);
		\draw (33.center) to (53);
		\draw [bend right=15] (54) to (57.center);
		\draw [bend left=15] (54) to (45.center);
		\draw [bend right=15] (56) to (39.center);
		\draw [bend left=15] (56) to (35.center);
		\draw [bend right=15] (53) to (55.center);
		\draw [bend left=15] (53) to (40.center);
		\draw (60) to (58);
		\draw (60) to (34);
		\draw (60) to (38);
		\draw (73.center) to (90);
		\draw (68.center) to (87);
		\draw (66.center) to (70);
		\draw (81.center) to (83);
		\draw [bend right=15] (87) to (71.center);
		\draw [bend left=15] (87) to (69.center);
		\draw [bend right=15] (70) to (78.center);
		\draw [bend left=15] (70) to (82.center);
		\draw [bend right=15] (83) to (75.center);
		\draw [bend left=15] (83) to (88.center);
	\end{pgfonlayer}
  \end{tikzpicture}
 \end{center}
 Here, $\alpha_k\in\{0,\pi/2,\pi,-\pi/2\}$ for $k=1,\ldots,n$ and we have used the fact that green nodes can be moved past each other. Note that at the end, qubit $p$ is isolated and in the state \state{rn,label={[rphase]right:$a\pi$}}.

 Next consider $D_2$. As $N_2$ is not in general equal to $N_1$, there may be qubits adjacent to $p$ which do not have controlled-X gates applied to them, qubits which have controlled-X gates applied to them but are not adjacent to $p$, and qubits which are adjacent to $p$ and have controlled-X gates applied to them. In the following diagram, $\beta$ and $\gamma_1,\ldots,\gamma_n$ are multiples of $\pi/2$ as usual. The phase $\beta$ is the combination of the original local Clifford operator on $p$ and the $R_Z$ part of $U$. Similar to before, we do not care about edges that do not involve $p$. This time we also neglect edges between $p$ and vertices not in $N_1$:
 \begin{center}
  \begin{tikzpicture}
	\begin{pgfonlayer}{nodelayer}
		\node [style=gn] (0) at (-2.5, -0) {};
		\node [style=gn] (1) at (-1.5, -1) {};
		\node [style=gn] (2) at (0, -1) {};
		\node [style=gn,label={[gphase]left:$\beta$}] (3) at (-2.5, 1) {};
		\node [style=gn,label={[gphase]right:$\gamma_1$}] (4) at (-1.5, -0) {};
		\node [style=gn,label={[gphase]right:$\gamma_m$}] (5) at (0, -0) {};
		\node [style=none,label={left:$p$}] (6) at (-2.5, 1.75) {};
		\node [style=none] (7) at (-1.5, 1.75) {};
		\node [style=none] (8) at (0, 1.75) {};
		\node [style=Hadamard] (9) at (-2, -0.5) {};
		\node [style=Hadamard] (10) at (-1.25, -0.5) {};
		\node [style=none] (11) at (-1.8, -1.5) {};
		\node [style=none] (12) at (-1.5, -1.45) {$\ldots$};
		\node [style=none] (13) at (-1.2, -1.5) {};
		\node [style=none] (14) at (-0.3, -1.5) {};
		\node [style=none] (15) at (0, -1.45) {$\ldots$};
		\node [style=none] (16) at (0.3, -1.5) {};
		\node [style=rn] (17) at (-2.5, 1.5) {};
		\node [style=gn] (18) at (-1.5, 0.5) {};
		\node [style=gn] (19) at (0, 0.5) {};
		\node [style=none] (20) at (-0.75, -1.25) {$\ldots$};
		\node [style=none] (21) at (-0.75, 0.5) {$\cdots$};
		\node [style=none] (22) at (-3, 0.25) {};
		\node [style=none] (23) at (-2.95, 0.1) {$\vdots$};
		\node [style=none] (24) at (-3, -0.25) {};
		\node [style=none] (25) at (1, -1.45) {$\ldots$};
		\node [style=gn] (26) at (2.75, 0.5) {};
		\node [style=none] (27) at (1.875, -1.25) {$\ldots$};
		\node [style=none] (28) at (2.45, -1.5) {};
		\node [style=none] (29) at (3.05, -1.5) {};
		\node [style=gn] (30) at (1, -1) {};
		\node [style=gn,label={[gphase]right:$\gamma_{m+1}$}] (31) at (1, -0) {};
		\node [style=none] (32) at (1.3, -1.5) {};
		\node [style=none] (33) at (0.7, -1.5) {};
		\node [style=none] (34) at (1.875, 0.5) {$\cdots$};
		\node [style=gn,label={[gphase]right:$\gamma_n$}] (35) at (2.75, -0) {};
		\node [style=none] (36) at (2.75, 1.75) {};
		\node [style=gn] (37) at (2.75, -1) {};
		\node [style=none] (38) at (2.75, -1.45) {$\ldots$};
		\node [style=none] (39) at (1, 1.75) {};
		\node [style=gn] (40) at (1, 0.5) {};
		\node [style=none] (41) at (-0.45, -0) {$\cdots$};
		\node [style=none] (42) at (2.4, -0) {$\cdots$};
	\end{pgfonlayer}
	\begin{pgfonlayer}{edgelayer}
		\draw (0) to (1);
		\draw (0) to (2);
		\draw (6.center) to (0);
		\draw (7.center) to (1);
		\draw (8.center) to (2);
		\draw [bend right=15] (1) to (11.center);
		\draw [bend left=15] (1) to (13.center);
		\draw [bend right=15] (2) to (14.center);
		\draw [bend left=15] (2) to (16.center);
		\draw (17) to (18);
		\draw (17) to (19);
		\draw (39.center) to (30);
		\draw (36.center) to (37);
		\draw [bend right=15] (30) to (33.center);
		\draw [bend left=15] (30) to (32.center);
		\draw [bend right=15] (37) to (28.center);
		\draw [bend left=15] (37) to (29.center);
		\draw (17) to (40);
		\draw (17) to (26);
		\draw [bend right=15] (0) to (22.center);
		\draw [bend left=15] (0) to (24.center);
	\end{pgfonlayer}
  \end{tikzpicture}
 \end{center}
 We will distinguish different cases, depending on the value of $\beta$.

 If $\beta=\pi/2$, apply a local complementation and a fixpoint operation about $p$. This does not change the edges incident on $p$:
 \begin{center}
  \begin{tikzpicture}
	\begin{pgfonlayer}{nodelayer}
		\node [style=gn] (0) at (-6.25, -0) {};
		\node [style=gn] (1) at (-5.25, -1) {};
		\node [style=gn] (2) at (-3.75, -1) {};
		\node [style=gn,label={[gphase]left:$\pi/2$}] (3) at (-6.25, 1) {};
		\node [style=gn,label={[gphase]right:$\gamma_1$}] (4) at (-5.25, -0) {};
		\node [style=gn,label={[gphase]right:$\gamma_m$}] (5) at (-3.75, -0) {};
		\node [style=none,label={left:$p$}] (6) at (-6.25, 1.75) {};
		\node [style=none] (7) at (-5.25, 1.75) {};
		\node [style=none] (8) at (-3.75, 1.75) {};
		\node [style=Hadamard] (9) at (-5.75, -0.5) {};
		\node [style=Hadamard] (10) at (-5, -0.5) {};
		\node [style=none] (11) at (-5.55, -1.5) {};
		\node [style=none] (12) at (-5.25, -1.45) {$\ldots$};
		\node [style=none] (13) at (-4.95, -1.5) {};
		\node [style=none] (14) at (-4.05, -1.5) {};
		\node [style=none] (15) at (-3.75, -1.45) {$\ldots$};
		\node [style=none] (16) at (-3.45, -1.5) {};
		\node [style=rn] (17) at (-6.25, 1.5) {};
		\node [style=gn] (18) at (-5.25, 0.5) {};
		\node [style=gn] (19) at (-3.75, 0.5) {};
		\node [style=none] (20) at (-4.5, -1.25) {$\ldots$};
		\node [style=none] (21) at (-4.5, 0.5) {$\cdots$};
		\node [style=none] (22) at (-6.75, 0.25) {};
		\node [style=none] (23) at (-6.7, 0.1) {$\vdots$};
		\node [style=none] (24) at (-6.75, -0.25) {};
		\node [style=none] (25) at (-2.75, -1.45) {$\ldots$};
		\node [style=gn] (26) at (-1, 0.5) {};
		\node [style=none] (27) at (-1.875, -1.25) {$\ldots$};
		\node [style=none] (28) at (-1.3, -1.5) {};
		\node [style=none] (29) at (-0.7, -1.5) {};
		\node [style=gn] (30) at (-2.75, -1) {};
		\node [style=gn,label={[gphase]right:$\gamma_{m+1}$}] (31) at (-2.75, -0) {};
		\node [style=none] (32) at (-2.45, -1.5) {};
		\node [style=none] (33) at (-3.05, -1.5) {};
		\node [style=none] (34) at (-1.875, 0.5) {$\cdots$};
		\node [style=gn,label={[gphase]right:$\gamma_n$}] (35) at (-1, -0) {};
		\node [style=none] (36) at (-1, 1.75) {};
		\node [style=gn] (37) at (-1, -1) {};
		\node [style=none] (38) at (-1, -1.45) {$\ldots$};
		\node [style=none] (39) at (-2.75, 1.75) {};
		\node [style=gn] (40) at (-2.75, 0.5) {};
		\node [style=none] (41) at (0, -0) {$=$};
		\node [style=gn,label={[gphase]left:$\pi/2$}] (42) at (1, 1) {};
		\node [style=none] (43) at (6.8, -1.5) {};
		\node [style=gn,label={[gphase]right:$\gamma_1'$}] (44) at (2.25, -0) {};
		\node [style=Hadamard] (45) at (3, -0.75) {};
		\node [style=none] (46) at (1, 1.75) {};
		\node [style=gn,label={[gphase]right:$\gamma_{m+1}$}] (47) at (4.75, -0) {};
		\node [style=Hadamard] (48) at (1.625, -0.5) {};
		\node [style=none] (49) at (0.5, 0.25) {};
		\node [style=none] (50) at (2.25, -1.45) {$\ldots$};
		\node [style=none] (51) at (0.55, 0.1) {$\vdots$};
		\node [style=gn] (52) at (6.5, -1) {};
		\node [style=none] (53) at (0.5, -0.25) {};
		\node [style=gn] (54) at (1, -0) {};
		\node [style=none] (55) at (3.45, -1.5) {};
		\node [style=gn,label={[gphase]right:$\gamma_n$}] (56) at (6.5, -0) {};
		\node [style=none] (57) at (5.625, 0.5) {$\cdots$};
		\node [style=gn] (58) at (2.25, 0.5) {};
		\node [style=gn] (59) at (3.75, 0.5) {};
		\node [style=none] (60) at (3.75, -1.45) {$\ldots$};
		\node [style=none] (61) at (6.2, -1.5) {};
		\node [style=none] (62) at (2.25, 1.75) {};
		\node [style=none] (63) at (2.55, -1.5) {};
		\node [style=none] (64) at (4.05, -1.5) {};
		\node [style=none] (65) at (1.95, -1.5) {};
		\node [style=none] (66) at (6.5, -1.45) {$\ldots$};
		\node [style=none] (67) at (3, 0.5) {$\cdots$};
		\node [style=none] (68) at (6.5, 1.75) {};
		\node [style=gn,label={[gphase]right:$\gamma_m'$}] (69) at (3.75, -0) {};
		\node [style=gn] (70) at (3.75, -1) {};
		\node [style=none] (71) at (4.75, 1.75) {};
		\node [style=none] (72) at (5.625, -1.25) {$\ldots$};
		\node [style=gn] (73) at (4.75, 0.5) {};
		\node [style=none] (74) at (5.05, -1.5) {};
		\node [style=none] (75) at (3, -1.25) {$\ldots$};
		\node [style=none] (76) at (4.75, -1.45) {$\ldots$};
		\node [style=none] (77) at (3.75, 1.75) {};
		\node [style=rn] (78) at (1, 1.5) {};
		\node [style=gn] (79) at (6.5, 0.5) {};
		\node [style=gn] (80) at (4.75, -1) {};
		\node [style=gn] (81) at (2.25, -1) {};
		\node [style=none] (82) at (4.45, -1.5) {};
		\node [style=rn,label={[rphase]right:$\pi/2$}] (83) at (1, 0.5) {};
		\node [style=none] (84) at (-4.15, -0) {$\cdots$};
		\node [style=none] (85) at (-1.3, -0) {$\cdots$};
		\node [style=none] (86) at (6.2, -0) {$\cdots$};
		\node [style=none] (87) at (3.45, -0) {$\cdots$};
	\end{pgfonlayer}
	\begin{pgfonlayer}{edgelayer}
		\draw (0) to (1);
		\draw (0) to (2);
		\draw (6.center) to (0);
		\draw (7.center) to (1);
		\draw (8.center) to (2);
		\draw [bend right=15] (1) to (11.center);
		\draw [bend left=15] (1) to (13.center);
		\draw [bend right=15] (2) to (14.center);
		\draw [bend left=15] (2) to (16.center);
		\draw (17) to (18);
		\draw (17) to (19);
		\draw (39.center) to (30);
		\draw (36.center) to (37);
		\draw [bend right=15] (30) to (33.center);
		\draw [bend left=15] (30) to (32.center);
		\draw [bend right=15] (37) to (28.center);
		\draw [bend left=15] (37) to (29.center);
		\draw (17) to (40);
		\draw (17) to (26);
		\draw [bend right=15] (0) to (22.center);
		\draw [bend left=15] (0) to (24.center);
		\draw (54) to (81);
		\draw (54) to (70);
		\draw (46.center) to (54);
		\draw (62.center) to (81);
		\draw (77.center) to (70);
		\draw [bend right=15] (81) to (65.center);
		\draw [bend left=15] (81) to (63.center);
		\draw [bend right=15] (70) to (55.center);
		\draw [bend left=15] (70) to (64.center);
		\draw (78) to (58);
		\draw (78) to (59);
		\draw (71.center) to (80);
		\draw (68.center) to (52);
		\draw [bend right=15] (80) to (82.center);
		\draw [bend left=15] (80) to (74.center);
		\draw [bend right=15] (52) to (61.center);
		\draw [bend left=15] (52) to (43.center);
		\draw (78) to (73);
		\draw (78) to (79);
		\draw [bend right=15] (54) to (49.center);
		\draw [bend left=15] (54) to (53.center);
	\end{pgfonlayer}
  \end{tikzpicture}
 \end{center}
 \begin{center}
  \begin{tikzpicture}
	\begin{pgfonlayer}{nodelayer}
		\node [style=gn] (0) at (-6.25, -0) {};
		\node [style=gn] (1) at (-5.25, -1) {};
		\node [style=gn] (2) at (-3.75, -1) {};
		\node [style=Hadamard] (3) at (-6.25, 0.5) {};
		\node [style=gn,label={[gphase]right:$\gamma_1'$}] (4) at (-5.25, -0) {};
		\node [style=gn,label={[gphase]right:$\gamma_m'$}] (5) at (-3.75, -0) {};
		\node [style=none] (6) at (-6.25, 1.75) {};
		\node [style=none] (7) at (-5.25, 1.75) {};
		\node [style=none] (8) at (-3.75, 1.75) {};
		\node [style=Hadamard] (9) at (-5.75, -0.5) {};
		\node [style=Hadamard] (10) at (-5, -0.5) {};
		\node [style=none] (11) at (-5.55, -1.5) {};
		\node [style=none] (12) at (-5.25, -1.45) {$\ldots$};
		\node [style=none] (13) at (-4.95, -1.5) {};
		\node [style=none] (14) at (-4.05, -1.5) {};
		\node [style=none] (15) at (-3.75, -1.45) {$\ldots$};
		\node [style=none] (16) at (-3.45, -1.5) {};
		\node [style=rn] (17) at (-6.25, 1.5) {};
		\node [style=gn] (18) at (-5.25, 0.5) {};
		\node [style=gn] (19) at (-3.75, 0.5) {};
		\node [style=none] (20) at (-4.5, -1.25) {$\ldots$};
		\node [style=none] (21) at (-4.5, 0.5) {$\cdots$};
		\node [style=none] (22) at (-6.75, 0.25) {};
		\node [style=none] (23) at (-6.7, 0.1) {$\vdots$};
		\node [style=none] (24) at (-6.75, -0.25) {};
		\node [style=none] (25) at (-2.75, -1.45) {$\ldots$};
		\node [style=gn] (26) at (-1, 0.5) {};
		\node [style=none] (27) at (-1.875, -1.25) {$\ldots$};
		\node [style=none] (28) at (-1.3, -1.5) {};
		\node [style=none] (29) at (-0.7, -1.5) {};
		\node [style=gn] (30) at (-2.75, -1) {};
		\node [style=gn,label={[gphase]right:$\gamma_{m+1}$}] (31) at (-2.75, -0) {};
		\node [style=none] (32) at (-2.45, -1.5) {};
		\node [style=none] (33) at (-3.05, -1.5) {};
		\node [style=none] (34) at (-1.875, 0.5) {$\cdots$};
		\node [style=gn,label={[gphase]right:$\gamma_n$}] (35) at (-1, -0) {};
		\node [style=none] (36) at (-1, 1.75) {};
		\node [style=gn] (37) at (-1, -1) {};
		\node [style=none] (38) at (-1, -1.45) {$\ldots$};
		\node [style=none] (39) at (-2.75, 1.75) {};
		\node [style=gn] (40) at (-2.75, 0.5) {};
		\node [style=none] (41) at (0, -0) {$=$};
		\node [style=Hadamard] (42) at (1, 1) {};
		\node [style=none] (43) at (6.8, -1.5) {};
		\node [style=gn,label={[gphase]right:$\gamma_1'$}] (44) at (2.25, -0) {};
		\node [style=Hadamard] (45) at (3, -0.75) {};
		\node [style=none] (46) at (1, 1.75) {};
		\node [style=gn,label={[gphase]right:$\gamma_{m+1}$}] (47) at (4.75, -0) {};
		\node [style=Hadamard] (48) at (1.625, -0.5) {};
		\node [style=none] (49) at (0.5, 0.25) {};
		\node [style=none] (50) at (2.25, -1.45) {$\ldots$};
		\node [style=none] (51) at (0.55, 0.1) {$\vdots$};
		\node [style=gn] (52) at (6.5, -1) {};
		\node [style=none] (53) at (0.5, -0.25) {};
		\node [style=gn] (54) at (1, -0) {};
		\node [style=none] (55) at (3.45, -1.5) {};
		\node [style=gn,label={[gphase]right:$\gamma_n$}] (56) at (6.5, -0) {};
		\node [style=none] (57) at (5.625, 1.5) {$\cdots$};
		\node [style=gn] (58) at (2.25, 1.5) {};
		\node [style=gn] (59) at (3.75, 1.5) {};
		\node [style=none] (60) at (3.75, -1.45) {$\ldots$};
		\node [style=none] (61) at (6.2, -1.5) {};
		\node [style=none] (62) at (2.25, 1.75) {};
		\node [style=none] (63) at (2.55, -1.5) {};
		\node [style=none] (64) at (4.05, -1.5) {};
		\node [style=none] (65) at (1.95, -1.5) {};
		\node [style=none] (66) at (6.5, -1.45) {$\ldots$};
		\node [style=none] (67) at (3, 1.5) {$\cdots$};
		\node [style=none] (68) at (6.5, 1.75) {};
		\node [style=gn,label={[gphase]right:$\gamma_m'$}] (69) at (3.75, -0) {};
		\node [style=gn] (70) at (3.75, -1) {};
		\node [style=none] (71) at (4.75, 1.75) {};
		\node [style=none] (72) at (5.625, -1.25) {$\ldots$};
		\node [style=gn] (73) at (4.75, 1.5) {};
		\node [style=none] (74) at (5.05, -1.5) {};
		\node [style=none] (75) at (3, -1.25) {$\ldots$};
		\node [style=none] (76) at (4.75, -1.45) {$\ldots$};
		\node [style=none] (77) at (3.75, 1.75) {};
		\node [style=gn] (78) at (1, 0.5) {};
		\node [style=gn] (79) at (6.5, 1.5) {};
		\node [style=gn] (80) at (4.75, -1) {};
		\node [style=gn] (81) at (2.25, -1) {};
		\node [style=none] (82) at (4.45, -1.5) {};
		\node [style=rn,label={[rphase]left:$-\pi/2$}] (83) at (1, 1.5) {};
		\node [style=none] (84) at (-4.05, -0) {$\cdots$};
		\node [style=none] (85) at (-1.3, -0) {$\cdots$};
		\node [style=none] (86) at (6.2, -0) {$\cdots$};
		\node [style=none] (87) at (3.45, -0) {$\cdots$};
		\node [style=rn,label={[rphase]left:$-\pi/2$}] (88) at (-6.25, 1) {};
		\node [style=none] (89) at (-7.25, 0) {$=$};
		\node [style=Hadamard] (90) at (1.625, 1) {};
		\node [style=Hadamard] (91) at (2.75, 1.2) {};
		\node [style=Hadamard] (92) at (4.25, 1.375) {};
		\node [style=Hadamard] (93) at (5.25, 1.25) {};
	\end{pgfonlayer}
	\begin{pgfonlayer}{edgelayer}
		\draw (0) to (1);
		\draw (0) to (2);
		\draw (6.center) to (0);
		\draw (7.center) to (1);
		\draw (8.center) to (2);
		\draw [bend right=15] (1) to (11.center);
		\draw [bend left=15] (1) to (13.center);
		\draw [bend right=15] (2) to (14.center);
		\draw [bend left=15] (2) to (16.center);
		\draw (17) to (18);
		\draw (17) to (19);
		\draw (39.center) to (30);
		\draw (36.center) to (37);
		\draw [bend right=15] (30) to (33.center);
		\draw [bend left=15] (30) to (32.center);
		\draw [bend right=15] (37) to (28.center);
		\draw [bend left=15] (37) to (29.center);
		\draw (17) to (40);
		\draw (17) to (26);
		\draw [bend right=15] (0) to (22.center);
		\draw [bend left=15] (0) to (24.center);
		\draw (54) to (81);
		\draw (54) to (70);
		\draw (46.center) to (54);
		\draw (62.center) to (81);
		\draw (77.center) to (70);
		\draw [bend right=15] (81) to (65.center);
		\draw [bend left=15] (81) to (63.center);
		\draw [bend right=15] (70) to (55.center);
		\draw [bend left=15] (70) to (64.center);
		\draw (78) to (58);
		\draw (78) to (59);
		\draw (71.center) to (80);
		\draw (68.center) to (52);
		\draw [bend right=15] (80) to (82.center);
		\draw [bend left=15] (80) to (74.center);
		\draw [bend right=15] (52) to (61.center);
		\draw [bend left=15] (52) to (43.center);
		\draw (78) to (73);
		\draw (78) to (79);
		\draw [bend right=15] (54) to (49.center);
		\draw [bend left=15] (54) to (53.center);
	\end{pgfonlayer}
  \end{tikzpicture}
 \end{center}
 \begin{center}
  \begin{tikzpicture}
	\begin{pgfonlayer}{nodelayer}
		\node [style=gn] (0) at (-6.25, -0) {};
		\node [style=gn] (1) at (-5.25, -1) {};
		\node [style=gn] (2) at (-3.75, -1) {};
		\node [style=Hadamard] (3) at (-6.25, 0.5) {};
		\node [style=gn,label={[gphase]right:$\gamma_1'$}] (4) at (-5.25, 0.5) {};
		\node [style=gn,label={[gphase]right:$\gamma_m'$}] (5) at (-3.75, 0.5) {};
		\node [style=none] (6) at (-6.25, 1.25) {};
		\node [style=none] (7) at (-5.25, 1.25) {};
		\node [style=none] (8) at (-3.75, 1.25) {};
		\node [style=Hadamard] (9) at (-4.5, -0.5) {};
		\node [style=Hadamard] (10) at (-2.25, -0.75) {};
		\node [style=none] (11) at (-5.55, -1.5) {};
		\node [style=none] (12) at (-5.25, -1.45) {$\ldots$};
		\node [style=none] (13) at (-4.95, -1.5) {};
		\node [style=none] (14) at (-4.05, -1.5) {};
		\node [style=none] (15) at (-3.75, -1.45) {$\ldots$};
		\node [style=none] (16) at (-3.45, -1.5) {};
		\node [style=none] (20) at (-4.5, -1.25) {$\ldots$};
		\node [style=none] (22) at (-6.75, 0.25) {};
		\node [style=none] (23) at (-6.7, 0.1) {$\vdots$};
		\node [style=none] (24) at (-6.75, -0.25) {};
		\node [style=none] (25) at (-2.75, -1.45) {$\ldots$};
		\node [style=none] (27) at (-1.875, -1.25) {$\ldots$};
		\node [style=none] (28) at (-1.3, -1.5) {};
		\node [style=none] (29) at (-0.7, -1.5) {};
		\node [style=gn] (30) at (-2.75, -1) {};
		\node [style=gn,label={[gphase]right:$\gamma_{m+1}$}] (31) at (-2.75, 0.5) {};
		\node [style=none] (32) at (-2.45, -1.5) {};
		\node [style=none] (33) at (-3.05, -1.5) {};
		\node [style=gn,label={[gphase]right:$\gamma_n$}] (35) at (-1, 0.5) {};
		\node [style=none] (36) at (-1, 1.25) {};
		\node [style=gn] (37) at (-1, -1) {};
		\node [style=none] (38) at (-1, -1.45) {$\ldots$};
		\node [style=none] (39) at (-2.75, 1.25) {};
		\node [style=none] (84) at (-4.05, 0.5) {$\cdots$};
		\node [style=none] (85) at (-1.3, 0.5) {$\cdots$};
		\node [style=rn,label={[rphase]left:$-\pi/2$}] (88) at (-6.25, 1) {};
		\node [style=none] (89) at (-7.25, 0) {$=$};
	\end{pgfonlayer}
	\begin{pgfonlayer}{edgelayer}
		\draw (0) to (30);
		\draw (0) to (37);
		\draw (6.center) to (0);
		\draw (7.center) to (1);
		\draw (8.center) to (2);
		\draw [bend right=15] (1) to (11.center);
		\draw [bend left=15] (1) to (13.center);
		\draw [bend right=15] (2) to (14.center);
		\draw [bend left=15] (2) to (16.center);
		\draw (39.center) to (30);
		\draw (36.center) to (37);
		\draw [bend right=15] (30) to (33.center);
		\draw [bend left=15] (30) to (32.center);
		\draw [bend right=15] (37) to (28.center);
		\draw [bend left=15] (37) to (29.center);
		\draw [bend right=15] (0) to (22.center);
		\draw [bend left=15] (0) to (24.center);
	\end{pgfonlayer}
  \end{tikzpicture}
 \end{center}
 where $\gamma_k'=\gamma_k-\pi/2$ for $k=1,\ldots,m$. Now if $N_1=N_2$, $p$ has no more neighbours and is in the state \state{rn,label={[rphase]right:$-\pi/2$}}. This is not the same as the state $p$ has in diagram 1, so the diagrams are not equal. Else, after the application of $U$, $p$ still has some neighbours in diagram 2. Local complementations do not change this fact. Thus the two diagrams cannot be equal. The case $\beta=-\pi/2$ is entirely analogous, except that there is no fixpoint operation at the beginning.

 If $\beta=0$, there are two subcases. First, suppose there exists $v\in N_2$ such that $v\notin N_1$. Apply a local complementation about this $v$. This operation changes the vertex operator on $p$ to \phase{gn,label={[gphase]right:$\pi/2$}}. It also changes the edges incident on $p$, but the important thing is that $p$ will still have at least one neighbour. Thus we can proceed as in the case $\beta=\pi/2$. Secondly, suppose there is no $v\in N_2$ which is not in $N_1$. Since $N_2\neq\emptyset$ ($N_2=\emptyset$ corresponds to the case ``p has no neighbours in $D_2$''), we must then be able to find $v\in N_1\cap N_2$. The diagram looks as follows, where now $m>0$ (again, we are ignoring edges that do not involve $p$):
 \begin{center}
  \begin{tikzpicture}
	\begin{pgfonlayer}{nodelayer}
		\node [style=gn] (0) at (-6.25, -0) {};
		\node [style=gn] (1) at (-5.25, -1) {};
		\node [style=gn] (2) at (-3.75, -1) {};
		\node [style=gn,label={[gphase]right:$\gamma_1$}] (3) at (-5.25, -0) {};
		\node [style=gn,label={[gphase]right:$\gamma_m$}] (4) at (-3.75, -0) {};
		\node [style=none,label={left:$p$}] (5) at (-6.25, 1.75) {};
		\node [style=none] (6) at (-5.25, 1.75) {};
		\node [style=none] (7) at (-3.75, 1.75) {};
		\node [style=Hadamard] (8) at (-5.75, -0.5) {};
		\node [style=Hadamard] (9) at (-4.5, -0.75) {};
		\node [style=none] (10) at (-5.55, -1.5) {};
		\node [style=none] (11) at (-5.25, -1.45) {$\ldots$};
		\node [style=none] (12) at (-4.95, -1.5) {};
		\node [style=none] (13) at (-4.05, -1.5) {};
		\node [style=none] (14) at (-3.75, -1.45) {$\ldots$};
		\node [style=none] (15) at (-3.45, -1.5) {};
		\node [style=rn] (16) at (-6.25, 1.5) {};
		\node [style=gn] (17) at (-5.25, 0.5) {};
		\node [style=gn] (18) at (-3.75, 0.5) {};
		\node [style=none] (19) at (-4.5, -1.25) {$\ldots$};
		\node [style=none] (20) at (-4.5, 0.5) {$\cdots$};
		\node [style=none] (21) at (-2.75, -1.45) {$\ldots$};
		\node [style=gn] (22) at (-1, 0.5) {};
		\node [style=none] (23) at (-1.875, -1.25) {$\ldots$};
		\node [style=none] (24) at (-1.3, -1.5) {};
		\node [style=none] (25) at (-0.7, -1.5) {};
		\node [style=gn] (26) at (-2.75, -1) {};
		\node [style=gn,label={[gphase]right:$\gamma_{m+1}$}] (27) at (-2.75, -0) {};
		\node [style=none] (28) at (-2.45, -1.5) {};
		\node [style=none] (29) at (-3.05, -1.5) {};
		\node [style=none] (30) at (-1.875, 0.5) {$\cdots$};
		\node [style=gn,label={[gphase]right:$\gamma_n$}] (31) at (-1, -0) {};
		\node [style=none] (32) at (-1, 1.75) {};
		\node [style=gn] (33) at (-1, -1) {};
		\node [style=none] (34) at (-1, -1.45) {$\ldots$};
		\node [style=none] (35) at (-2.75, 1.75) {};
		\node [style=gn] (36) at (-2.75, 0.5) {};
		\node [style=none] (37) at (0.25, -0) {$=$};
		\node [style=Hadamard] (38) at (1, 1.5) {};
		\node [style=none] (39) at (6.55, -1.5) {};
		\node [style=gn,label={[gphase]right:$\gamma_1$}] (40) at (2, 1) {};
		\node [style=Hadamard] (41) at (2.75, -0.75) {};
		\node [style=none] (42) at (1, 1.75) {};
		\node [style=gn,label={[gphase]right:$\gamma_{m+1}$}] (43) at (4.5, 1) {};
		\node [style=Hadamard] (44) at (1.5, -0.5) {};
		\node [style=none] (45) at (2, -1.45) {$\ldots$};
		\node [style=gn] (46) at (6.25, -1) {};
		\node [style=gn] (47) at (1, -0) {};
		\node [style=none] (48) at (3.2, -1.5) {};
		\node [style=gn,label={[gphase]right:$\gamma_n$}] (49) at (6.25, 1) {};
		\node [style=none] (50) at (5.375, -0) {$\cdots$};
		\node [style=gn] (51) at (2, -0) {};
		\node [style=gn] (52) at (3.5, -0) {};
		\node [style=none] (53) at (3.5, -1.45) {$\ldots$};
		\node [style=none] (54) at (6, -1.5) {};
		\node [style=none] (55) at (2, 1.75) {};
		\node [style=none] (56) at (2.3, -1.5) {};
		\node [style=none] (57) at (3.8, -1.5) {};
		\node [style=none] (58) at (1.7, -1.5) {};
		\node [style=none] (59) at (6.25, -1.45) {$\ldots$};
		\node [style=none] (60) at (2.75, -0) {$\cdots$};
		\node [style=none] (61) at (6.25, 1.75) {};
		\node [style=gn,label={[gphase]right:$\gamma_m$}] (62) at (3.5, 1) {};
		\node [style=gn] (63) at (3.5, -1) {};
		\node [style=none] (64) at (4.5, 1.75) {};
		\node [style=none] (65) at (5.375, -1.25) {$\ldots$};
		\node [style=gn] (66) at (4.5, -0) {};
		\node [style=none] (67) at (4.8, -1.5) {};
		\node [style=none] (68) at (2.75, -1.25) {$\ldots$};
		\node [style=none] (69) at (4.5, -1.45) {$\ldots$};
		\node [style=none] (70) at (3.5, 1.75) {};
		\node [style=gn] (71) at (1, 1) {};
		\node [style=gn] (72) at (6.25, -0) {};
		\node [style=gn] (73) at (4.5, -1) {};
		\node [style=gn] (74) at (2, -1) {};
		\node [style=none] (75) at (4.2, -1.5) {};
		\node [style=none] (76) at (-4.15, -0) {$\cdots$};
		\node [style=none] (77) at (-1.3, -0) {$\cdots$};
		\node [style=none] (78) at (5.95, 1) {$\ldots$};
		\node [style=none] (79) at (3.1, 1) {$\ldots$};
		\node [style=Hadamard] (80) at (1, 0.5) {};
		\node [style=Hadamard] (81) at (1.5, 0.5) {};
		\node [style=Hadamard] (82) at (2.75, 0.25) {};
		\node [style=Hadamard] (83) at (4, 0.125) {};
		\node [style=Hadamard] (84) at (5, 0.25) {};
	\end{pgfonlayer}
	\begin{pgfonlayer}{edgelayer}
		\draw (0) to (1);
		\draw (0) to (2);
		\draw (5.center) to (0);
		\draw (6.center) to (1);
		\draw (7.center) to (2);
		\draw [bend right=15] (1) to (10.center);
		\draw [bend left=15] (1) to (12.center);
		\draw [bend right=15] (2) to (13.center);
		\draw [bend left=15] (2) to (15.center);
		\draw (16) to (17);
		\draw (16) to (18);
		\draw (35.center) to (26);
		\draw (32.center) to (33);
		\draw [bend right=15] (26) to (29.center);
		\draw [bend left=15] (26) to (28.center);
		\draw [bend right=15] (33) to (24.center);
		\draw [bend left=15] (33) to (25.center);
		\draw (16) to (36);
		\draw (16) to (22);
		\draw (47) to (74);
		\draw (47) to (63);
		\draw (42.center) to (47);
		\draw (55.center) to (74);
		\draw (70.center) to (63);
		\draw [bend right=15] (74) to (58.center);
		\draw [bend left=15] (74) to (56.center);
		\draw [bend right=15] (63) to (48.center);
		\draw [bend left=15] (63) to (57.center);
		\draw (71) to (51);
		\draw (71) to (52);
		\draw (64.center) to (73);
		\draw (61.center) to (46);
		\draw [bend right=15] (73) to (75.center);
		\draw [bend left=15] (73) to (67.center);
		\draw [bend right=15] (46) to (54.center);
		\draw [bend left=15] (46) to (39.center);
		\draw (71) to (66);
		\draw (71) to (72);
	\end{pgfonlayer}
  \end{tikzpicture}
 \end{center}
 To show that the two diagrams are unequal it suffices to show that in diagram 2 the state of $p$ either factors out, but is not \state{rn} or \state{rn,label={[rphase]right:$\pi$}}, or that it remains entangled with other qubits. We are thus justified in ignoring large portions of the above diagram to focus only on $p$, $v$ and the controlled-Z between the two. In particular, we will ignore for the moment the controlled-Z gates between $p$ and qubits other than $v$, as well as the last Hadamard gate on $p$. Then
 \begin{center}
  \begin{tikzpicture}
	\begin{pgfonlayer}{nodelayer}
		\node [style=gn] (0) at (-5, -0) {};
		\node [style=gn] (1) at (-4, -1) {};
		\node [style=gn] (2) at (-4, 1) {};
		\node [style=none] (3) at (-5, 2) {};
		\node [style=none] (4) at (-4, 2) {};
		\node [style=Hadamard] (5) at (-4.5, -0.5) {};
		\node [style=none] (6) at (-4.3, -1.5) {};
		\node [style=none] (7) at (-4, -1.45) {$\ldots$};
		\node [style=none] (8) at (-3.7, -1.5) {};
		\node [style=gn] (9) at (-5, 1) {};
		\node [style=gn] (10) at (-5.75, -1) {};
		\node [style=none] (11) at (-3.5, -0) {$=$};
		\node [style=gn,label={[gphase]right:$\gamma_v$}] (12) at (-0.25, 1.2) {};
		\node [style=none] (13) at (-1.5, 2) {};
		\node [style=Hadamard] (14) at (-0.825, -0.5) {};
		\node [style=none] (15) at (-0.25, -1.45) {$\ldots$};
		\node [style=gn] (16) at (-1.5, -0) {};
		\node [style=gn] (17) at (-2.25, -1) {};
		\node [style=none] (18) at (-0.25, 2) {};
		\node [style=none] (19) at (0.05, -1.5) {};
		\node [style=none] (20) at (-0.55, -1.5) {};
		\node [style=gn] (21) at (-1.5, 1.8) {};
		\node [style=gn] (22) at (-0.25, -1) {};
		\node [style=Hadamard] (23) at (-1.25, -1) {};
		\node [style=Hadamard] (24) at (-5, 0.5) {};
		\node [style=Hadamard] (25) at (-5, -1) {};
		\node [style=rn,label={[rphase]right:$\pi/2$}] (26) at (-0.25, 0.6) {};
		\node [style=rn,label={[rphase]right:$\pi/2$}] (27) at (-1.5, 0.6) {};
		\node [style=gn,label={[gphase]right:$\pi/2$}] (28) at (-1.5, 1.2) {};
		\node [style=gn,label={[gphase]left:$-\pi/2$}] (29) at (-2.25, -0.4) {};
		\node [style=Hadamard] (30) at (-7.75, 0.5) {};
		\node [style=gn] (31) at (-6.75, -1) {};
		\node [style=none,label={left:$p$}] (32) at (-7.75, 2) {};
		\node [style=Hadamard] (33) at (-7.25, -0.5) {};
		\node [style=none] (34) at (-7.05, -1.5) {};
		\node [style=none,label={left:$\vphantom{p}v$}] (35) at (-6.75, 2) {};
		\node [style=gn] (36) at (-7.75, -0) {};
		\node [style=none] (37) at (-6.45, -1.5) {};
		\node [style=gn,label={[gphase]right:$\gamma_v$}] (38) at (-6.75, 1.5) {};
		\node [style=none] (39) at (-6.75, -1.45) {$\ldots$};
		\node [style=gn] (40) at (-7.75, 1.5) {};
		\node [style=Hadamard] (41) at (-7.25, 1) {};
		\node [style=gn] (42) at (-6.75, 0.5) {};
		\node [style=none] (43) at (-7.25, 0.25) {};
		\node [style=none] (44) at (-7.3, 0.1) {$\vdots$};
		\node [style=none] (45) at (-7.25, -0.25) {};
		\node [style=none] (46) at (-6, -0) {$=$};
		\node [style=none] (47) at (-4.5, 0.25) {};
		\node [style=none] (48) at (-4.55, 0.1) {$\vdots$};
		\node [style=none] (49) at (-4.5, -0.25) {};
		\node [style=none] (50) at (-1, 0.25) {};
		\node [style=none] (51) at (-1.05, 0.1) {$\vdots$};
		\node [style=none] (52) at (-1, -0.25) {};
		\node [style=Hadamard] (53) at (-1.875, -0.5) {};
		\node [style=none] (54) at (0.75, -0) {$=$};
		\node [style=gn] (55) at (2, -0) {};
		\node [style=none] (56) at (2.95, -1.5) {};
		\node [style=none] (57) at (2.45, 0.1) {$\vdots$};
		\node [style=none] (58) at (2.5, 0.25) {};
		\node [style=none] (59) at (3.25, -1.45) {$\ldots$};
		\node [style=rn,label={[rphase]right:$\pi/2$}] (60) at (3.25, 1.2) {};
		\node [style=Hadamard] (61) at (1.625, -0.5) {};
		\node [style=gn] (62) at (2, 1.2) {};
		\node [style=none] (63) at (2.5, -0.25) {};
		\node [style=gn,label={[gphase]right:$\pi/2$}] (64) at (2, 1.8) {};
		\node [style=gn] (65) at (1.25, -1) {};
		\node [style=Hadamard] (66) at (2.625, -0.5) {};
		\node [style=none] (67) at (3.25, 2) {};
		\node [style=none] (68) at (2, 2) {};
		\node [style=gn,label={[gphase]right:$\gamma_v$}] (69) at (3.25, 1.8) {};
		\node [style=gn] (70) at (3.25, -1) {};
		\node [style=none] (71) at (3.55, -1.5) {};
		\node [style=gn,label={[gphase]right:$\pi/2$}] (72) at (3.25, 0.6) {};
		\node [style=none] (73) at (4.5, -0) {$=$};
		\node [style=rn,label={[rphase]right:$\pi/2$}] (74) at (6.5, 0.6) {};
		\node [style=gn,label={[gphase]right:$-\pi/2$}] (75) at (5, 1.2) {};
		\node [style=gn,label={[gphase]right:$\pi/2$}] (76) at (6.5, 0) {};
		\node [style=none] (77) at (5.3, -0.5) {};
		\node [style=Hadamard] (78) at (5.75, -0.5) {};
		\node [style=none] (79) at (6.5, -1.45) {$\ldots$};
		\node [style=gn,label={[gphase]right:$\gamma_v$}] (80) at (6.5, 1.2) {};
		\node [style=none] (81) at (4.7, -0.5) {};
		\node [style=none] (82) at (5, 2) {};
		\node [style=gn] (84) at (6.5, -1) {};
		\node [style=none] (85) at (6.5, 2) {};
		\node [style=none] (86) at (6.2, -1.5) {};
		\node [style=none] (87) at (5, -0.45) {$\ldots$};
		\node [style=none] (88) at (6.8, -1.5) {};
		\node [style=gn] (89) at (5, -0) {};
		\node [style=none] (90) at (-1.95, -1.5) {};
		\node [style=none] (91) at (-2.55, -1.5) {};
		\node [style=none] (92) at (1.55, -1.5) {};
		\node [style=none] (93) at (0.95, -1.5) {};
		\node [style=none] (94) at (-2.25, -1.45) {$\ldots$};
		\node [style=none] (95) at (1.25, -1.45) {$\ldots$};
	\end{pgfonlayer}
	\begin{pgfonlayer}{edgelayer}
		\draw (0) to (1);
		\draw (3.center) to (0);
		\draw (4.center) to (1);
		\draw [bend right=15] (1) to (6.center);
		\draw [bend left=15] (1) to (8.center);
		\draw [bend right=15] (9) to (10);
		\draw (16) to (22);
		\draw (13.center) to (16);
		\draw (18.center) to (22);
		\draw [bend right=15] (22) to (20.center);
		\draw [bend left=15] (22) to (19.center);
		\draw (10) to (1);
		\draw (17) to (22);
		\draw (17) to (29);
		\draw [bend left=15] (29) to (21);
		\draw (36) to (31);
		\draw (32.center) to (36);
		\draw (35.center) to (31);
		\draw [bend right=15] (31) to (34.center);
		\draw [bend left=15] (31) to (37.center);
		\draw (40) to (42);
		\draw [bend left=15] (36) to (43.center);
		\draw [bend right=15] (36) to (45.center);
		\draw [bend left=15] (0) to (47.center);
		\draw [bend right=15] (0) to (49.center);
		\draw [bend left=15] (16) to (50.center);
		\draw [bend right=15] (16) to (52.center);
		\draw (16) to (17);
		\draw (55) to (70);
		\draw (68.center) to (55);
		\draw (67.center) to (70);
		\draw [bend right=15] (70) to (56.center);
		\draw [bend left=15] (70) to (71.center);
		\draw [bend left=15] (55) to (58.center);
		\draw [bend right=15] (55) to (63.center);
		\draw (55) to (65);
		\draw [bend right=15] (62) to (65);
		\draw (89) to (84);
		\draw (82.center) to (89);
		\draw (85.center) to (84);
		\draw [bend right=15] (84) to (86.center);
		\draw [bend left=15] (84) to (88.center);
		\draw [bend left=15] (89) to (77.center);
		\draw [bend right=15] (89) to (81.center);
		\draw [bend right=15] (17) to (91.center);
		\draw [bend left=15] (17) to (90.center);
		\draw [bend right=15] (65) to (93.center);
		\draw [bend left=15] (65) to (92.center);
	\end{pgfonlayer}
  \end{tikzpicture}
 \end{center}
 where for the second equality we have applied a local complementation and a fixpoint operation to $v$ and used the Euler decomposition, the third equality follows by a local complementation on $p$, and the last one comes from the merging of $p$ with the green node in the bottom left. Note that, in the end, $p$ and $v$ are still connected by an edge. None of the operations we ignored in picking out this part of the diagram will change that. Thus, as before, the state of $p$ cannot be the same as in diagram 1. The two diagrams are unequal.

 The case $\beta=\pi$ is analogous to $\beta=0$, except we start with a fixpoint operation on the same qubit as the first local complementation.

 We have shown that a simplified pair of rGS-LC diagrams are not equal if there are any unpaired red nodes.
\end{proof}

\begin{thm}\label{thm:rGS-LC_equality}
 The two diagrams making up a simplified pair of rGS-LC diagram are equal, i.e.\ they correspond to the same quantum state, if and only if they are identical.
\end{thm}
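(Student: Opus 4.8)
The plan is to establish the nontrivial implication --- if the two diagrams in a simplified pair represent the same state then they are identical --- and then close the cycle trivially (identical diagrams are equal by reflexivity, hence equal as states by soundness). So let $D_1,D_2$ be a simplified pair of rGS-LC diagrams on $n$ qubits denoting the same quantum state. First I would invoke the contrapositive of Lemma \ref{lem:unpaired_red_node}: since $D_1$ and $D_2$ are equal, there is no unpaired red node, i.e.\ the set $R$ of qubits whose vertex operator contains a red node is \emph{the same} in both diagrams.

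Next I would strip these red nodes out uniformly. Every red-containing element of the set \eqref{eq:reduced_vertex_operators} has the form $R_X\circ R_Z^{\pm 1}$, so pre-composing it on the output side with a red $-\pi/2$ node reduces it, via the spider and identity rules, to the green phase operator $R_Z^{\pm 1}$. Let $U=\bigotimes_{p\in R}R_{X,p}^{-1}$. Because the red-sets of $D_1$ and $D_2$ coincide, this is the \emph{same} invertible operator for both, so $D_1=D_2$ as states iff $U\circ D_1=U\circ D_2$ as states, and $D_1,D_2$ are identical iff $U\circ D_1,U\circ D_2$ are (the per-vertex map $U$ induces on the vertex operators that can occur --- identity on the green ones, $R_XR_Z^{\pm1}\mapsto R_Z^{\pm1}$ on the red ones --- is injective). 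In $U\circ D_i$ every vertex operator is now a green phase operator, which merges into the underlying graph-state spider; hence $U\circ D_i$ is specified completely by its graph $G_i$ and a phase vector $a^{(i)}\in\{0,1,2,3\}^n$, and for diagrams of this shape ``identical'' just means ``same graph and same phase vector''.

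It then remains to show such generalised graph states are rigid. Reading off computational-basis amplitudes, $\bigl(\prod_v R_{Z,v}^{a_v}\bigr)\ket{G}$ is, up to a nonzero scalar, $\sum_{x\in\{0,1\}^n} i^{Q(x)}\ket{x}$ with $Q(x)=\sum_v a_vx_v+2\sum_{\{u,v\}\in E}x_ux_v$. Since the $\ket{0\cdots0}$ component equals that scalar and is nonzero, equality of the two states (up to the global phase and normalisation the calculus ignores) forces $i^{Q^{(1)}(x)}=i^{Q^{(2)}(x)}$ for all $x$, i.e.\ $Q^{(1)}\equiv Q^{(2)}\pmod 4$ pointwise. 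Evaluating at the unit vector $e_v$ gives $a^{(1)}_v=Q^{(i)}(e_v)=a^{(2)}_v$, so $a^{(1)}=a^{(2)}$; evaluating at $e_u+e_v$ then gives $\theta^{(1)}_{uv}=\theta^{(2)}_{uv}$, so $G_1=G_2$. Hence $U\circ D_1$ and $U\circ D_2$ are identical, and therefore so are $D_1$ and $D_2$.

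The one point that needs care is the reasoning around the word ``identical'': applying $U$ must not merge two genuinely different rGS-LC diagrams, which is exactly why it matters that the red-sets already agree, so that $U$ never creates a red node where the vertex operator was already green. The amplitude computation, and pinning down the global scalar via the all-zeros component, are routine; the real combinatorial difficulty --- showing that a single unpaired red node already forces the states apart --- has been discharged in Lemma \ref{lem:unpaired_red_node}.
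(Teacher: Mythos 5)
Your proof is correct, and it coincides with the paper's up to the point where all red nodes are paired and the common operator $U=\bigotimes_{p}R_{X,p}^{-1}$ has been stripped off, leaving two phase-decorated graph states; from there you take a genuinely different route. The paper composes further with $W=\bigotimes_{\{u,w\}\in E_1}\wedge Z_{uw}$, which turns the first diagram into a product of green phase states and the second into the graph state on $(V,E_1\triangle E_2)$ with green phases, and then appeals to the fact that a graph state with a nonempty edge set cannot be a product state; equality of the phases is read off afterwards. You instead prove rigidity of the states $\bigl(\prod_v R_{Z,v}^{a_v}\bigr)\ket{G}$ directly by expanding in the computational basis, pinning the overall scalar via the nonzero $\ket{0\cdots 0}$ amplitude, and recovering $a_v$ and $\theta_{uv}$ from the weight-one and weight-two basis states. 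Both arguments are semantic at this stage (as is the paper's, which relies on invertibility of $W\circ U$ and on entanglement of graph states), so nothing is lost for the completeness application: the theorem only needs to certify that same state implies identical diagram. What your version buys is self-containedness --- the ``clearly not a product state unless $E_1\triangle E_2=\emptyset$'' step is replaced by an explicit amplitude comparison that delivers the graph and the phase vector simultaneously; what the paper's version buys is staying closer to the graphical/operator-level bookkeeping used elsewhere in the argument. Your care about why applying $U$ preserves identicality in both directions (the red-sets already agree, and $U$ acts injectively on the allowed vertex operators) is exactly the right point to flag.
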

\begin{proof}
 By lemma \ref{lem:unpaired_red_node}, the diagrams are unequal if there are any unpaired red nodes. We can therefore assume that all red nodes are paired up.

 Let the diagrams be $D_1$ and $D_2$. Suppose the graph underlying $D_1$ is $G_1=(V,E_1)$ and that underlying $D_2$ is $G_2=(V,E_2)$. For simplicity, suppose $V=\{1,2,\ldots,n\}$. We can draw the two diagrams as
 \begin{center}
  \begin{tikzpicture}
	\begin{pgfonlayer}{nodelayer}
		\node [style=none] (0) at (-3, 0.7) {};
		\node [style=gn,label={[gphase]right:$\beta_2$}] (1) at (-3, -0) {};
		\node [style=none] (2) at (-1.9, -0) {$\ldots$};
		\node [style=gn,label={[gphase]right:$\beta_1$}] (3) at (-4, -0) {};
		\node [style=gn,label={[gphase]right:$\beta_n$}] (4) at (-1.5, -0) {};
		\node [style=rn,label={[rphase]right:$\alpha_2$}] (5) at (-3, 0.5) {};
		\node [style=none] (6) at (-4, -0.75) {};
		\node [style=none] (7) at (-1.9, 0.5) {$\ldots$};
		\node [style=rn,label={[rphase]right:$\alpha_1$}] (8) at (-4, 0.5) {};
		\node [style=none] (9) at (-1.5, 0.7) {};
		\node [ellipse,fill=White,draw=Black,minimum width=3.5cm] (10) at (-2.5, -0.7) {$G_1$};
		\node [style=none] (11) at (-3, -0.75) {};
		\node [style=none] (12) at (-1.5, -0.75) {};
		\node [style=rn,label={[rphase]right:$\alpha_n$}] (13) at (-1.5, 0.5) {};
		\node [style=none] (14) at (-4, 0.7) {};
		\node [style=none] (15) at (2.5, 0.7) {};
		\node [style=gn,label={[gphase]right:$\gamma_2$}] (16) at (2.5, -0) {};
		\node [style=gn,label={[gphase]right:$\gamma_1$}] (17) at (1.5, -0) {};
		\node [style=none] (18) at (3.6, -0) {$\ldots$};
		\node [style=gn,label={[gphase]right:$\gamma_n$}] (19) at (4, -0) {};
		\node [style=rn,label={[rphase]right:$\alpha_2$}] (20) at (2.5, 0.5) {};
		\node [style=none] (21) at (1.5, -0.75) {};
		\node [style=none] (22) at (3.6, 0.5) {$\ldots$};
		\node [style=rn,label={[rphase]right:$\alpha_1$}] (23) at (1.5, 0.5) {};
		\node [style=none] (24) at (4, 0.7) {};
		\node [style=none] (25) at (2.5, -0.75) {};
		\node [ellipse,fill=White,draw=Black,minimum width=3.5cm] (26) at (3, -0.7) {$G_2$};
		\node [style=none] (27) at (4, -0.75) {};
		\node [style=rn,label={[rphase]right:$\alpha_n$}] (28) at (4, 0.5) {};
		\node [style=none] (29) at (1.5, 0.7) {};
		\node [style=none] (30) at (0.3, -0) {and};
	\end{pgfonlayer}
	\begin{pgfonlayer}{edgelayer}
		\draw (14.center) to (6.center);
		\draw (0.center) to (11.center);
		\draw (9.center) to (12.center);
		\draw (29.center) to (21.center);
		\draw (15.center) to (25.center);
		\draw (24.center) to (27.center);
	\end{pgfonlayer}
  \end{tikzpicture}
 \end{center}
 where, for all $v\in V$, $\alpha_v\in\{0,\pi/2\}$ and
 \[
  \beta_v,\gamma_v\in \begin{cases} \{\pm\pi/2\} & \text{if }\alpha_v = \pi/2 \\ \{0,\pi/2,\pi,-\pi/2\} & \text{otherwise.} \end{cases}
 \]
 Let $V'=\{v\in V|\alpha_v=\pi/2\}$ and define the operators
 \[
  U = \bigotimes_{v\in V'} R_{X,v}^{-1} \quad\text{and}\quad W = \bigotimes_{\{u,w\}\in E_1} \wedge Z_{uw},
 \]
 where $\wedge Z_{uw}$ denotes a controlled-Z operator applied to qubits $u$ and $w$. Controlled-Z operators commute, therefore both $U$ and $W$ are invertible and we have $(W\circ U)\circ D_1=(W\circ U)\circ D_2$ if and only if $D_1=D_2$. Now in $U\circ D_1$ and $U\circ D_2$, all vertex operators are green nodes, which can be moved past controlled-Z operators. Thus $(W\circ U)\circ D_1$ is equal to
 \begin{tikzpicture}[baseline=-.1cm]
	\begin{pgfonlayer}{nodelayer}
		\node [style=gn,label={[gphase]right:$\beta_1$}] (0) at (0, -0) {};
		\node [style=none] (1) at (0, 0.2) {};
		\node [style=gn,label={[gphase]right:$\beta_n$}] (2) at (1.5, -0) {};
		\node [style=none] (3) at (1.5, 0.2) {};
		\node [style=none] (4) at (1.1, 0) {$\cdots$};
	\end{pgfonlayer}
	\begin{pgfonlayer}{edgelayer}
		\draw (0) to (1.center);
		\draw (2) to (3.center);
	\end{pgfonlayer}
 \end{tikzpicture}.
 Now $(W\circ U)\circ D_2$ can be rewritten as follows
 \begin{center}
  \begin{tikzpicture}
	\begin{pgfonlayer}{nodelayer}
		\node [style=none] (0) at (-2.75, 1.25) {};
		\node [style=gn,label={[gphase]right:$\gamma_2$}] (1) at (-2.75, -0) {};
		\node [style=none] (2) at (-1.75, -0) {$\ldots$};
		\node [style=gn,label={[gphase]right:$\gamma_1$}] (3) at (-3.75, -0) {};
		\node [style=gn,label={[gphase]right:$\gamma_n$}] (4) at (-1.25, -0) {};
		\node [style=none] (5) at (-3.75, -0.75) {};
		\node [style=none] (6) at (-1.25, 1.25) {};
		\node [ellipse,fill=White,draw=Black,minimum width=3.5cm] (7) at (-2.25, -0.65) {$G_2$};
		\node [style=none] (8) at (-2.75, -0.75) {};
		\node [style=none] (9) at (-1.25, -0.75) {};
		\node [style=none] (10) at (-3.75, 1.25) {};
		\node [style=none] (11) at (1.75, 0.65) {};
		\node [style=gn,label={[gphase]right:$\gamma_2$}] (12) at (1.75, 0.35) {};
		\node [style=gn,label={[gphase]right:$\gamma_1$}] (13) at (0.75, 0.35) {};
		\node [style=none] (14) at (2.75, 0.35) {$\ldots$};
		\node [style=gn,label={[gphase]right:$\gamma_n$}] (15) at (3.25, 0.35) {};
		\node [style=none] (16) at (0.75, -0.5) {};
		\node [style=none] (17) at (3.25, 0.65) {};
		\node [style=none] (18) at (1.75, -0.5) {};
		\node [ellipse,fill=White,draw=Black,minimum width=3.5cm] (19) at (2.25, -0.35) {$(V,E_1\triangle E_2)$};
		\node [style=none] (20) at (3.25, -0.5) {};
		\node [style=none] (21) at (0.75, 0.65) {};
		\node [style=none] (22) at (0, 0.25) {$=$};
		\node [ellipse,fill=White,draw=Black,minimum width=3.5cm] (23) at (-2.25, 0.65) {$G_1$};
		\node [style=none] (24) at (-1.75, 1.15) {$\ldots$};
	\end{pgfonlayer}
	\begin{pgfonlayer}{edgelayer}
		\draw (10.center) to (5.center);
		\draw (0.center) to (8.center);
		\draw (6.center) to (9.center);
		\draw (21.center) to (16.center);
		\draw (11.center) to (18.center);
		\draw (17.center) to (20.center);
	\end{pgfonlayer}
  \end{tikzpicture}
 \end{center}
 Here, the white ellipse labelled $G_1$ denotes the graph state $G_1$ with an additional input for each vertex. $E_1\triangle E_2$ is the symmetric difference of the two sets $E_1$ and $E_2$, i.e.\ the graph $(V,E_1\triangle E_2)$ contains all edges which are contained in either $G_1$ or $G_2$, but not in both. Clearly this is equal to a product of single qubit states only if $E_1\triangle E_2=\emptyset$. That condition is satisfied if and only if $E_1=E_2$, i.e.\ $G_1=G_2$.

 Assuming that the underlying graphs are equal, we have $(W\circ U)\circ D_1=(W\circ U)\circ D_2$ if and only if $\beta_v=\gamma_v$ for all $v\in V$. Thus $(W\circ U)\circ D_1=(W\circ U)\circ D_2$ if and only if $D_1$ and $D_2$ are identical. By unitarity of $(W\circ U)$, this implies that the diagrams $D_1$ and $D_2$ are equal if and only if they are identical, as required.
\end{proof}

\subsection{Completeness for stabilizer states}

In section \ref{s:ZX_graph_states} we show that any stabilizer state diagram is equal to some rGS-LC diagram. By theorem \ref{thm:rGS-LC_equality}, two rGS-LC diagrams represent the same quantum state if and only if simplifying the pair leads to two identical diagrams. Any rewrite rules used to prove these two results are invertible. Therefore, given two stabilizer state diagrams representing the same state, there exists a sequence of rewrite steps obeying the rules of the \ZX-calculus, which transforms one diagram into the other. Thus:

\begin{thm}\label{thm:completeness_states}
 The \ZX-calculus is complete for stabilizer states.
\end{thm}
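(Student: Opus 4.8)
The plan is to stitch together the normal-form result of Section~\ref{s:ZX_graph_states} with the equality test of Section~\ref{s:equality_testing}. Suppose $D$ and $D'$ are stabilizer state diagrams denoting the same quantum state. First I would apply Theorem~\ref{thm:ZX_rGS-LC} to each, obtaining within the \ZX-calculus a rewrite of $D$ into an rGS-LC diagram $E$ and of $D'$ into an rGS-LC diagram $E'$. Next, using the proposition that any pair of rGS-LC diagrams can be simplified, I would rewrite the pair $(E,E')$ into a simplified pair $(F,F')$, still using only rules of the calculus. Since every rewrite leaves the denoted state unchanged, $F$ and $F'$ denote the same state, so Theorem~\ref{thm:rGS-LC_equality} forces $F$ and $F'$ to be \emph{identical} diagrams.

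The second, essentially trivial, ingredient is that each rule listed in Section~\ref{s:rules} is an equation and may be read in either direction. Hence the derivation $D' \to E' \to F'$ can be run backwards to $F' \to E' \to D'$; concatenating this with $D \to E \to F$ and using $F = F'$ produces a single sequence of legal rewrite steps from $D$ to $D'$. This is exactly the statement that the \ZX-calculus is complete for stabilizer states.

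The only point requiring real care is to check that \emph{every} transformation invoked above is purely diagrammatic --- that the proofs of Theorem~\ref{thm:ZX_rGS-LC} and the lemmas feeding it (in particular Lemma~\ref{lem:basic_elements} and Theorem~\ref{thm:ZX_GS-LC}, together with the local-complementation, fixpoint, and edge-complementation equivalences of Section~\ref{s:equivalence_GS-LC}), as well as the simplification procedure, never appeal to a semantic computation that is not backed by a listed rule. I would audit these proofs branch by branch; the delicate-looking cases --- those splitting on whether an operand vertex has a neighbour, or collapsing to the zero diagram when a phase equals $\pi$ --- all resolve using only the spider, identity, Hopf, copy, $\pi$-copy, colour-change and Euler rules, so reversibility is automatic and no appeal to soundness is needed for this direction. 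I expect this bookkeeping, rather than any new idea, to be the main part of the work.
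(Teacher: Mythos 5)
Your proposal is correct and follows essentially the same route as the paper: reduce both diagrams to a simplified pair of rGS-LC forms, invoke Theorem~\ref{thm:rGS-LC_equality} to conclude the simplified diagrams are identical, and use invertibility of the rewrite rules to concatenate the two derivations into one. The only difference is that you make the reversibility audit explicit, which the paper dispatches in a single sentence.
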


There are many stabilizer diagrams which have a non-zero number of inputs, so they are not states and the previous arguments do not apply to them. To extend our results to those diagrams, we make use of the quantum mechanical map-state duality as laid out in the following section.

\subsection{Map-state duality in the \ZX-calculus}
\label{s:Choi-Jamiolkowski}

Map-state duality, also known as the Choi-Jamio{\l}\-kowski isomorphism, relates quantum states and linear operators:

\begin{thm}[Map-state duality or Choi-Jamio{\l}kowski isomorphism]\label{thm:Choi-Jamiolkowski}
 For any pair of positive integers $n$ and $m$, there exists a bijection between the linear operators from $n$ to $m$ qubits and the states on $n+m$ qubits.
\end{thm}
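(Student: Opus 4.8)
The plan is to exhibit the bijection explicitly via the wire-bending (cup/cap) construction, which is native to the \ZX-calculus. Recall from the proof of Lemma \ref{lem:basic_elements} that the unnormalised Bell state $\sum_{i}\ket{ii}$ is a cup, i.e.\ a green node of phase $0$ with two outputs, and its adjoint is a cap; both are stabilizer diagrams. Given a diagram $D$ with $n$ inputs and $m$ outputs, interpreted as a linear operator $f\colon(\CC^2)\t{n}\to(\CC^2)\t{m}$, I would bend each input wire upward by plugging it into one leg of a fresh cup, producing a diagram with no inputs and $n+m$ outputs, hence a state $\ket{\hat f}$ on $n+m$ qubits (concretely, $\ket{\hat f}$ is obtained by applying $f$ to one half of $n$ Bell pairs). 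Conversely, given a diagram $E$ with no inputs and $n+m$ outputs, interpreted as a state $\ket{\psi}$, I would bend the first $n$ output wires downward using caps, producing a diagram with $n$ inputs and $m$ outputs, hence a linear operator $\check\psi$. Both assignments are manifestly linear in the matrix entries (pre- or post-composition with a fixed linear map), so $f\mapsto\hat f$ is in fact a linear isomorphism of vector spaces, and it restricts to a bijection on the respective sets of states and operators.

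It remains to verify that $f\mapsto\hat f$ and $\psi\mapsto\check\psi$ are mutually inverse. For this I would invoke the snake (``yanking'') equations: a cup on one pair of wires composed with a cap on an adjacent pair, straightened out, is the identity wire. In the \ZX-calculus this is an immediate consequence of the spider rule followed by the identity rule --- a zero-phase green node with exactly two incident edges can be deleted --- so the snake equations are available graphically. Applying them once per bent wire shows that bending the $n$ input wires of $D$ up and then bending the same $n$ wires back down recovers $D$ up to topology, which is all that matters; symmetrically one recovers $\ket{\psi}$ from $\check\psi$. Hence the two maps are inverse bijections, which proves the claim (and, via the \ZX-calculus, shows the bijection is witnessed by diagram rewrites).

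The step most in need of care --- rather than a genuine obstacle --- is the bookkeeping of wire identities: one must fix once and for all which of the $n+m$ output wires of a state are bent back to become inputs (say, the first $n$, in order), and check that each snake equation is applied on exactly the matching pair of wires, so that the two composites are the identity and not a permutation of the legs. One should also note, as throughout the paper, that diagrams are considered up to nonzero global scalars, so the choice of normalisation for the Bell state is immaterial. With these conventions fixed the argument is the familiar Choi--Jamio{\l}kowski isomorphism, now realised entirely by \ZX-calculus rewrites, which is exactly what is needed in the next section to transport the completeness result for stabilizer states to arbitrary stabilizer diagrams.
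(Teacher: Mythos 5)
Your proposal is correct and takes essentially the same route as the paper: the paper realises the isomorphism by bending input wires into outputs (and back) and justifies mutual inverseness by the ``only the topology matters'' rule, which is exactly your snake-equation argument, with your derivation of the snake equations from the spider and identity rules matching how the paper already treats cups and caps in Lemma~\ref{lem:basic_elements}. The extra bookkeeping you flag (which wires are bent, linearity, scalars) is sound but does not change the substance.
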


In the \ZX-calculus, states are diagrams with no inputs. Therefore the Choi-Jamio\l\-kowski isomorphism as a transformation consists of just ``bending around'' the inputs of the operator so that they become outputs. This process can also be thought of as composing the operator with an apropriate entangled state. In the reverse direction, we bend around some of the outputs to become inputs, or alternatively compose the diagram with the apropriate effect.

The isomorphism implies that for any operator $A$ from $n$ to $m$ qubits and for any $n+m$-qubit state $B$,
\begin{center}
 \begin{tikzpicture}
	\begin{pgfonlayer}{nodelayer}
		\node [rectangle,fill=White,draw=Black,minimum width=1cm,minimum height=0.5cm,inner sep=0pt] (0) at (-4.75, -0) {$A$};
		\node [style=none] (1) at (-5.1, 0.5) {};
		\node [style=none] (2) at (-4.75, 0.45) {$\ldots$};
		\node [style=none] (3) at (-4.4, 0.5) {};
		\node [style=none] (4) at (-5.1, -0.5) {};
		\node [style=none] (5) at (-4.75, -0.45) {$\ldots$};
		\node [style=none] (6) at (-4.4, -0.5) {};
		\node [style=none] (7) at (-5.4, -0.5) {};
		\node [style=none] (8) at (-6.1, -0.5) {};
		\node [style=none] (9) at (-5.4, 0.5) {};
		\node [style=none] (10) at (-6.1, 0.5) {};
		\node [style=none] (11) at (-5.75, 0.45) {$\ldots$};
		\node [style=none] (12) at (-3.75, -0.25) {$=$};
		\node [style=none] (13) at (-3.1, 0.25) {};
		\node [style=none] (14) at (-2.75, 0.2) {$\ldots$};
		\node [style=none] (15) at (-2.4, 0.25) {};
		\node [style=none] (16) at (-2.1, 0.25) {};
		\node [style=none] (17) at (-1.75, 0.2) {$\ldots$};
		\node [style=none] (18) at (-1.4, 0.25) {};
		\node [rectangle,fill=White,draw=Black,minimum width=2cm,minimum height=0.5cm,inner sep=0pt] (19) at (-2.25, -0.25) {$B$};
		\node [style=none] (20) at (-3.1, -0.25) {};
		\node [style=none] (21) at (-2.4, -0.25) {};
		\node [style=none] (22) at (-2.1, -0.25) {};
		\node [style=none] (23) at (-1.4, -0.25) {};
		\node [style=none] (24) at (0, -0.25) {$\Longleftrightarrow$};
		\node [rectangle,fill=White,draw=Black,minimum width=1cm,minimum height=0.5cm,inner sep=0pt] (25) at (2, -0.25) {$A$};
		\node [style=none] (26) at (1.65, 0.25) {};
		\node [style=none] (27) at (2, 0.2) {$\ldots$};
		\node [style=none] (28) at (2.35, 0.25) {};
		\node [style=none] (29) at (1.65, -0.75) {};
		\node [style=none] (30) at (2, -0.7) {$\ldots$};
		\node [style=none] (31) at (2.35, -0.75) {};
		\node [style=none] (32) at (3, -0.25) {$=$};
		\node [style=none] (33) at (3.65, 0) {};
		\node [style=none] (34) at (4.35, 0) {};
		\node [style=none] (35) at (3.65, -0.75) {};
		\node [style=none] (36) at (4.35, -0.75) {};
		\node [style=none] (37) at (4, -0.7) {$\ldots$};
		\node [style=none] (38) at (4.65, 0) {};
		\node [style=none] (39) at (5.35, 0) {};
		\node [style=none] (40) at (5, -0.05) {$\ldots$};
		\node [style=none] (41) at (5.65, 0) {};
		\node [style=none] (42) at (6, -0.05) {$\ldots$};
		\node [style=none] (43) at (6.35, 0) {};
		\node [rectangle,fill=White,draw=Black,minimum width=2cm,minimum height=0.5cm,inner sep=0pt] (44) at (5.5, -0.5) {$B$};
		\node [style=none] (45) at (4.65, -0.5) {};
		\node [style=none] (46) at (5.35, -0.5) {};
		\node [style=none] (47) at (5.65, -0.5) {};
		\node [style=none] (48) at (6.35, -0.5) {};
	\end{pgfonlayer}
	\begin{pgfonlayer}{edgelayer}
		\draw (3.center) to (6.center);
		\draw (1.center) to (4.center);
		\draw (9.center) to (7.center);
		\draw (10.center) to (8.center);
		\draw [bend right=90] (8.center) to (6.center);
		\draw [bend right=90] (7.center) to (4.center);
		\draw (13.center) to (20.center);
		\draw (15.center) to (21.center);
		\draw (16.center) to (22.center);
		\draw (18.center) to (23.center);
		\draw (33.center) to (35.center);
		\draw (34.center) to (36.center);
		\draw (26.center) to (29.center);
		\draw (28.center) to (31.center);
		\draw (38.center) to (45.center);
		\draw (39.center) to (46.center);
		\draw (41.center) to (47.center);
		\draw (43.center) to (48.center);
		\draw [bend left=90] (33.center) to (39.center);
		\draw [bend left=90] (34.center) to (38.center);
	\end{pgfonlayer}
 \end{tikzpicture}
\end{center}
This follows directly from the rule that \emph{only the topology matters}, which allows us to ``yank straight'' any inputs and outputs.

\subsection{Completeness for all stabilizer quantum mechanics}

We can now assemble the main completeness proof:

\begin{thm}\label{thm:completeness}
 The \ZX-calculus is complete for stabilizer quantum mechanics.
\end{thm}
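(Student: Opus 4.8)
The plan is to reduce this to completeness for stabilizer states (Theorem \ref{thm:completeness_states}) via the map-state duality of Section \ref{s:Choi-Jamiolkowski}. Let $D_1$ and $D_2$ be stabilizer \ZX-calculus diagrams with the same number of inputs and outputs, say $n$ inputs and $m$ outputs, which denote the same linear operator. If $n=0$ the diagrams are already states and we are done by Theorem \ref{thm:completeness_states}, so assume $n>0$. First I would ``bend around'' all $n$ inputs of each diagram so that they become outputs, yielding diagrams $\hat D_1$ and $\hat D_2$ with no inputs and $n+m$ outputs. As explained in Section \ref{s:Choi-Jamiolkowski}, bending a wire around is effected purely by the rule that \emph{only the topology matters} (equivalently, by precomposing with a cup, which is itself a stabilizer diagram --- an edge of a graph state), so $\hat D_1$ and $\hat D_2$ are again stabilizer \emph{state} diagrams, and the bending operation is reversible using the same ``yanking'' rule.

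Next I would invoke Theorem \ref{thm:Choi-Jamiolkowski}: the Choi-Jamio\l kowski correspondence between operators from $n$ to $m$ qubits and states on $n+m$ qubits is a bijection, and bending the wires realises exactly this correspondence. Consequently $D_1$ and $D_2$ denote the same operator if and only if $\hat D_1$ and $\hat D_2$ denote the same state. Since by assumption $D_1$ and $D_2$ have the same denotation, so do $\hat D_1$ and $\hat D_2$; hence by Theorem \ref{thm:completeness_states} there is a derivation in the \ZX-calculus rewriting $\hat D_1$ into $\hat D_2$.

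Finally I would assemble the derivation $D_1 = D_2$ in three stages: bend the $n$ inputs of $D_1$ around by topology moves to reach $\hat D_1$; rewrite $\hat D_1$ into $\hat D_2$ using the derivation just obtained; then straighten the $n$ wires back down, again by topology moves, to recover $D_2$. As $D_1$ and $D_2$ were arbitrary stabilizer diagrams with equal denotation, this proves completeness. The only substantial ingredient is Theorem \ref{thm:completeness_states}, which carries all the real work of the paper; given it, the present statement is essentially a corollary, and I anticipate no serious obstacle beyond phrasing the reduction cleanly and checking the bookkeeping that wire-bending stays within the stabilizer fragment and is invertible by the topology rule alone.
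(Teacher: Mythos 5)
Your proposal is correct and follows essentially the same route as the paper: reduce to the state case via the Choi--Jamio{\l}kowski isomorphism by bending inputs into outputs (justified by the \emph{only the topology matters} rule), apply Theorem \ref{thm:completeness_states}, and bend back. Your additional bookkeeping about the bending being reversible and staying within the stabilizer fragment is a sound, slightly more explicit rendering of what the paper states more briefly.
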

\begin{proof}
 By theorem \ref{thm:completeness_states} we know that the \ZX-calculus is complete for stabilizer states. Now by theorem \ref{thm:Choi-Jamiolkowski}, two operators from $n$ to $m$ qubits are equal if and only if the corresponding $n+m$-qubit states are equal. Thus, given any two \ZX-calculus diagrams that represent the same operator, we can show that the diagrams are equal using the rules of the \ZX-calculus via the following sequence of steps:
 \begin{menum}
  \item Apply the Choi-Jamio{\l}kowski isomorphism to turn the operators into states.
  \item Transform the states into GS-LC and then rGS-LC form.
  \item Simplify the pair of rGS-LC diagrams.
  \item Apply the Choi-Jamio{\l}kowski isomorphism again to transform the sequence of equal states derived in the previous step back into operators.
 \end{menum}
 As the Choi-Jamio{\l}kowski isomorphism preserves equalities, this yields a sequence of steps which are valid according to the rules of the \ZX-calculus and which show that the two operators are equal. Thus, whenever two \ZX-calculus diagrams represent the same quantum mechanical state or operator, they are equal according to the rules of the \ZX-calculus, completing the proof.
\end{proof}

\section{Example}\label{s:example}

\subsection{Two \ZX-calulus diagrams for the controlled-Z operator}

In quantum circuit notation, there are two common ways of writing the controlled-Z gate in terms of controlled-NOT gates and different types of single qubit gates. The two quantum circuit diagrams translate straightforwardly to the following \ZX-calculus diagrams:
\begin{equation}\label{eq:example_diagrams}
 \begin{tikzpicture}[baseline=-0.1cm]
	\begin{pgfonlayer}{nodelayer}
		\node [style=gn] (0) at (-0.5, -0.75) {};
		\node [style=gn] (1) at (-0.5, 0.25) {};
		\node [style=gn,label={[gphase]left:$\pi/2$}] (2) at (-0.5, 0.75) {};
		\node [style=gn,label={[gphase]right:$-\pi/2$}] (3) at (0.5, -0.25) {};
		\node [style=gn,label={[gphase]right:$\pi/2$}] (4) at (0.5, 0.75) {};
		\node [style=rn] (5) at (0.5, 0.25) {};
		\node [style=rn] (6) at (0.5, -0.75) {};
		\node [style=none] (7) at (-0.5, 1) {};
		\node [style=none] (8) at (0.5, 1) {};
		\node [style=none] (9) at (-0.5, -1) {};
		\node [style=none] (10) at (0.5, -1) {};
	\end{pgfonlayer}
	\begin{pgfonlayer}{edgelayer}
		\draw (7.center) to (9.center);
		\draw (8.center) to (10.center);
		\draw (0) to (6);
		\draw (1) to (5);
	\end{pgfonlayer}
 \end{tikzpicture}
 \quad\text{and}\quad
 \begin{tikzpicture}[baseline=-0.1cm]
	\begin{pgfonlayer}{nodelayer}
		\node [style=gn] (0) at (-0.5, -0) {};
		\node [style=rn] (1) at (0.5, -0) {};
		\node [style=none] (2) at (-0.5, 0.8) {};
		\node [style=none] (3) at (0.5, 0.8) {};
		\node [style=none] (4) at (-0.5, -0.8) {};
		\node [style=none] (5) at (0.5, -0.8) {};
		\node [style=Hadamard] (6) at (0.5, 0.45) {};
		\node [style=Hadamard] (7) at (0.5, -0.45) {};
	\end{pgfonlayer}
	\begin{pgfonlayer}{edgelayer}
		\draw (2.center) to (4.center);
		\draw (3.center) to (5.center);
		\draw (0) to (1);
	\end{pgfonlayer}
 \end{tikzpicture}
\end{equation}
Since these two diagrams have been constructed to represent the same operator, we expect them to be equal. To confirm this, we use the algorithm given in theorem \ref{thm:completeness}.

\subsection{Applying the equality testing algorithm}

To bring the diagrams into GS-LC form, they first need to be mapped to the corresponding state diagrams via the Choi-Jamio{\l}kowski isomorphism. It is useful to note that
\begin{center}
 \begin{tikzpicture}
	\begin{pgfonlayer}{nodelayer}
		\node [style=none] (0) at (-3.5, 0.25) {};
		\node [style=none] (1) at (-3, -0.25) {};
		\node [style=none] (2) at (-2.5, 0.25) {};
		\node [style=none] (3) at (-1.75, -0) {$=$};
		\node [style=none] (4) at (-1, 0.5) {};
		\node [style=none] (5) at (0, 0.5) {};
		\node [style=none] (6) at (0.75, -0) {$=$};
		\node [style=none] (7) at (1.5, 0.5) {};
		\node [style=none] (8) at (2.5, 0.5) {};
		\node [style=gn] (9) at (-1, -0.25) {};
		\node [style=gn] (10) at (0, -0.25) {};
		\node [style=gn] (11) at (1.5, -0.25) {};
		\node [style=gn] (12) at (2.5, -0.25) {};
		\node [style=Hadamard] (13) at (-1, 0.2) {};
		\node [style=Hadamard] (14) at (-0.5, -0.25) {};
		\node [style=Hadamard] (15) at (2, -0.25) {};
		\node [style=Hadamard] (16) at (2.5, 0.2) {};
	\end{pgfonlayer}
	\begin{pgfonlayer}{edgelayer}
		\draw [bend right=45] (0.center) to (1.center);
		\draw [bend right=45] (1.center) to (2.center);
		\draw (4.center) to (9);
		\draw (5.center) to (10);
		\draw (7.center) to (11);
		\draw (8.center) to (12);
		\draw (9) to (10);
		\draw (11) to (12);
	\end{pgfonlayer}
 \end{tikzpicture}
\end{center}
and to convert the elements of the diagrams into those given in lemma \ref{lem:basic_elements} before transforming the diagram to a state. Thus, the first diagram becomes
\begin{center}
 \begin{tikzpicture}
	\begin{pgfonlayer}{nodelayer}
		\node [style=gn] (0) at (-6, -0.75) {};
		\node [style=gn] (1) at (-6, 0.25) {};
		\node [style=gn,label={[gphase]left:$\pi/2$}] (2) at (-6, 0.75) {};
		\node [style=gn,label={[gphase]right:$-\pi/2$}] (3) at (-5, -0.25) {};
		\node [style=gn,label={[gphase]right:$\pi/2$}] (4) at (-5, 0.75) {};
		\node [style=rn] (5) at (-5, 0.25) {};
		\node [style=rn] (6) at (-5, -0.75) {};
		\node [style=none] (7) at (-6, 1) {};
		\node [style=none] (8) at (-5, 1) {};
		\node [style=none] (9) at (-6, -1) {};
		\node [style=none] (10) at (-5, -1) {};
		\node [style=none] (11) at (-3.25, -0) {$=$};
		\node [style=gn,label={[gphase]left:$\pi/2$}] (12) at (-2.5, 1.75) {};
		\node [style=gn] (13) at (-2.5, 1.25) {};
		\node [style=gn] (14) at (-2.5, -0.75) {};
		\node [style=gn] (15) at (-1.5, -1.25) {};
		\node [style=gn,label={[gphase]right:$-\pi/2$}] (16) at (-1.5, -0.25) {};
		\node [style=gn] (17) at (-1.5, 0.75) {};
		\node [style=gn,label={[gphase]right:$\pi/2$}] (18) at (-1.5, 1.75) {};
		\node [style=Hadamard] (19) at (-2, 1) {};
		\node [style=Hadamard] (20) at (-2, -1) {};
		\node [style=Hadamard] (21) at (-1.5, -0.75) {};
		\node [style=Hadamard] (22) at (-1.5, -1.75) {};
		\node [style=Hadamard] (23) at (-1.5, 0.25) {};
		\node [style=Hadamard] (24) at (-1.5, 1.25) {};
		\node [style=none] (25) at (-2.5, 2) {};
		\node [style=none] (26) at (-1.5, 2) {};
		\node [style=none] (27) at (-2.5, -2) {};
		\node [style=none] (28) at (-1.5, -2) {};
		\node [style=none] (29) at (0.25, 0) {$\mapsto$};
		\node [style=gn] (30) at (1, -1.75) {};
		\node [style=gn] (31) at (4, -1.75) {};
		\node [style=gn] (32) at (2, -1.25) {};
		\node [style=gn] (33) at (3, -1.25) {};
		\node [style=gn] (34) at (4, -0.25) {};
		\node [style=gn] (35) at (3, 0.25) {};
		\node [style=gn] (36) at (4, 0.75) {};
		\node [style=gn] (37) at (3, 1.25) {};
		\node [style=gn,label={[gphase]left:$\pi/2$}] (38) at (3, 1.75) {};
		\node [style=gn,label={[gphase]right:$\pi/2$}] (39) at (4, 1.75) {};
		\node [style=rn,label={[rphase]right:$-\pi/2$}] (40) at (4, 0.25) {};
		\node [style=Hadamard] (41) at (2.5, -1.25) {};
		\node [style=Hadamard] (42) at (2.5, -1.75) {};
		\node [style=Hadamard] (43) at (2, -0.75) {};
		\node [style=Hadamard] (44) at (3.5, -0) {};
		\node [style=Hadamard] (45) at (3.5, 1) {};
		\node [style=none] (46) at (1, 2) {};
		\node [style=none] (47) at (2, 2) {};
		\node [style=none] (48) at (3, 2) {};
		\node [style=none] (49) at (4, 2) {};
		\node [style=none] (50) at (0.75, -0.5) {};
		\node [style=none] (51) at (0.75, -2) {};
		\node [style=none] (52) at (4.25, -2) {};
		\node [style=none] (53) at (4.25, -0.5) {};
		\node [style=Hadamard] (54) at (4, 1.25) {};
	\end{pgfonlayer}
	\begin{pgfonlayer}{edgelayer}
		\draw [color=LightGrey] (50.center) to (51.center);
		\draw [color=LightGrey] (51.center) to (52.center);
		\draw [color=LightGrey] (52.center) to (53.center);
		\draw [color=LightGrey] (50.center) to (53.center);
		\draw (7.center) to (9.center);
		\draw (8.center) to (10.center);
		\draw (0) to (6);
		\draw (1) to (5);
		\draw (25.center) to (27.center);
		\draw (26.center) to (28.center);
		\draw (13) to (17);
		\draw (14) to (15);
		\draw (46.center) to (30);
		\draw (30) to (31);
		\draw (31) to (49.center);
		\draw (47.center) to (32);
		\draw (32) to (33);
		\draw (33) to (48.center);
		\draw (37) to (36);
		\draw (35) to (34);
	\end{pgfonlayer}
 \end{tikzpicture}
\end{center}
where the grey box in the last section of the equality encloses the GS-LC part of the diagram. The operators still outside the grey box are applied to the GS-LC state, one by one, using local complementations on the graph to change the vertex operators as necessary, until the whole diagram is in GS-LC form:
\begin{center}
 \begin{tikzpicture}
	\begin{pgfonlayer}{nodelayer}
		\node [style=none] (0) at (-3, -0) {$=$};
		\node [style=none] (1) at (-2.25, 1.5) {};
		\node [style=none] (2) at (-0.75, 1.5) {};
		\node [style=none] (3) at (0.75, 1.5) {};
		\node [style=gn] (4) at (-2.25, -1.75) {};
		\node [style=gn] (5) at (0.75, -0.75) {};
		\node [style=none] (6) at (2.25, 1.5) {};
		\node [style=Hadamard] (7) at (0, -0.75) {};
		\node [style=Hadamard] (8) at (-0, -1.75) {};
		\node [style=gn,label={[gphase]left:$\pi/2$}] (9) at (2.25, 1.25) {};
		\node [style=gn] (10) at (2.25, -1.75) {};
		\node [style=gn] (11) at (-0.75, -0.75) {};
		\node [style=Hadamard] (12) at (-0.75, -1.25) {};
		\node [style=gn,label={[gphase]right:$-\pi/2$}] (13) at (-2.25, -0.25) {};
		\node [style=Hadamard] (14) at (-0.75, -0.25) {};
		\node [style=Hadamard] (15) at (2.25, 0.75) {};
		\node [style=none] (16) at (2.5, -2) {};
		\node [style=none] (17) at (2.5, 0.1) {};
		\node [style=none] (18) at (-2.5, 0.1) {};
		\node [style=none] (19) at (-2.5, -2) {};
		\node [style=gn] (20) at (2.25, 0.25) {};
		\node [style=Hadamard] (21) at (1.5, 0.5) {};
		\node [style=Hadamard] (22) at (1.5, -1.25) {};
		\node [style=gn] (23) at (0.75, 0.75) {};
		\node [style=gn,label={[gphase]left:$\pi/2$}] (24) at (0.75, 1.25) {};
		\node [style=gn,label={[gphase]left:$-\pi/2$}] (25) at (0.75, -0.25) {};
		\node [style=gn,label={[gphase]left:$\pi/2$}] (26) at (-5.25, 1.25) {};
		\node [style=none] (27) at (-7.25, 1.5) {};
		\node [style=none] (28) at (-6.25, 1.5) {};
		\node [style=none] (29) at (-5.25, 1.5) {};
		\node [style=gn] (30) at (-7.25, -1.75) {};
		\node [style=none] (31) at (-3.5, 0.1) {};
		\node [style=gn] (32) at (-5.25, -0.75) {};
		\node [style=none] (33) at (-3.75, 1.5) {};
		\node [style=rn,label={[rphase]left:$-\pi/2$}] (34) at (-3.75, -0.25) {};
		\node [style=Hadamard] (35) at (-5.75, -0.75) {};
		\node [style=Hadamard] (36) at (-5.5, -1.75) {};
		\node [style=Hadamard] (37) at (-6.25, -0.25) {};
		\node [style=Hadamard] (38) at (-3.75, 0.75) {};
		\node [style=Hadamard] (39) at (-4.5, -1.25) {};
		\node [style=Hadamard] (40) at (-4.5, 0.5) {};
		\node [style=none] (41) at (-7.5, -2) {};
		\node [style=none] (42) at (-3.5, -2) {};
		\node [style=gn,label={[gphase]left:$\pi/2$}] (43) at (-3.75, 1.25) {};
		\node [style=none] (44) at (-7.5, 0.1) {};
		\node [style=gn] (45) at (-3.75, -1.75) {};
		\node [style=gn] (46) at (-6.25, -0.75) {};
		\node [style=gn] (47) at (-3.75, 0.25) {};
		\node [style=none] (48) at (-8, -0) {$=$};
		\node [style=gn] (49) at (-5.25, 0.75) {};
		\node [style=gn,label={[gphase]right:$-\pi/2$}] (50) at (3.75, -0.25) {};
		\node [style=none] (51) at (3.5, -2) {};
		\node [style=gn] (52) at (7.5, -1.75) {};
		\node [style=Hadamard] (53) at (7.5, -0.25) {};
		\node [style=none] (54) at (7.75, 0.6) {};
		\node [style=none] (55) at (7.75, -2) {};
		\node [style=none] (56) at (5.25, 0.75) {};
		\node [style=none] (57) at (3, -0) {$=$};
		\node [style=gn] (58) at (6.25, -0.75) {};
		\node [style=Hadamard] (59) at (5.75, -0.75) {};
		\node [style=none] (60) at (3.5, 0.6) {};
		\node [style=gn,label={[gphase]left:$\pi/2$}] (61) at (7.5, 0.25) {};
		\node [style=none] (62) at (3.75, 0.75) {};
		\node [style=Hadamard] (63) at (5.625, -1.75) {};
		\node [style=none] (64) at (6.25, 0.75) {};
		\node [style=gn] (65) at (3.75, -1.75) {};
		\node [style=Hadamard] (66) at (5.25, -0.25) {};
		\node [style=Hadamard] (67) at (5, -1.25) {};
		\node [style=gn] (68) at (5.25, -0.75) {};
		\node [style=none] (69) at (7.5, 0.75) {};
	\end{pgfonlayer}
	\begin{pgfonlayer}{edgelayer}
		\draw [color=LightGrey] (16.center) to (17.center);
		\draw [color=LightGrey] (19.center) to (16.center);
		\draw [color=LightGrey] (18.center) to (17.center);
		\draw [color=LightGrey] (18.center) to (19.center);
		\draw [color=LightGrey] (42.center) to (31.center);
		\draw [color=LightGrey] (44.center) to (41.center);
		\draw [color=LightGrey] (44.center) to (31.center);
		\draw [color=LightGrey] (41.center) to (42.center);
		\draw [color=LightGrey] (55.center) to (54.center);
		\draw [color=LightGrey] (51.center) to (55.center);
		\draw [color=LightGrey] (60.center) to (54.center);
		\draw [color=LightGrey] (60.center) to (51.center);
		\draw (1.center) to (4);
		\draw (4) to (10);
		\draw (10) to (6.center);
		\draw (2.center) to (11);
		\draw (11) to (5);
		\draw (5) to (3.center);
		\draw (4) to (5);
		\draw (5) to (10);
		\draw (23) to (20);
		\draw (30) to (45);
		\draw (27.center) to (30);
		\draw (28.center) to (46);
		\draw (45) to (33.center);
		\draw (49) to (47);
		\draw (32) to (29.center);
		\draw (46) to (32);
		\draw (32) to (45);
		\draw (62.center) to (65);
		\draw (65) to (52);
		\draw (52) to (69.center);
		\draw (56.center) to (68);
		\draw (68) to (58);
		\draw (58) to (64.center);
		\draw (65) to (58);
	\end{pgfonlayer}
 \end{tikzpicture}
\end{center}
Lastly, the vertex operators are decomposed into red and green phase operators only, so the diagram can be brought into rGS-LC form:
\begin{equation}\label{eq:diagram1}
 \begin{tikzpicture}[baseline=0cm]
	\begin{pgfonlayer}{nodelayer}
		\node [style=none] (0) at (-6.25, -0) {$=$};
		\node [style=none] (1) at (-5.5, 2.05) {};
		\node [style=none] (2) at (-4, 2.05) {};
		\node [style=none] (3) at (-2.5, 2.05) {};
		\node [style=gn] (4) at (-5.5, -1) {};
		\node [style=gn] (5) at (-2.5, -0) {};
		\node [style=none] (6) at (-1.5, 2.05) {};
		\node [style=Hadamard] (7) at (-3.25, -0) {};
		\node [style=Hadamard] (8) at (-3.5, -1) {};
		\node [style=gn,label={[gphase]right:$-\pi/2$}] (9) at (-1.5, 0.6) {};
		\node [style=gn] (10) at (-1.5, -1) {};
		\node [style=gn] (11) at (-4, -0) {};
		\node [style=Hadamard] (12) at (-4, -0.5) {};
		\node [style=gn,label={[gphase]right:$-\pi/2$}] (13) at (-5.5, 0.6) {};
		\node [style=rn,label={[rphase]right:$-\pi/2$}] (14) at (-1.5, 1.2) {};
		\node [style=rn,label={[rphase]right:$\pi/2$}] (15) at (-4, 0.6) {};
		\node [style=rn,label={[rphase]right:$\pi/2$}] (16) at (-4, 1.8) {};
		\node [style=gn,label={[gphase]right:$\pi/2$}] (17) at (-4, 1.2) {};
		\node [style=gn,label={[gphase]right:$\pi/2$}] (18) at (0.75, 0.6) {};
		\node [style=gn] (19) at (5.25, -1) {};
		\node [style=gn,label={[gphase]right:$\pi/2$}] (20) at (2.25, 0.6) {};
		\node [style=rn,label={[rphase]right:$\pi/2$}] (21) at (2.25, 1.2) {};
		\node [style=none] (22) at (2.25, 1.5) {};
		\node [style=none] (23) at (0, -0) {$=$};
		\node [style=gn] (24) at (3.75, -0) {};
		\node [style=Hadamard] (25) at (3, -0) {};
		\node [style=gn,label={[gphase]right:$\pi/2$}] (26) at (5.25, 0.6) {};
		\node [style=none] (27) at (0.75, 1.5) {};
		\node [style=Hadamard] (28) at (2.75, -1) {};
		\node [style=none] (29) at (3.75, 1.5) {};
		\node [style=rn,label={[rphase]right:$\pi/2$}] (30) at (5.25, 1.2) {};
		\node [style=gn] (31) at (0.75, -1) {};
		\node [style=Hadamard] (32) at (2.25, -0.5) {};
		\node [style=none] (33) at (5.25, 1.5) {};
		\node [style=gn] (34) at (2.25, -0) {};
		\node [style=gn,label={[gphase]right:$\pi/2$}] (35) at (3.75, 0.6) {};
	\end{pgfonlayer}
	\begin{pgfonlayer}{edgelayer}
		\draw (1.center) to (4);
		\draw (4) to (10);
		\draw (10) to (6.center);
		\draw (2.center) to (11);
		\draw (11) to (5);
		\draw (5) to (3.center);
		\draw (4) to (5);
		\draw (27.center) to (31);
		\draw (31) to (19);
		\draw (19) to (33.center);
		\draw (22.center) to (34);
		\draw (34) to (24);
		\draw (24) to (29.center);
		\draw (31) to (24);
	\end{pgfonlayer}
 \end{tikzpicture}
\end{equation}

Similarly, the second diagram becomes
\begin{center}
 \begin{tikzpicture}
	\begin{pgfonlayer}{nodelayer}
		\node [style=gn] (0) at (-6, -0) {};
		\node [style=rn] (1) at (-5, -0) {};
		\node [style=Hadamard] (2) at (-5, 0.5) {};
		\node [style=Hadamard] (3) at (-5, -0.5) {};
		\node [style=none] (4) at (-6, 0.75) {};
		\node [style=none] (5) at (-5, 0.75) {};
		\node [style=none] (6) at (-6, -0.75) {};
		\node [style=none] (7) at (-5, -0.75) {};
		\node [style=none] (8) at (-4.25, -0) {$=$};
		\node [style=gn] (9) at (-3.5, 0.25) {};
		\node [style=gn] (10) at (-2.5, -0.25) {};
		\node [style=Hadamard] (11) at (-3, -0) {};
		\node [style=none] (12) at (-3.5, 0.75) {};
		\node [style=none] (13) at (-2.5, 0.75) {};
		\node [style=none] (14) at (-3.5, -0.75) {};
		\node [style=none] (15) at (-2.5, -0.75) {};
		\node [style=none] (16) at (-1.75, -0) {$\mapsto$};
		\node [style=gn] (17) at (-1, -1) {};
		\node [style=gn] (18) at (0, -0.5) {};
		\node [style=gn] (19) at (1, -0.5) {};
		\node [style=gn] (20) at (2, -1) {};
		\node [style=gn] (21) at (2, 0.5) {};
		\node [style=gn] (22) at (1, 1) {};
		\node [style=Hadamard] (23) at (0.5, -0.5) {};
		\node [style=Hadamard] (24) at (0.5, -1) {};
		\node [style=Hadamard] (25) at (0, -0) {};
		\node [style=Hadamard] (26) at (-1, -0) {};
		\node [style=Hadamard] (27) at (1.5, 0.75) {};
		\node [style=none] (28) at (-1, 1.25) {};
		\node [style=none] (29) at (0, 1.25) {};
		\node [style=none] (30) at (1, 1.25) {};
		\node [style=none] (31) at (2, 1.25) {};
		\node [style=none] (32) at (-1.25, 0.25) {};
		\node [style=none] (33) at (-1.25, -1.25) {};
		\node [style=none] (34) at (2.25, -1.25) {};
		\node [style=none] (35) at (2.25, 0.25) {};
		\node [style=none] (36) at (3.25, 0.25) {};
		\node [style=none] (37) at (2.75, -0) {$=$};
		\node [style=none] (38) at (5.5, 0.5) {};
		\node [style=none] (39) at (4.5, 0.5) {};
		\node [style=none] (40) at (6.75, -1.25) {};
		\node [style=Hadamard] (41) at (6, -0.75) {};
		\node [style=Hadamard] (42) at (5, -0.5) {};
		\node [style=gn] (43) at (3.5, -1) {};
		\node [style=Hadamard] (44) at (3.5, -0) {};
		\node [style=gn] (45) at (5.5, -0.5) {};
		\node [style=gn] (46) at (6.5, -1) {};
		\node [style=none] (47) at (6.75, 0.25) {};
		\node [style=none] (48) at (6.5, 0.5) {};
		\node [style=gn] (49) at (4.5, -0.5) {};
		\node [style=none] (50) at (3.25, -1.25) {};
		\node [style=none] (51) at (3.5, 0.5) {};
		\node [style=Hadamard] (52) at (4.5, -0) {};
		\node [style=Hadamard] (53) at (5, -1) {};
	\end{pgfonlayer}
	\begin{pgfonlayer}{edgelayer}
		\draw [color=LightGrey] (33.center) to (34.center);
		\draw [color=LightGrey] (34.center) to (35.center);
		\draw [color=LightGrey] (35.center) to (32.center);
		\draw [color=LightGrey] (32.center) to (33.center);
		\draw [color=LightGrey] (50.center) to (40.center);
		\draw [color=LightGrey] (40.center) to (47.center);
		\draw [color=LightGrey] (47.center) to (36.center);
		\draw [color=LightGrey] (36.center) to (50.center);
		\draw (4.center) to (6.center);
		\draw (5.center) to (7.center);
		\draw (0) to (1);
		\draw (12.center) to (14.center);
		\draw (13.center) to (15.center);
		\draw (9) to (10);
		\draw (29.center) to (18);
		\draw (30.center) to (19);
		\draw (18) to (19);
		\draw (28.center) to (17);
		\draw (17) to (20);
		\draw (31.center) to (20);
		\draw (22) to (21);
		\draw (39.center) to (49);
		\draw (38.center) to (45);
		\draw (49) to (45);
		\draw (51.center) to (43);
		\draw (43) to (46);
		\draw (48.center) to (46);
		\draw (45) to (46);
	\end{pgfonlayer}
 \end{tikzpicture}
\end{center}
which turns into
\begin{equation}\label{eq:diagram2}
 \begin{tikzpicture}[baseline=0cm]
	\begin{pgfonlayer}{nodelayer}
		\node [style=none] (0) at (-5.25, -0) {$=$};
		\node [style=none] (1) at (-2, 1.55) {};
		\node [style=none] (2) at (-3.25, 1.55) {};
		\node [style=Hadamard] (3) at (-1.375, -0.75) {};
		\node [style=Hadamard] (4) at (-2.625, -0.5) {};
		\node [style=gn] (5) at (-4.5, -1) {};
		\node [style=gn] (6) at (-2, -0.5) {};
		\node [style=gn] (7) at (-0.75, -1) {};
		\node [style=none] (8) at (-0.75, 1.55) {};
		\node [style=gn] (9) at (-3.25, -0.5) {};
		\node [style=none] (10) at (-4.5, 1.55) {};
		\node [style=Hadamard] (11) at (-2.625, -1) {};
		\node [style=rn,label={[rphase]right:$\pi/2$}] (12) at (-4.5, 0.1) {};
		\node [style=rn,label={[rphase]right:$\pi/2$}] (13) at (-4.5, 1.3) {};
		\node [style=rn,label={[rphase]right:$\pi/2$}] (14) at (-3.25, 1.3) {};
		\node [style=rn,label={[rphase]right:$\pi/2$}] (15) at (-3.25, 0.1) {};
		\node [style=gn,label={[gphase]right:$\pi/2$}] (16) at (-4.5, 0.7) {};
		\node [style=gn,label={[gphase]right:$\pi/2$}] (17) at (-3.25, 0.7) {};
		\node [style=none] (18) at (2, 0.95) {};
		\node [style=none] (19) at (0.75, 0.95) {};
		\node [style=none] (20) at (0, -0) {$=$};
		\node [style=gn] (21) at (4.5, -1) {};
		\node [style=gn] (22) at (0.75, -1) {};
		\node [style=none] (23) at (3.25, 0.95) {};
		\node [style=gn,label={[gphase]right:$\pi/2$}] (24) at (0.75, 0.1) {};
		\node [style=rn,label={[rphase]right:$\pi/2$}] (25) at (0.75, 0.7) {};
		\node [style=Hadamard] (26) at (2.625, -1) {};
		\node [style=Hadamard] (27) at (2.625, -0.5) {};
		\node [style=gn] (28) at (2, -0.5) {};
		\node [style=gn,label={[gphase]right:$\pi/2$}] (29) at (2, 0.1) {};
		\node [style=none] (30) at (4.5, 0.95) {};
		\node [style=rn,label={[rphase]right:$\pi/2$}] (31) at (2, 0.7) {};
		\node [style=Hadamard] (32) at (3.875, -0.75) {};
		\node [style=gn] (33) at (3.25, -0.5) {};
		\node [style=gn,label={[gphase]right:$\pi/2$}] (34) at (3.25, 0.1) {};
		\node [style=gn,label={[gphase]right:$\pi/2$}] (35) at (4.5, 0.1) {};
	\end{pgfonlayer}
	\begin{pgfonlayer}{edgelayer}
		\draw (2.center) to (9);
		\draw (1.center) to (6);
		\draw (9) to (6);
		\draw (10.center) to (5);
		\draw (5) to (7);
		\draw (8.center) to (7);
		\draw (6) to (7);
		\draw (18.center) to (28);
		\draw (23.center) to (33);
		\draw (28) to (33);
		\draw (19.center) to (22);
		\draw (22) to (21);
		\draw (30.center) to (21);
		\draw (33) to (21);
	\end{pgfonlayer}
 \end{tikzpicture}
\end{equation}

The last parts of \eqref{eq:diagram1} and \eqref{eq:diagram2} form a pair of rGS-LC diagrams, which we will now simplify. Numbering the qubits from left to right, we find that both diagrams have red nodes in the vertex operator of qubit 2, and that there are further red nodes in the vertex operator of qubit 4 in the first diagram and qubit 1 in the second diagram. Qubits 1 and 4 are connected in the first diagram, so we can apply the rGS-LC transformation given in proposition \ref{prop:rGS-LC_transformation2} to transfer the red node from one to the other. First, apply a local complementation about the edge $\{1,4\}$:
\begin{center}
 \begin{tikzpicture}
	\begin{pgfonlayer}{nodelayer}
		\node [style=gn,label={[gphase]right:$\pi/2$}] (0) at (-6, 0.6) {};
		\node [style=gn] (1) at (-1.5, -1) {};
		\node [style=gn,label={[gphase]right:$\pi/2$}] (2) at (-4.5, 0.6) {};
		\node [style=rn,label={[rphase]right:$\pi/2$}] (3) at (-4.5, 1.2) {};
		\node [style=none] (4) at (-4.5, 1.45) {};
		\node [style=gn] (5) at (-3, -0) {};
		\node [style=Hadamard] (6) at (-3.75, -0) {};
		\node [style=gn,label={[gphase]right:$\pi/2$}] (7) at (-1.5, 0.6) {};
		\node [style=none] (8) at (-6, 1.45) {};
		\node [style=Hadamard] (9) at (-4, -1) {};
		\node [style=none] (10) at (-3, 1.45) {};
		\node [style=rn,label={[rphase]right:$\pi/2$}] (11) at (-1.5, 1.2) {};
		\node [style=gn] (12) at (-6, -1) {};
		\node [style=Hadamard] (13) at (-4.5, -0.5) {};
		\node [style=none] (14) at (-1.5, 1.45) {};
		\node [style=gn] (15) at (-4.5, -0) {};
		\node [style=gn,label={[gphase]right:$\pi/2$}] (16) at (-3, 0.6) {};
		\node [style=none] (17) at (0, -0) {$=$};
		\node [style=gn,label={[gphase]right:$\pi$}] (18) at (5.25, 1.8) {};
		\node [style=none] (19) at (3.75, 2.65) {};
		\node [style=gn] (20) at (3.75, -0) {};
		\node [style=gn] (21) at (0.75, -1) {};
		\node [style=rn,label={[rphase]right:$\pi/2$}] (22) at (2.25, 1.2) {};
		\node [style=Hadamard] (23) at (2.75, -1) {};
		\node [style=gn] (24) at (5.25, -1) {};
		\node [style=Hadamard] (25) at (3, -0) {};
		\node [style=gn,label={[gphase]right:$\pi$}] (26) at (0.75, 1.8) {};
		\node [style=none] (27) at (5.25, 2.65) {};
		\node [style=gn,label={[gphase]right:$-\pi/2$}] (28) at (3.75, 0.6) {};
		\node [style=none] (29) at (0.75, 2.65) {};
		\node [style=rn,label={[rphase]right:$\pi/2$}] (30) at (5.25, 2.4) {};
		\node [style=none] (31) at (2.25, 2.65) {};
		\node [style=Hadamard] (32) at (4.5, -0.5) {};
		\node [style=gn,label={[gphase]right:$\pi/2$}] (33) at (2.25, 0.6) {};
		\node [style=gn] (34) at (2.25, -0) {};
		\node [style=rn,label={[rphase]right:$-\pi/2$}] (35) at (0.75, 1.2) {};
		\node [style=rn,label={[rphase]right:$-\pi/2$}] (36) at (5.25, 1.2) {};
		\node [style=gn,label={[gphase]right:$\pi/2$}] (37) at (0.75, 0.6) {};
		\node [style=gn,label={[gphase]right:$\pi/2$}] (38) at (5.25, 0.6) {};
	\end{pgfonlayer}
	\begin{pgfonlayer}{edgelayer}
		\draw (8.center) to (12);
		\draw (12) to (1);
		\draw (1) to (14.center);
		\draw (4.center) to (15);
		\draw (15) to (5);
		\draw (5) to (10.center);
		\draw (12) to (5);
		\draw (29.center) to (21);
		\draw (21) to (24);
		\draw (24) to (27.center);
		\draw (31.center) to (34);
		\draw (34) to (20);
		\draw (20) to (19.center);
		\draw (20) to (24);
	\end{pgfonlayer}
 \end{tikzpicture}
\end{center}
Then rewrite the vertex operators into standard form and apply a fixpoint operation about qubit 4, to get a diagram that is once again in rGS-LC form:
\begin{equation}\label{eq:diagram1_simplified}
 \begin{tikzpicture}[baseline=0cm]
	\begin{pgfonlayer}{nodelayer}
		\node [style=gn,label={[gphase]right:$\pi/2$}] (0) at (-1.25, 1.2) {};
		\node [style=none] (1) at (-2.75, 1.45) {};
		\node [style=gn] (2) at (-2.75, -0) {};
		\node [style=gn] (3) at (-5.75, -1) {};
		\node [style=rn,label={[rphase]right:$\pi/2$}] (4) at (-4.25, 1.2) {};
		\node [style=Hadamard] (5) at (-3.75, -1) {};
		\node [style=gn] (6) at (-1.25, -1) {};
		\node [style=Hadamard] (7) at (-3.5, -0) {};
		\node [style=none] (8) at (-1.25, 1.45) {};
		\node [style=gn,label={[gphase]right:$-\pi/2$}] (9) at (-2.75, 0.6) {};
		\node [style=none] (10) at (-5.75, 1.45) {};
		\node [style=none] (11) at (-4.25, 1.45) {};
		\node [style=Hadamard] (12) at (-2, -0.5) {};
		\node [style=gn,label={[gphase]right:$\pi/2$}] (13) at (-4.25, 0.6) {};
		\node [style=gn] (14) at (-4.25, -0) {};
		\node [style=rn,label={[rphase]right:$\pi/2$}] (15) at (-5.75, 1.2) {};
		\node [style=rn,label={[rphase]right:$\pi$}] (16) at (-1.25, 0.6) {};
		\node [style=gn,label={[gphase]right:$-\pi/2$}] (17) at (-5.75, 0.6) {};
		\node [style=none] (18) at (0, -0) {$=$};
		\node [style=none] (19) at (-6.5, -0) {$=$};
		\node [style=none] (20) at (2.25, 1.45) {};
		\node [style=gn,label={[gphase]right:$\pi/2$}] (21) at (3.75, 0.6) {};
		\node [style=Hadamard] (22) at (3, -0) {};
		\node [style=Hadamard] (23) at (2.75, -1) {};
		\node [style=none] (24) at (5.25, 1.45) {};
		\node [style=gn] (25) at (5.25, -1) {};
		\node [style=gn,label={[gphase]right:$\pi/2$}] (26) at (5.25, 0.6) {};
		\node [style=rn,label={[rphase]right:$\pi/2$}] (27) at (0.75, 1.2) {};
		\node [style=gn] (28) at (2.25, -0) {};
		\node [style=gn,label={[gphase]right:$\pi/2$}] (29) at (2.25, 0.6) {};
		\node [style=none] (30) at (0.75, 1.45) {};
		\node [style=gn] (31) at (3.75, -0) {};
		\node [style=gn,label={[gphase]right:$\pi/2$}] (32) at (0.75, 0.6) {};
		\node [style=none] (33) at (3.75, 1.45) {};
		\node [style=Hadamard] (34) at (4.5, -0.5) {};
		\node [style=rn,label={[rphase]right:$\pi/2$}] (35) at (2.25, 1.2) {};
		\node [style=gn] (36) at (0.75, -1) {};
	\end{pgfonlayer}
	\begin{pgfonlayer}{edgelayer}
		\draw (10.center) to (3);
		\draw (3) to (6);
		\draw (6) to (8.center);
		\draw (11.center) to (14);
		\draw (14) to (2);
		\draw (2) to (1.center);
		\draw (2) to (6);
		\draw (30.center) to (36);
		\draw (36) to (25);
		\draw (25) to (24.center);
		\draw (20.center) to (28);
		\draw (28) to (31);
		\draw (31) to (33.center);
		\draw (31) to (25);
	\end{pgfonlayer}
 \end{tikzpicture}
\end{equation}
The pair of diagrams given by the last parts of \eqref{eq:diagram2} and \eqref{eq:diagram1_simplified} is now simplified. In fact, these two diagrams are identical --- as expected, considering we started with two different circuit representations of the same quantum-mechanical operator.

By taking the sequence of diagrams derived here and bending outputs in those diagrams back into inputs, we could now derive a sequence of rewrites which show directly that the two diagrams given in \eqref{eq:example_diagrams} are equal.

\section{Conclusions and further work}
\label{s:conclusions}

We show that the \ZX-calculus is complete for stabilizer quantum mechanics by transforming any stabilizer diagram into a normal form consisting of a graph state and single qubit Clifford operators. A natural next question is whether this result can be extended to a larger fragment of the \ZX-calculus, e.g. by allowing nodes whose phases are multiples of $\pi/4$. It is not immediately obvious how to do this, since the completeness proof relies on several results specific to Clifford operators. It would also be interesting to investigate which rules need to be added to make the \ZX-calculus complete for general pure state qubit quantum mechanics.

The software system \texttt{Quantomatic} \cite{quantomatic} enables semi-automated manipulation of graphical calculus diagrams. It should be possible to implement the equality testing algorithm from section \ref{s:equality_testing} in this system, allowing automated comparison of diagrams. This immediately offers a new question, namely that of the computational complexity of the equality decision problem.

In \cite{coecke_phase_2011}, categorical quantum mechanics is used to analyse the origin of non-locality in stabilizer quantum mechanics as compared to Spekkens' toy theory \cite{spekkens_evidence_2007}. This, together with Pusey's work on a stabilizer formalism for the toy theory \cite{pusey_stabilizer_2012}, suggests a ``\ZX-calculus'' for that theory. It would be interesting to see whether the completeness result for the \ZX-calculus for stabilizer quantum mechanics can be reproduced in the graphical calculus for Spekkens' toy theory.

\section*{Acknowledgements}

Thanks to Ross Duncan for suggesting the problem and to Bob Coecke for his advice in preparing this paper. I acknowledge financial support from the EPSRC.

\bibliographystyle{eptcs}
\bibliography{stabilizer_completeness}

\end{document}